\setlist[description]{leftmargin=1cm,labelindent=0.5cm}
\newtheorem{problem}{Problem}
\newtheorem{assumption}{Assumption}
\newcommand{\stam}[1]{}
\newcommand{\AlgApprox}{\textsf{AlgApprox}\xspace}
\newcommand{\AlgExact}{\textsf{AlgExact}\xspace}
\newcommand{\tup}[1]{\ensuremath{\left\langle #1 \right\rangle}}
\newcommand{\set}[1]{\ensuremath{\left\lbrace #1 \right\rbrace}}
\newcommand{\pr}[2]{\ensuremath{\mathbb{P}_{#1}^{#2}}}
\newcommand{\distr}{\ensuremath{\Delta}}
\newcommand{\closure}[1]{\ensuremath{\mathsf{cl}\left[{#1}\right]}}
\newcommand{\dH}{d_h}
\newcommand{\interior}[1]{\ensuremath{\left\langle{#1}\right\rangle}}
\newcommand{\N}{\mathbb{N}}
\newcommand{\zug}[1]{\langle #1 \rangle}
\mathchardef\mhyphen="2D
\newcommand{\V}{\ensuremath{V}}
\newcommand{\VC}{\ensuremath{{\V_\mathsf{c}}}}
\newcommand{\VR}{\ensuremath{{\V_\mathsf{r}}}}
\newcommand{\VZ}{\ensuremath{V_0}}
\newcommand{\VO}{\ensuremath{V_1}}
\newcommand{\E}{\ensuremath{E}}
\newcommand{\EC}{\ensuremath{\E}}
\newcommand{\ER}{\ensuremath{\delta}}
\newcommand{\pathsfin}{\ensuremath{\mathit{Paths_{\mathsf{fin}}}}}
\newcommand{\pathsfinc}{\ensuremath{\mathit{Paths}_{\mathsf{fin}}^{\mathsf{c}}}}
\newcommand{\pathsinf}{\ensuremath{\mathit{Paths_{\mathsf{inf}}}}}
\newcommand{\sched}{\ensuremath{\theta}}
\renewcommand{\succ}{\mathit{Succ}}
\NewDocumentCommand{\dist}{g}{\textsf{dist}\IfNoValueTF{#1}{}{(#1)}}
\newcommand{\PZ}{Player~$0$\xspace}
\newcommand{\PO}{Player~$1$\xspace}
\newcommand{\PI}{Player~$j$\xspace}
\newcommand{\pol}{\ensuremath{\pi}}
\newcommand{\polZ}{\ensuremath{\sigma}}
\newcommand{\polO}{\ensuremath{\tau}}
\newcommand{\winpolZ}{\ensuremath{\Pi_\mathsf{R}}}
\newcommand{\winpolO}{\ensuremath{\Pi_\mathsf{S}}}
\newcommand{\thres}{\ensuremath{\mathit{Th}}}
\newcommand{\RT}{\ensuremath{\mathit{RT}}}
\newcommand{\val}{\ensuremath{\mathit{val}}}
\newcommand{\coval}{\sval}
\newcommand{\rval}{\ensuremath{\mathsf{r}\mhyphen\val}}
\newcommand{\sval}{\ensuremath{\mathsf{s}\mhyphen\val}}
\newcommand{\winpol}[1]{\ensuremath{\Pi_{#1}}}
\newcommand{\grid}{\Delta_\alpha}
\newcommand{\abs}[1]{\llbracket{#1}\rrbracket}
\newcommand{\G}{{\mathcal G}}
\newcommand{\spec}{\varphi}
\NewDocumentCommand{\reach}{g}{\ensuremath{\mathit{Reach}\IfNoValueTF{#1}{}{(#1)}}}
\NewDocumentCommand{\avoid}{g}{\ensuremath{\mathit{Avoid}\IfNoValueTF{#1}{}{(#1)}}}
\newcommand{\safe}{\ensuremath{\mathit{Safe}}}
\newcommand{\optcrv}{\rval^*}
\newcommand{\coptcrv}{\sval^*}
\newcommand{\ord}{\prec}
\newcommand{\setdist}{d_h}
\newcommand{\bellman}{\mathcal{T}}
\newcommand{\last}{\operatorname{last}}
\tikzstyle{rond}=[draw,circle,minimum size=5mm, inner sep = 1pt]
\tikzstyle{rect} = [draw, rectangle, minimum size = 5mm, inner sep = 1pt]
\tikzstyle{rectw} = [draw, fill = green!90!black, rectangle, minimum size = 5mm, inner sep = 1pt]
\tikzstyle{rectl} = [draw, fill = red!95!black, rectangle, minimum size = 5mm, inner sep = 1pt]
\tikzstyle{diam} = [draw, diamond, minimum size = 5mm, inner sep = 1pt]
\title{Bidding Games on Markov Decision Processes\\ with Quantitative Reachability Objectives}
\author{Guy Avni}
\affiliation{
  \institution{University of Haifa}
  \city{Haifa}
  \country{Israel}}
\email{gavni@cs.haifa.ac.il}
\author{Martin Kure\v{c}ka}
\affiliation{
  \institution{Masaryk University}
  \city{Brno}
  \country{Czechia}}
\email{kurecka.m@mail.muni.cz}
\author{Kaushik Mallik}
\affiliation{
  \institution{IMDEA Software Institute}
  \city{Madrid}
  \country{Spain}}
\email{kaushik.mallik@imdea.org}
\author{Petr Novotn\'{y}}
\affiliation{
  \institution{Masaryk University}
  \city{Brno}
  \country{Czechia}}
\email{petr.novotny@fi.muni.cz}
\author{Suman Sadhukhan}
\affiliation{
  \institution{University of Haifa}
  \city{Haifa}
  \country{Israel}}
\email{ssadhukh@campus.haifa.ac.il}
\begin{abstract}
{\em Graph games} are fundamental in strategic reasoning of multi-agent systems and their environments.
We study a new family of graph games which combine stochastic environmental uncertainties and auction-based interactions among the agents, formalized as bidding games on (finite) \emph{Markov decision processes} (MDP).
Normally, on MDPs, a single decision-maker chooses a sequence of actions, producing a probability distribution over infinite paths.
In bidding games on MDPs, two players---called the \emph{reachability} and \emph{safety} players---bid for the privilege of choosing the next action at each step.
The reachability player's goal is to maximize the probability of reaching a target vertex, whereas the safety player's goal is to minimize it.
These games generalize traditional bidding games on \emph{graphs}, and the existing analysis techniques do not extend.
For instance, the central property of traditional bidding games is the existence of a \emph{threshold} budget, which is a necessary and sufficient budget to guarantee winning for the reachability player.
For MDPs, the threshold becomes a \emph{relation} between the budgets and probabilities of reaching the target.
We devise value-iteration algorithms that approximate thresholds and optimal policies for general MDPs, and compute the exact solutions for acyclic MDPs, 
and show that finding thresholds is at least as hard as solving {\em simple-stochastic games}.
\end{abstract}
\keywords{Graph Games, Bidding Games, Markov decision processes}
\newcommand{\BibTeX}{\rm B\kern-.05em{\sc i\kern-.025em b}\kern-.08em\TeX}
\begin{document}




\maketitle 


\section{Introduction}
{\em Graph games} are fundamental for reasoning about strategic interactions between agents in multi-agent systems, with~\cite{FKL10,KPV16,WG+16} or without~\cite{AHK02} external environments.
Environments, when present, are commonly modeled using stochastic processes, like \emph{Markov decision processes} (MDP)~\cite{SB98} in reinforcement learning (single-agent), and \emph{stochastic games} in the multi-agent setting~\cite{Con92,FV97,CH12}.

We study games which combine stochastic environments with auction-based interactions among players, formalized as \emph{bidding games on MDP arenas}.
An MDP is a graph whose vertices are partitioned into {\em control} vertices and {\em random} vertices, and the game involves the players moving a token along the edges of the graph.
The rules of the game are as follows.
The two players are allocated initial budgets, normalized in a way that their sum is $1$.
When the token reaches a control vertex, an auction is held to determine who chooses where the token goes next.
In these auctions, the players simultaneously submit bids from their available budgets, the higher bidder moves the token and pays his bid amount to the lower bidder.
When the token reaches a random vertex, it automatically moves to one of the successors according to the transition probabilities of the MDP (without affecting the budgets of the players).

We consider the \emph{quantitative reachability} objectives, where the goal of the first player, called the \emph{reachability player}, is to maximize the probability that a given target vertex is reached, and the goal of the second player, called the \emph{safety player}, is to minimize it.

%

\begin{SCfigure}
	\caption{A bidding game on an MDP. 
		The control and random vertices are denoted as circular and diamond-shaped, respectively.
		The probabilistic transitions (marked with arcs) use the uniform distribution.
		The target for the reachability player is $t$.
		The dashed paths are the viable reachability policies for the setting of Ex.~\ref{ex:intro}.
		}
		\label{fig:DAG}
	\begin{tikzpicture}[node distance=0.5cm, every edge quotes/.style = {sloped, font=\small, below}]
		\node[state, diamond, initial above] (a) at (0, 0) {\(a\)};
		\node[state, diamond] (b) [right= of a] {\(b\)};
		\node[state] (c) [below= of a] {\(c\)}; 
		\node[state] (d) [right= of c] {\(d\)}; 
		\node[state, diamond] (e) [right= of d] {\(e\)};
		\node[state] (f) [below= of d] {\(f\)}; 
%
		\node[state] (y) [right= of b] {\(l_2\)}; 
		\node[state, accepting] (w) [below= of e] {\(t\)};
		\node[state] (z) [left = of f] {\(l_1\)};

		\path [->]
		(a) edge (b)
		edge (c)
		(b) edge (d)
		edge (y)
		(c) edge (z)
		edge (d)
		(d) edge (e)
		edge (f)
		(e) edge  (y)
		edge  (w)
		(f) edge (w)
		edge (z)
		(w) edge [loop below] ()
		(y) edge [loop above] ()
		(z) edge [loop below] ();
		\draw[-]		($(a) + (0,-0.6)$) 	arc		(-90:0:.6);
		\draw[-]		($(b) + (0,-0.6)$) 	arc		(-90:0:.6);
		\draw[-]		($(e) + (0,-0.6)$) 	arc		(-90:90:.6);
		
		\path[draw,->,red,thick,densely dashed,opacity=0.6]	($(a)+(0,0.2)$)	--	($(b)+(0.2,0.2)$)	--	($(f)+(0.2,0.2)$)	--	($(w)+(-0.1,0.2)$);
		\path[draw,->,green!50!black,thick,densely dashed,opacity=0.6] ($(a)+(-0.2,-0.1)$)	--	($(c)+(-0.2,-0.2)$)	--	($(e)+(0.2,-0.2)$)	--	($(w)+(0.2,0.1)$);
	\end{tikzpicture}

\end{SCfigure}
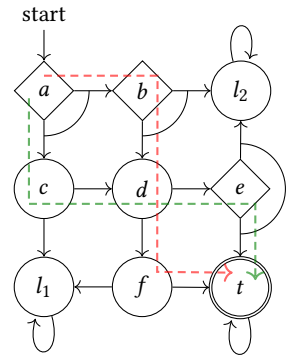
\begin{example}\label{ex:intro}
Consider the game in Fig.~\ref{fig:DAG} and the initial budget allocation $\zug{0.75+\epsilon, 0.25-\epsilon}$ for the two players. We show how the reachability player can reach the target $t$ with probability at least $0.5$. Initially, the token moves randomly from $a$. If it reaches $c$, the reachability player avoids $l_1$ by winning the auction with the bid $0.25$ (which exceeds the opponent's budget) and proceeding to $d$ with new budgets $\zug{0.5+\epsilon, 0.5-\epsilon}$. At $d$, he bids $0.5$, to force the game to $e$, from which $t$ is reached with probability $0.5$. 
If, on the other hand, the token moves to $b$ from $a$, with probability $0.5$ it moves to $l_2$ and the reachability player loses.
If the token reaches $d$, since no biddings were made, the budgets remain $\zug{0.75+\epsilon, 0.25-\epsilon}$.
At $d$, the reachability player bids $0.25$, proceeds to $f$ with budgets $\zug{0.5+\epsilon, 0.5-\epsilon}$, and then bids $0.5$ to force the game to $t$. Thus, each path $b, d, f, t$ and $c, d, e, t$, after $a$, occur with probability $0.5$, and the total probability to reach $t$ from $a$ is $0.5$.
\end{example}

Bidding games on MDPs generalize traditional bidding games on \emph{graphs}, i.e., on MDPs without any random vertices.
Bidding games on graphs have a rich pedigree, going back to the seminal work of Lazarus et al.~\cite{LLPU96,LLPSU99}, followed by a series of extensions to various payment schemes~\cite{AHC19,AHI18,AHZ19,AJZ21}, non-zero-sum games~\cite{MKT18}, discrete bidding aimed for practical applications~\cite{DP10,AAH19,AS22,AM+23}, partial-information games~\cite{AJZ23}, and bidding games with charging~\cite{AGHM24}.

We point out a distinction between traditional and our setups. In the traditional setup, moving the token is memoryless, i.e., there is a winning policy that chooses the same successor from each vertex upon winning the bidding. As seen in the MDP in Ex.~\ref{ex:intro}, the reachability player's choice at $d$ is not memoryless.

\paragraph{Applications}
Our results have an immediate application in {\em auction-based scheduling}~\cite{AMS24}, which is a {\em decentralized} multi-objective decision-making framework. 
In this framework, we are given an arena, modeling the environment, a pair of specifications $\spec_0, \spec_1$, and we want to compute a pair of policies $\sigma_0,\sigma_1$ that will be composed at runtime with the policies bidding against each other at each step for choosing the next action.
The goal is to synthesize $\sigma_0,\sigma_1$ such that their runtime composition fulfills $\spec_0\land\spec_1$.
The advantage is modularity, where the policies can be independently designed, and if one specification changes, only the relevant policy needs to be updated. 
The synthesis algorithms in auction-based scheduling solve two \emph{independent} zero-sum bidding games for $\sigma_0$ and $\sigma_1$ (on the same arena).
As zero-sum bidding games have been studied only on graphs, auction-based scheduling is restricted to graph arenas until now.
Our work will lead to a decentralized solution of multi-objective reachability on MDPs using auction-based scheduling.

\emph{Fair resource allocation} studies how to allocate a collection of items to a set of agents in a {\em fair} manner, where various definitions of fairness exist~\cite{ABFV22,ALMW22}. Bidding games naturally create a fair allocation mechanism~\cite{MKT18,BEF21}, namely allocate an initial budget to each agent, fix an ordering of the items, and hold a bidding for each item one by one. 
Bidding games on MDPs offer \emph{resource-allocation under stochastic uncertainties}~\cite{BEF22,AF+16}: now the agents bid, but the outcome is uncertain (e.g., in online advertisements, the higher bidder gets the ad-slot, but the number of viewers remains stochastic).

\paragraph{Bidding games on graphs}
We briefly survey results on traditional bidding games on graphs~\cite{LLPU96,LLPSU99}. A central quantity in these games are the {\em threshold budgets}: every vertex $v$ is associated with a value $\thres(v) \in [0,1]$ such the reachability player wins from $v$ if his initial budget is strictly larger than $\thres(v)$, and loses (i.e., the safety player wins) if it is strictly smaller than $\thres(v)$.
Furthermore, {\em pure} policies suffice for both players. 
Interestingly, bidding games are equivalent to a class of {\em stochastic games}~\cite{Con92} called {\em random-turn games}~\cite{PSSW07}. In particular, for every bidding game $\G$, the random-turn game $\RT(\G)$ can be obtained where who moves the token at each turn is determined uniformly at random. It is known that the optimal probability with which the reachability player wins from a vertex $v$ equals $1-\thres(v)$. This reduction from bidding games to random turn games implies that computing threshold budgets for bidding games is in NP $\cap$ co-NP. An opposite reduction is still unknown.

\paragraph{Our results: bidding games on MDPs}
We prove thresholds exist for bidding games on MDPs, though their shapes become significantly more complex and reasoning becomes more advanced than bidding games on graphs. This is because thresholds are now \emph{binary relations} between budgets and probabilities, where $(B,p)\in \thres(v)$ if the reachability player can reach the target with every probability $p' < p$ and every budget $B' > B$, and the safety player can avoid the target with every probability $(1-p')>(1-p)$ when the reachability player's budget is $B'<B$. We develop a value-iteration algorithm to find thresholds as described in the next example.

\begin{example}
Fig.~\ref{fig:value iteration illustration} illustrates our value-iteration algorithm. Intuitively, the shaded area in the plot at time $i \in \N$ depicts all the necessary budgets of the reachability players and the respective probabilities of reaching the target $c$ in at most $i$ steps.
For example, for $i=4$ at $a$, if the reachability player has a budget in $(0.5,0.75]$, he can win only one bidding (in $a$) and reach $c$ with probability up to $0.5$ in $2$ steps ($abc$), and if he has a budget in $(0.75,1]$, he can win two biddings (in $a$) and reach $c$ with probability up to $0.75$ in $4$ steps ($ababc$).
Every other path to $c$ is longer than $4$ steps. In the limit, the ``plots'' tend to thresholds (Thm.~\ref{thm:limiting values}), which for $a$ is $\{(B,p)\mid \exists n\geq 1\;.\; B \in (1-2^{-n}, 1-2^{-(n+1)}] \land p\leq 1-2^{-n}\}$.
\end{example}


\begin{figure}
	\begin{tikzpicture}[node distance=0.5cm]
		\node[state,initial]	(a)		at	(0,0)	{$a$};
		\node[state,diamond]	(b)		[below=of a]		{$b$};
		\node[state,accepting]	(c)		[below=of b]		{$c$};
		\node[state]	(d)		[left=of b]		{$d$};
		
		\path[->]
			(a)			edge[bend right]		(b)
						edge[bend right]		(d)
			(b)			edge[bend right]		(a)
						edge					(c)
			(c)			edge[out=0,in=-40,looseness=4]		(c)
			(d)			edge[out=100,in=140,looseness=4]		(d);
		\draw[-]		($(b)+(0,-0.5)$)	arc		(-90:60:0.5);
			
			\begin{scope}[shift={($(c.south)+(-0.5,-0.8)$)},scale=0.7,local bounding box=A]
				\fill[blue!40!white]	(0,0)	rectangle	(1,1);
				\draw[->]	(0,0)	--	(0,1.2);
				\draw[->]	(0,0)	--	(1.2,0);
			\end{scope}
			\node	at	($(A.west)+(-0.7,0)$)		{\footnotesize $i=0,1,\ldots$};
			\node	at	($(A.south east)+(0.15,0)$)	{\footnotesize $B_0$};
			\node	at	($(A.north west)+(0,0.1)$)	{\footnotesize $p$};
			\begin{scope}[shift={($(d.south)+(-0.5,-0.7)$)},scale=0.7,local bounding box=B]
				\draw[blue!40!white,line width=1.5]	(0,0)	--	(1,0)	--	(1,1);
				\draw[->]	(0,0)	--	(0,1.2);
				\draw[->]	(0,0)	--	(1.2,0);
			\end{scope}			
			\node	at	($(B.south)+(-0.1,-0.2)$)		{\footnotesize $i=0,1,\ldots$};
			\node	at	($(B.south east)+(0.15,0)$)	{\footnotesize $B_0$};
			\node	at	($(B.north west)+(0,0.1)$)	{\footnotesize $p$};
			\begin{scope}[shift={(1.5,-0.8)},scale=0.7,local bounding box=E1]
				\draw[blue!40!white,line width=1.5]	(0,0)	--	(1,0)	--	(1,1);
				\draw[->]	(0,0)	--	(0,1.2);
				\draw[->]	(0,0)	--	(1.2,0);
			\end{scope}
			\node	at	($(E1.east)+(-1.3,0)$)		{\footnotesize $i=0$};
			\node	at	($(E1.south east)+(0.15,0)$)	{\footnotesize $B_0$};
			\node	at	($(E1.north west)+(0,0.1)$)	{\footnotesize $p$};
			\node	at	($(E1.north)+(0,0.3)$)		{a};
			\begin{scope}[shift={($(E1.south east)+(0.5,0)$)},local bounding box=F1,scale=0.7]
				\draw[blue!40!white,line width=1.5]	(0,0)	--	(1,0)	--	(1,1);
				\draw[->]	(0,0)	--	(0,1.2);
				\draw[->]	(0,0)	--	(1.2,0);
			\end{scope}
			\node	at	($(F1.north)+(0,0.3)$)		{b};
			\begin{scope}[shift={($(E1.south west)+(0,-1)$)},local bounding box=E2,scale=0.7]
				\draw[blue!40!white,line width=1.5]	(0,0)	--	(1,0)	--	(1,1);
				\draw[->]	(0,0)	--	(0,1.2);
				\draw[->]	(0,0)	--	(1.2,0);
			\end{scope}
			\node	at	($(E2.east)+(-1.3,0)$)		{\footnotesize $i=1$};
			\begin{scope}[shift={($(F1.south west)+(0,-1)$)},local bounding box=F2,scale=0.7]
				\draw[blue!40!white,line width=1.5]	(0,0)	--	(1,0)	--	(1,1);
				\fill[blue!40!white]	(0,0)	rectangle	(1,0.5);
				\draw[->]	(0,0)	--	(0,1.2);
				\draw[->]	(0,0)	--	(1.2,0);
			\end{scope}
			\begin{scope}[shift={($(E2.south west)+(0,-1)$)},local bounding box=E3,scale=0.7]
				\draw[blue!40!white,line width=1.5]	(0,0)	--	(1,0)	--	(1,1);
				\fill[blue!40!white]	(0.5,0)	rectangle	(1,0.5);
				\draw[->]	(0,0)	--	(0,1.2);
				\draw[->]	(0,0)	--	(1.2,0);
			\end{scope}
			\node	at	($(E3.east)+(-1.3,0)$)		{\footnotesize $i=2$};
			\begin{scope}[shift={($(F2.south west)+(0,-1)$)},local bounding box=F3,scale=0.7]
				\draw[blue!40!white,line width=1.5]	(0,0)	--	(1,0)	--	(1,1);
				\fill[blue!40!white]	(0,0)	rectangle	(1,0.5);
				\draw[->]	(0,0)	--	(0,1.2);
				\draw[->]	(0,0)	--	(1.2,0);
			\end{scope}
			\begin{scope}[shift={($(E3.south west)+(0,-1)$)},local bounding box=E4,scale=0.7]
				\draw[blue!40!white,line width=1.5]	(0,0)	--	(1,0)	--	(1,1);
				\fill[blue!40!white]	(0.5,0)	rectangle	(1,0.5);
				\draw[->]	(0,0)	--	(0,1.2);
				\draw[->]	(0,0)	--	(1.2,0);
			\end{scope}
			\node	at	($(E4.east)+(-1.3,0)$)		{\footnotesize $i=3$};
			\begin{scope}[shift={($(F3.south west)+(0,-1)$)},local bounding box=F4,scale=0.7]
				\draw[blue!40!white,line width=1.5]	(0,0)	--	(1,0)	--	(1,1);
				\fill[blue!40!white]	(0,0)	--	
														(0,0.5)	--	(0.5,0.5)	--	
														(0.5,0.75)	-- (1,0.75) --
														(1,0)	--	cycle;
				\draw[->]	(0,0)	--	(0,1.2);
				\draw[->]	(0,0)	--	(1.2,0);
			\end{scope}
			\begin{scope}[shift={($(F1.south east)+(0.7,0)$)},scale=0.7,local bounding box=G1]
				\draw[blue!40!white,line width=1.5]		(0,0)	-- 	(1,0)	--	(1,1);
				\fill[blue!40!white]			(0.5,0)	--
											(0.5,0.5)	--	(0.75,0.5)	--
											(0.75,0.75)	--	(1,0.75)	--
											(1,0)	--	cycle;
				\draw[->]	(0,0)	--	(0,1.2);
				\draw[->]	(0,0)	--	(1.2,0);
			\end{scope}
			\node	at	($(G1.north)+(0,0.3)$)		{a};
			\node	at	($(G1.east)+(-1.2,0)$)		{\footnotesize $i=4$};
			\begin{scope}[shift={($(G1.south east)+(0.5,0)$)},local bounding box=H1,scale=0.7]
				\draw[blue!40!white,line width=1.5]	(0,0)	--	(1,0)	--	(1,1);
				\fill[blue!40!white]	(0,0)	--	
														(0,0.5)	--	(0.5,0.5)	--	
														(0.5,0.75)	-- (1,0.75) --
														(1,0)	--	cycle;
				\draw[->]	(0,0)	--	(0,1.2);
				\draw[->]	(0,0)	--	(1.2,0);
			\end{scope}
			\node	at	($(H1.north)+(0,0.3)$)		{b};
			\begin{scope}[shift={($(G1.south west)+(0,-1)$)},local bounding box=G2,scale=0.7]
				\draw[blue!40!white,line width=1.5]	(0,0)	--	(1,0)	--	(1,1);
				\fill[blue!40!white]			(0.5,0)	--
											(0.5,0.5)	--	(0.75,0.5)	--
											(0.75,0.75)	--	(1,0.75)	--
											(1,0)	--	cycle;
				\draw[->]	(0,0)	--	(0,1.2);
				\draw[->]	(0,0)	--	(1.2,0);
			\end{scope}
			\node	at	($(G2.east)+(-1.2,0)$)		{\footnotesize $i=5$};
			\begin{scope}[shift={($(H1.south west)+(0,-1)$)},local bounding box=H2,scale=0.7]
				\draw[blue!40!white,line width=1.5]	(0,0)	--	(1,0)	--	(1,1);
				\fill[blue!40!white]	(0,0)	--	
														(0,0.5)	--	(0.5,0.5)	--	
														(0.5,0.75)	-- (0.75,0.75) --
														(0.75,0.875)	--	(1,0.875)	--
														(1,0)	--	cycle;
				\draw[->]	(0,0)	--	(0,1.2);
				\draw[->]	(0,0)	--	(1.2,0);
			\end{scope}
			\node	at	($(G2.south east)+(0.2,-0.2)$)	{$\vdots$};
			\begin{scope}[shift={($(F4.south east)+(0.7,0.5)$)},local bounding box=G4,scale=0.7]
				\draw[blue!40!white,line width=1.5]	(0,0)	--	(1,0)	--	(1,1);
				\fill[blue!40!white]			(0.5,0)	--
											(0.5,0.5)	--	(0.75,0.5)	--
											(0.75,0.75)	--	(0.875,0.75)	--
											(0.875,0.875)	--	(0.9375,0.875)	--
											(0.9375,0.9375)	--	(0.96875,0.9375)	--
											(0.96875,0.96875)	--	(0.984375,0.96875)	--
											(0.984375,0.984375)	--	(0.9921875,0.984375) --
											(0.9921875,0.9921875)	--	(1,0.9921875)	--
											(1,0)	--	cycle;
				\draw[->]	(0,0)	--	(0,1.2);
				\draw[->]	(0,0)	--	(1.2,0);
			\end{scope}
			\node	at	($(G4.east)+(-1.2,0)$)		{\footnotesize $i=16$};
			\begin{scope}[shift={($(G4.south east)+(0.5,0)$)},local bounding box=H4,scale=0.7]
				\draw[blue!40!white,line width=1.5]	(0,0)	--	(1,0)	--	(1,1);
				\fill[blue!40!white]	(0,0)	--	
									(0,0.5)	--	(0.5,0.5)	--	
									(0.5,0.75)	-- (0.75,0.75)	--
									(0.75,0.875)	--	(0.875,0.875)	--
									(0.875,0.9375)	--	(0.9375,0.9375)	--
									(0.9375,0.96875)	--	(0.96875,0.96875)	--
									(0.96875,0.984375)	--	(0.984375,0.984375)	--
									(0.984375,0.9921875)	--	(0.9921875,0.9921875)	--
									(0.9921875,0.99609375)	--	(1,0.99609375)	--
									(1,0)	--	cycle;
				\draw[->]	(0,0)	--	(0,1.2);
				\draw[->]	(0,0)	--	(1.2,0);
			\end{scope}
			\node	at	($(G4.south east)+(0.2,-0.2)$)	{$\vdots$};
	\end{tikzpicture}
	\caption{Value iteration for the game on the \emph{left} with the sequence of reachability values ($c$ is the target) on the \emph{right}. }
	\label{fig:value iteration illustration}
\end{figure}
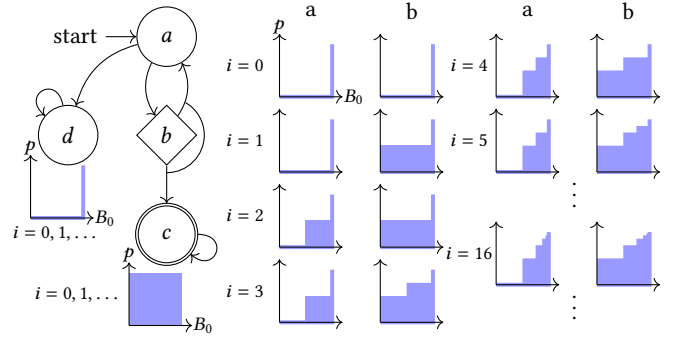  

We summarize our results below.
We consider the problem of deciding if the reachability player can reach the target in a given MDP with a given probability $p$ using a given budget $B$. \textbf{(I)}~For \emph{general MDPs}, the reachability problem remains open. Under the assumption that $(B,p)$ is not exactly on the threshold, we show that the problem is decidable. The time and space complexity of our algorithm depends on the distance $\epsilon$ of $(B,p)$ from the threshold (infinity norm), the minimum probability $\ER_{\min}$, and the number of vertices $|\V|$, and is given as $\mathcal{O}\left(\frac{|\V|^2}{\epsilon}\log\left({1/\epsilon}\right)^3\ER_{\min}^{-4|\V|}\right)$. Our decision procedure uses an approximated value iteration algorithm to limit the computational complexity. \textbf{(II)}~For \emph{acyclic MDPs}, the reachability problem is decidable in $\mathcal{O}(|\V|^{|\V|})$ time and space, for tree-shaped MDPs, it is decidable in NP $\cap$ co-NP. 

The above assumption that the point \( (B,p) \) does not lie exactly on the threshold is natural in the context of multi-objective decision-making. E.g. for ``classical'' (non-bidding) multi-objective stochastic games, algorithms for determining the winning player assume that the target payoff vector does not lie on the boundary of the Pareto set of achievable payoffs~\cite{MOgames:approx}. 

\medskip

\stam{
\noindent\textbf{Further Related Work}

\paragraph{Multi-objective decision making.} As mentioned earlier, one of our motivations is solving multi-objective problems on MDPs via the auction-based scheduling approach. The alternative approach is to directly synthesize a single policy achieving an acceptable tradeoff between the individual objectives. Such approaches were investigated in the context of multi-objective MDPs~\cite{CMH06,MOMDP,convex-hull-mcts}, stochastic games~\cite{stochgames:vi,MOgames:approx}, and reinforcement learning~\cite{BarettNarayanan:CHVI,JMLR:v15:vanmoffaert14a,MORLguide,deepMORL}.
}

\noindent\textbf{Further Related Work.}
As mentioned earlier, one of our motivations is solving multi-objective problems on MDPs via the auction-based scheduling approach. 
The alternative approach is to directly synthesize a single policy achieving an acceptable tradeoff between the individual objectives as was studied for  
MDPs~\cite{CMH06,MOMDP,convex-hull-mcts}, stochastic games~\cite{stochgames:vi,MOgames:approx}, and reinforcement learning~\cite{BarettNarayanan:CHVI,JMLR:v15:vanmoffaert14a,MORLguide,deepMORL}.
We are the first to study quantitative reachability objectives in bidding games on MDPs. For sure winning, simple reductions to bidding games on graphs where shown~\cite{AHIN19}.


\section{Preliminaries of Markov Decision Processes (MDP)} 
	\noindent\textbf{Syntax.} An MDP is a tuple $\tup{\V,\VC,\VR,\EC,\ER}$,
	where $\V$ is a finite set of vertices,
	$\VC$ and $\VR$ are the \emph{control} and \emph{random} vertices such that $\VC\cup \VR=\V$ and $\VC\cap \VR=\emptyset$,
	$\EC \colon \VC\to 2^\VR$ is the \emph{control transition function}, and
	$\ER \colon \VR\to \distr(\VC)$ is the \emph{random transition function}, where $\distr(\VC)$ is the set of all probability distributions over $\VC$.
	The set of \emph{successors} of vertex $v$ will be denoted as $\succ(v)$, where $\succ(v)\coloneqq \EC(v)$ if $v\in\VC$ and $\succ(v)\coloneqq \set{v'\in\V\mid \ER(v)(v')>0}$ if $v\in\VR$.
	A vertex $v$ is called \emph{sink} if $\succ(v)=\set{v}$.
	
	\medskip
	\noindent\textbf{Convention for figures.} MDPs are depicted as transition diagrams with circular nodes representing control vertices and diamond-shaped nodes representing random vertices.
	The target vertices are depicted in double circles.
	Random transitions with \emph{uniform} probability distributions are marked using an arc between them.

	\medskip
	\noindent\textbf{Semantics.}
	Semantics of MDPs are summarized below; details can be found in standard textbooks~\cite{puterman1990markov}. 
	A \emph{path} of an MDP starting at a given vertex $v\in \V$ is a sequence $v^0v^1\ldots$ with $v^0=v$ and every $v^{i>0}$ being a successor of $v^{i-1}$.
	Paths can be either finite or infinite.
	We write $\pathsfin(M)$ and $\pathsinf(M)$ to denote, respectively, the set of all finite and infinite paths, and write $\pathsfinc(M)$ to denote the set of all finite paths that end in a control vertex.
	A \emph{scheduler} is a function $\sched\colon \pathsfinc(M)\to \V$ mapping every finite path $\rho=v^0\ldots v^k$ ending at the control vertex $v^k\in \VC$ to one of its successors; i.e., $\sched(\rho)\in \EC(v^k)$.
	Every scheduler $\sched$ induces a unique probability distribution $\pr{v}{M,\sched}(\cdot)$ over the paths of $M$ with initial vertex $v$.
	
	\medskip
	\noindent\textbf{Specifications.}
	A \emph{specification} $\spec$ over an MDP $M$ is a set of infinite paths of $M$.
	We will consider reachability and safety specifications of both bounded and unbounded variants, defined below. 
	Given a set of vertices $T\subseteq \V$ called the \emph{target} vertices, and an integer $h>0$, the \emph{bounded-horizon reachability} specification is the set of paths that visit $T$ \emph{in at most $h$ steps}, i.e., $\reach^{M,h}(T)\coloneqq\set{v^0v^1\ldots\in \pathsinf(M)\mid \exists i\leq h\;.\;v^i\in T}$. The (unbounded) \emph{reachability} specification is the set of paths that \emph{eventually} visit $T$; i.e., $\reach^M(T)\coloneqq\bigcup_h \reach^{M,h}(T)$.
	Dually, given a set of vertices $U\subseteq \V$ called the \emph{unsafe} vertices, and a number $h>0$, the \emph{bounded-horizon safety} specification is the set of paths that avoid $U$ \emph{for at least $h$ steps}, i.e., $\safe^{M,h}(U)\coloneqq\set{v^0v^1\ldots\in \pathsinf(M)\mid \forall i\leq h\;.\;v^i\notin U}$. The (unbounded) \emph{safety} specification is the set of paths that \emph{always} avoid $U$; i.e., $\safe^M(U)\coloneqq\bigcap_h \safe^{M,h}(U)$.
	Reachability and safety specifications---with bounded and unbounded horizons---are complementary to each other, i.e.,  for every $h>0$, $\reach^{M,h}(T) = \V^\omega\setminus\safe^{M,h}(T)$, and $\reach^M(T) = \V^\omega\setminus\safe^M(T)$.

\section{Bidding Games on MDP-s}
	On a given $M$, we consider a zero-sum ``token game'' between two players, who will be referred to as the \emph{reachability} and \emph{safety} players.
	Initially, the token is placed in a given \emph{initial} vertex, and the players are allocated budgets (positive real numbers) whose sum is $1$.
	As convention, we will only specify the reachability player's budget as $B$, and the safety player's budget will be implicit (i.e., $1-B$).
	
	The game is played as follows.
	When the token is in a control vertex $v$, the players independently and simultaneously propose their bids which may not exceed their current budgets.
	Whoever bids higher pays his bid amount to the other player and moves the token to one of $v$'s successors.
	The budgets are updated accordingly:
	assuming the reachability player's budget at $v$ was $B$, if he wins by bidding $b_R$, his new budget will reduce to $B-b_R$, and if the safety player wins by bidding $b_S$, the reachability player's new budget will increase to $B+b_S$.\footnote{This bidding mechanism is known as Richman bidding in the literature. Other bidding mechanisms also exist, but they are left as part of future works.}
	On the other hand, when the token is in a random vertex $v$, it moves to a successor $w$ with probability $\ER(v)(w)$, and the budgets of the players remain unaffected.
	
	The game continues in this fashion forever, generating an infinite path traversed by the token.
	Given a set of target vertices $T$, the goal of the reachability player is to maximize the probability that the token eventually reaches $T$ from the initial vertex, while the goal of the safety player is to minimize this probability.

	\medskip
	\noindent\textbf{Policies and paths.}
	We formalize bidding games on MDPs as follows.
	A  policy of a player is a function of the form $[0,1]\times \pathsfinc(M)\to [0,1]\times \V$, mapping every pair of  available budget $B$ and finite path $v^0\ldots v^k$ to a pair of a bid value $b\leq B$ and a successor of $v^k$.
	We will write $\polZ$ and $\polO$ to represent the policy of the reachability and the safety player, respectively.
	
	Suppose we are given an initial vertex $v$ and an initial budget $B$ of the reachability player (recall that the safety player's initial budget will be $1-B$).
	We will call the pair $\tup{v,B}$ the \emph{initial configuration}.
	Every pair of policies $(\polZ,\polO)$ and the initial configuration $\tup{v,B}$ induce a scheduler $\sched(\polZ,\polO,B)$ as follows:
	if the current path is $\rho\in \pathsfinc(M)$ and the current budget of the reachability player is $B'$, then, denoting $\polZ(B',\rho)=(b_R,u)$ and $\polO(1-B',\rho)=(b_S,w)$, we define the scheduler $\sched$ as follows:
	\begin{itemize}[leftmargin=*]
		\item if $b_R \geq b_S$,\footnote{We assume, arbitrarily, that ties go in favor of the reachability player. In our proofs, we show that it does not matter how ties are resolved.} i.e., if the reachability player wins the bidding, then $\sched(\polZ,\polO,B)(\rho)=u$, and the reachability player's new budget is $B'-b_R$, and
		\item if $b_R<b_S$, i.e., if the safety player wins the bidding, then $\sched(\polZ,\polO,B)(\rho)=w$, and the reachability player's new budget is $B'+b_S$.
	\end{itemize}
	We will write $\pr{v,B}{\polZ,\polO}$ instead of $\pr{v}{\sched(\polZ,\polO,B)}$ to denote the probability distribution over the set of infinite paths starting at vertex $v$.
	
	\medskip
	\noindent\textbf{Winning conditions.}
	Let $\tup{v,B}$ be an initial configuration, $\spec$ be a reachability specification (bounded or unbounded),  and $p\in [0,1]$ be the \emph{required probability} for the reachability player to satisfy $\spec$.
	A \emph{winning policy of the reachability player} is a policy $\polZ$ such that for every policy $\polO$ of the safety player, it holds that $\pr{v, B}{\polZ,\polO}(\spec)\geq p$.
	Dually, a \emph{winning policy of the safety player} is a policy $\polO$ such that for every policy $\polZ$ of the reachability player, it holds that $\pr{v, B}{\polZ,\polO}(\spec)\leq p$.
	We will write $\winpolZ(B,p,v,\spec)$ and $\winpolO(B,p,v,\spec)$ to denote the sets of winning policies for reachability and safety players, respectively.
	If $\spec$ is clear, we will simply write $\winpolZ(B,p,v)$ and $\winpolO(B,p,v)$.

	\medskip
	\noindent\textbf{Thresholds.}
	In traditional reachability bidding games on graphs, where probabilities are unnecessary, the threshold of a vertex $v$ is the budget $B$ such that the reachability player wins from $v$ with every budget $B'>B$, and loses with every budget $B'<B$.
	In bidding games on MDPs, thresholds generalize to relations over budgets and probabilities:
	The threshold of $v$ is the set of all pairs $(B,p)$ such that the reachability player wins with every budget greater than $B$  and required probability less than $p$, and loses with every budget less than $B$ and required probability larger than $p$.
	
	\begin{definition}[Threshold]
	For a given vertex $v$, the \emph{threshold} of $v$, written $\thres_v$,
	is the set of all pairs $(B,p)$ such that $\winpolZ(B',p', v)$ is nonempty whenever $B'>B$ and $p'< p$,
	and $\winpolO(B',p', v)$ is nonempty whenever $B'<B$ and $p'>p$.
	\end{definition}
	
	A central question in traditional bidding games is whether thresholds exist, because then it can be \emph{determined} which of the players will win based on the budget allocation, as long as the budget is not exactly equal to the threshold.
	In our case, the existence question of thresholds generalizes to the question of \emph{whether the threshold completely separates}
	the winning points of the two players.

	\begin{definition}[Completely separating thresholds]\label{def:separating threshold}
		The threshold of $v$ is \emph{completely separating} if for
		every point $(B,p)\notin \thres_v$,
		\begin{itemize}[leftmargin=*]
			\item there exists $(B', p') \in \thres_v$
		such that either $B < B'$ and $p > p'$, or $B > B'$ and $p < p'$, and
			\item exactly one of the sets $\winpolZ(B,p,v)$ and $\winpolO(B,p,v)$ is nonempty.
		\end{itemize}
	\end{definition}

	\medskip
	\noindent\textbf{The algorithmic question.}
	We define \emph{problem instances} as tuples of the form $\tup{M,v,T,B,p}$, where $M$ is an MDP, $\tup{v,B}$ is the initial configuration, $T$ is the target, and $p$ is the required probability of satisfying the reachability specification $\reach^M(T)$.
	The subject of this paper is how to decide who wins in a given problem instance.
	\begin{problem}[Quantitative reachability]\label{prob:exact}
		Let $\tup{M,v,T,B,p}$ be a problem instance.
		For a given $j\in \set{\mathsf{R},\mathsf{S}}$, decide if the set $\winpol{j}(B,p,v,\reach^M(T))$ is nonempty.
	\end{problem}
	If $\winpolZ(B,p,v,\reach^M(T))\neq\emptyset$, our decision procedure will produce the witness winning policy for the reachability player as a byproduct; construction of winning policies for the safety player is solved for acyclic MDPs, and remains open for general MDPs.
	We will assume that $T$ is a set of sinks, which is without loss of any generality since the game ends as soon as $T$ is reached.

%


\section{Bounded-Horizon Value Iteration}

We start with the bounded-horizon variant of Prob.~\ref{prob:exact} with horizon $h$.
In this setting, we propose a 2-dimensional
value iteration algorithm for deciding who wins the game.

For reachability, for each vertex $v$, our algorithm computes a monotonically increasing (with respect to ``$\subseteq$'') sequence of ``values'' $\rval_v^0,\ldots,\rval_v^h\subseteq [0,1]^2$, where $\rval_v^i$ will be shown to represent the set of all $(\overline{B},p)$ such that for every $B>\overline{B}$, the reachability player can reach $T$ from the initial configuration $\tup{v,B}$ with probability at least $p$ in at most $i$ steps.
Dually, for safety, for each vertex $v$, our algorithm computes a monotonically decreasing sequence of ``values'' $\sval_v^0,\ldots,\sval_v^h\subseteq [0,1]^2$, where $\sval_v^i$ will be shown to represent the set of all $(\overline{B},p)$ such that for every $B<\overline{B}$, the safety player can avoid $T$ from the initial configuration $\tup{v,B}$ with probability at least $1-p$ for at least $i$ steps.

Clearly, if $v$ is in $T$, the target $T$ will be ``reached'' in zero steps, no matter what $(B,p)$ is, and therefore every $(\overline{B},p)$ belongs to $\rval_v^0$.
In contrast, if $v$ is \emph{not} in $T$, the target $T$ will be reached in zero steps only with probability $p=0$. The points with $\overline{B}=1$ are trivially included to $\rval_v^0$ as well, because $\{B \mid B>\overline{B} = 1\} = \emptyset$. 
By duality, the definition of $\sval_v^0$ is exactly the opposite.

We now consider the case of $i>0$ and $v\notin T$. 
We take the perspective of the reachability player; the case of safety is similar.
Consider the following two cases.
(a)~Suppose $v\in \VR$. 
Since there is no bidding in $v$, the budgets of the players at $v$ remain unaffected after the transition.
For a fixed budget $B$, if $p_w$ is the probability of reaching $T$ in $i-1$ steps from the successor $w$ (of $v$), then the probability of reaching $T$ in $i$ steps from $v$ becomes $\sum_w p_w\cdot\ER(v)(w)$. 
(b)~Now suppose $v\in\VC$.
For a fixed probability $p$, we can ask for the least budget needed to reach $T$ from $v$.
If $B_+$ and $B_-$ are the maximum and minimum budgets required from any successor to reach $T$ in $i-1$ steps with probability $p$, then the budget required at $v$ for $i$-step reachability is $(B_++B_-)/2$.
This follows from the fact that the bid $(B_+ - B_-)/2$ will either lead to won bidding and budget $B_-$ or lost bidding and budget $B_+$.

The value iteration algorithm is now formally presented below.
\begin{align}
        &\rval^0_v\coloneqq
 \begin{cases}
 [0,1]^2 & \text{if } v\in T, \\
 [0,1]\times\{0\} \cup \set{1}\times \left[0,1\right] & \text{otherwise,}
 \end{cases}\\
        &\sval^0_v\coloneqq
 \begin{cases}
 [0,1]\times\{1\} \cup \set{0} \times [0,1] & \text{if } v\in T, \\
 [0,1]^2 & \text{otherwise,}
 \end{cases}
 \end{align}
 and for $i>0$, $\rval^i_v \coloneqq \bellman_v\left(\set{\rval^{i-1}_w\mid w\in\succ(v)}\right)$ and $\sval^i_v = \bellman_v\left(\set{\sval^{i-1}_w\mid w\in\succ(v)}\right)$, where the operator $\bellman_v$ is defined as follows:
 If $v\in\VR$,
 \begin{multline}
        \label{eq:vi-vr}
 \bellman_v\left(\set{\val_w\mid w\in\succ(v)}\right)\coloneqq\\
 \bigcup_{B \in [0, 1]} \left\{ \left(B, \sum_{w\in \succ(v)} \ER(v)(w) \cdot p_w\right)\, \middle|   \forall w\in\succ(v)\;.\; (B, p_w) \in \val_w \right\} 
 \end{multline}  
 and if $v\in \VC$,  
 \begin{equation}
    \label{eq:vi-vc}
 \begin{aligned}
        &\bellman_v(\set{\val_w \mid w\in\succ(v)})\coloneqq
 \bigcup_{p\in [0,1]} \left\{ \left(\frac{B_+ + B_-}{2}, p\right)\, \middle|\right.\\ 
                    &\ \ \ B_- \in \set{ B\mid\exists w\in\succ(v)\;.\; (B,p)\in \val_w},\\
                    &\left.\ \ B_+ \in \set{B\mid \forall w\in\succ(v)\;.\;(B,p)\in \val_w } \right\}.
 \end{aligned}
 \end{equation}
Fig.~\ref{fig:bellman operator illustration} illustrates the $\bellman_v$ operator in action.
Intuitively, the $\bellman_v$ operator averages the value sets of the successors along the $p$ axis for random vertices and along the $B$ axis for control vertices. 
Fig.~\ref{fig:value iteration illustration} illustrates the value sets computed using the value iteration algorithm.

	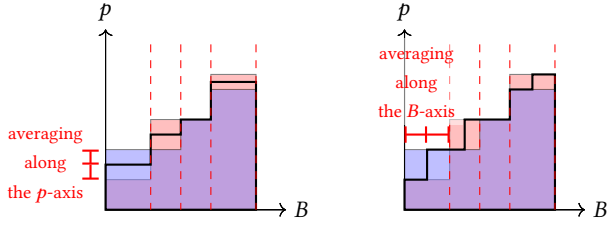
\begin{figure}
		\begin{tikzpicture}[scale=2]
			\draw[->]	(0,0)	--	(0,1.2)	node[above]	{$p$};
			\draw[->]	(0,0)	--	(1.2,0)	node[right]	{$B$};
			\draw[fill=red!50!white,opacity=0.5]	(0,0)	--	(0,0.2)	--	(0.3,0.2)	--	(0.3,0.6)	--	(0.7,0.6)	--	(0.7,0.9)	--	(1,0.9)	--	(1,0)	--	cycle;
			
				\draw[->]	(0,0)	--	(0,1.2);
				\draw[->]	(0,0)	--	(1.2,0);
				\draw[fill=blue!50!white,opacity=0.5]	(0,0)	--	(0,0.4)	--	(0.5,0.4)	--	(0.5,0.6)	--	(0.7,0.6)	--	(0.7,0.8)	--	(1,0.8)	--	(1,0)	--	cycle;
			
				\draw[->]	(0,0)	--	(0,1.2);
				\draw[->]	(0,0)	--	(1.2,0);
				\draw[thick]	(0,0)	--	(0,0.3)	--	(0.3,0.3)	--	(0.3,0.5)	--	(0.5,0.5)	--	(0.5,0.6)	--	(0.7,0.6)	--	(0.7,0.85)	--	(1,0.85)	--	(1,0)	--	cycle;
			
			\draw[dashed,red]	(0.3,1.1)	--	(0.3,0);
			\draw[dashed,red]	(0.5,1.1)	--	(0.5,0);
			\draw[dashed,red]	(0.7,1.1)	--	(0.7,0);
			\draw[dashed,red]	(1.0,1.1)	--	(1.0,0);
			
			\draw[red,|-|,thick]		(-0.1,0.3)	--	(-0.1,0.4);
			\draw[red,-|,thick]		(-0.1,0.3)	--	(-0.1,0.2);
			\node[align=center,red]		at		(-0.4,0.3)		{{\footnotesize averaging}\\ {\footnotesize along}\\{\footnotesize the $p$-axis}};
			
		\end{tikzpicture}
		\qquad
		\begin{tikzpicture}[scale=2]
		
			\draw[->]	(0,0)	--	(0,1.2)	node[above]	{$p$};
			\draw[->]	(0,0)	--	(1.2,0)	node[right]	{$B$};
			\draw[fill=red!50!white,opacity=0.5]	(0,0)	--	(0,0.2)	--	(0.3,0.2)	--	(0.3,0.6)	--	(0.7,0.6)	--	(0.7,0.9)	--	(1,0.9)	--	(1,0)	--	cycle;
			
				\draw[->]	(0,0)	--	(0,1.2);
				\draw[->]	(0,0)	--	(1.2,0);
				\draw[fill=blue!50!white,opacity=0.5]	(0,0)	--	(0,0.4)	--	(0.5,0.4)	--	(0.5,0.6)	--	(0.7,0.6)	--	(0.7,0.8)	--	(1,0.8)	--	(1,0)	--	cycle;
			
				\draw[->]	(0,0)	--	(0,1.2);
				\draw[->]	(0,0)	--	(1.2,0);
				\draw[thick]	(0,0)	--	(0,0.2)	--	(0.15,0.2)	--	(0.15,0.4)	--	(0.4,0.4)	--	(0.4,0.6)	--	(0.7,0.6)	--	(0.7,0.8)	--	(0.85,0.8)	--	(0.85,0.9)	--	(1,0.9)	--	(1,0)	--	cycle;
			
			\draw[dashed,red]	(0.3,1.1)	--	(0.3,0);
			\draw[dashed,red]	(0.5,1.1)	--	(0.5,0);
			\draw[dashed,red]	(0.7,1.1)	--	(0.7,0);
			\draw[dashed,red]	(1.0,1.1)	--	(1.0,0);
			
			\draw[red,|-|,thick]		(0,0.5)	--	(0.15,0.5);
			\draw[red,-|,thick]		(0.15,0.5)	--	(0.3,0.5);
			\node[align=center,red,fill=white,fill opacity=0.5,text opacity=1]		at		(0.08,0.85)		{{\footnotesize averaging}\\ {\footnotesize along}\\{\footnotesize the $B$-axis}};
		\end{tikzpicture}
		\caption{Illustration of the $\bellman_v$ operator for when $v\in \VR$ (left) and $v\in\VC$ (right). In both cases, we assume there are two successors whose values from the $(i-1)$-th iteration are given as the red and blue regions. 
		The outputs of $\bellman_v$ is shown as the set with thick boundaries.		
		For $v\in\VR$ (left), we assume uniform transition probabilities (i.e., $0.5$ for each successor).
		}
		\label{fig:bellman operator illustration}
	\end{figure}

The following theorem formally states the soundness of the above procedure for deciding which player has a winning policy.
Our proof is constructive, and shows how to obtain the corresponding winning policies.

\begin{theorem}\label{thm:bounded value iteration}
 Let $\tup{M,v,T,B,p}$ be a problem instance, $i\in \N$, and $\overline{B}\in [0,1]$.
 The following hold:
 \begin{enumerate}[(A)]
        \item $(\overline{B},p)\in\rval_{v}^i$, $B > \overline{B} \Rightarrow \winpolZ(B,p,v,\reach^{M,i}(T))\neq \emptyset$,
        \item $(\overline{B},p)\in\sval_{v}^i$, $B < \overline{B} \Rightarrow \winpolO(B,p,v,\reach^{M,i}(T))\neq \emptyset$.
 \end{enumerate}
\end{theorem}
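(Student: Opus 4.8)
The plan is to prove (A) and (B) by induction on $i$; they follow from the same, dual argument, so I will detail (A) and treat (B) by symmetry. Throughout, the case $v \in T$ is immediate: any path from $v$ already starts in $T$, so $\pr{v,B}{\polZ,\polO}(\reach^{M,i}(T)) = 1 \geq p$ for every pair of policies, and the trivial policy witnesses $\winpolZ(B,p,v,\reach^{M,i}(T)) \neq \emptyset$. So assume $v \notin T$. For the base case $i = 0$, $\reach^{M,0}(T)$ from $v \notin T$ has probability $0$, and membership $(\overline B, p) \in \rval_v^0 = [0,1]\times\{0\} \cup \{1\}\times[0,1]$ forces either $p = 0$ (every policy is winning, since the probability is $\geq 0 = p$) or $\overline B = 1$ (no $B > \overline B$ exists, so the implication is vacuous); the base case of (B) is analogous.

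For the inductive step, fix $i > 0$, assume (A) for $i-1$, and take $(\overline B, p) \in \rval_v^i$ with $B > \overline B$. If $v \in \VR$, then by \eqref{eq:vi-vr} there are values $(p_w)_{w \in \succ(v)}$ with $(\overline B, p_w) \in \rval_w^{i-1}$ for every $w$ and $p = \sum_w \ER(v)(w)\, p_w$. Since the budget is untouched at a random vertex, the inductive hypothesis gives a winning policy $\polZ_w \in \winpolZ(B, p_w, w, \reach^{M,i-1}(T))$ at each successor. Have the reachability player, from $\tup{v,B}$, play $\polZ_w$ on the part of the history after the forced random move to $w$ (and arbitrarily on configurations that never arise). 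For any safety policy $\polO$ with residual $\polO_w$ after the step $vw$, conditioning on the first transition yields $\pr{v,B}{\polZ,\polO}(\reach^{M,i}(T)) = \sum_w \ER(v)(w) \cdot \pr{w,B}{\polZ_w,\polO_w}(\reach^{M,i-1}(T)) \geq \sum_w \ER(v)(w)\, p_w = p$, so this policy lies in $\winpolZ(B,p,v,\reach^{M,i}(T))$.

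The control-vertex case is the heart of the argument. If $v \in \VC$, then by \eqref{eq:vi-vc} there are $B_-, B_+ \in [0,1]$ and a successor $w_-$ with $(B_-, p) \in \rval_{w_-}^{i-1}$, with $(B_+, p) \in \rval_w^{i-1}$ for every $w \in \succ(v)$, and with $\overline B = (B_- + B_+)/2$. Set $\gamma := B_+$ and $\beta := \min(B_-, B_+)$, so that $\beta \leq \gamma$, $(\beta, p) \in \rval_{w_-}^{i-1}$ (in the case $\beta = B_+$ this uses that $B_+$ works at \emph{every} successor), and $(\beta + \gamma)/2 \leq \overline B < B$. Have the reachability player bid $b_R := (\gamma - \beta)/2$ at $v$ (note $0 \leq b_R \leq B$) and, upon winning the bid, move the token to $w_-$. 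If he wins, then whatever the opposing bid was, his budget becomes $B - b_R > (\beta + \gamma)/2 - (\gamma - \beta)/2 = \beta$, and the inductive hypothesis at $w_-$ with the point $(\beta, p)$ supplies a winning continuation; if the safety player wins with a bid $b_S > b_R$ and moves the token to some $w$, then the reachability player's budget becomes $B + b_S > B + b_R > (\beta + \gamma)/2 + (\gamma - \beta)/2 = \gamma$, and the inductive hypothesis at $w$ with the point $(\gamma, p)$ supplies a winning continuation. In either case the probability of reaching $T$ within $i-1$ further steps is at least $p$, so the composed policy lies in $\winpolZ(B,p,v,\reach^{M,i}(T))$.

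Since both budget inequalities above are strict, the argument is indifferent to how ties are resolved, confirming the paper's footnote; for the same reason the proof of (B) is fully dual: given $(\overline B,p)\in\sval_v^i$ and $B < \overline B$, at a random vertex one averages the $\sval$-values along the $p$-axis, and at a control vertex the safety player bids so that --- whether it wins or loses --- the reachability player's budget ends strictly below the relevant successor value, invoking the inductive hypothesis (B) at the successors. I expect the control-vertex step to be the only genuinely delicate point: the reachability player must commit to $b_R$ before seeing the opponent's bid, yet must guarantee that the post-auction configuration lands strictly inside the region certified by the previous iteration \emph{regardless} of who wins. The reformulation with $\gamma = B_+$ and $\beta = \min(B_-, B_+)$ is precisely what makes the two strict inequalities $B - b_R > \beta$ and $B + b_S > \gamma$ both follow from the single strict hypothesis $B > (B_- + B_+)/2$, and it sidesteps the need to prove any separate monotonicity (closure under increasing/decreasing the budget) property of the value sets $\rval_v^i$ and $\sval_v^i$.
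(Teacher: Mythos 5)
Your proof is correct and follows essentially the same inductive argument as the paper: decompose $p$ across the successors at random vertices, and at control vertices bid half the gap between $B_+$ and $B_-$ so that both auction outcomes land strictly inside a successor's value set from the previous iteration. Your substitution $\beta=\min(B_-,B_+)$ is a small refinement that cleanly covers the degenerate case $B_->B_+$ (where the paper's bid $\lvert B_+-B_-\rvert/2$ and its claim that winning yields budget $B_-+s$ would need the implicit assumption $B_+\geq B_-$), but the overall route is identical.
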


\begin{proof}
 We simultaneously prove both (A) and (B) hence we use $\val$ to denote either $\rval$ or $\sval$.
 The winning policy $\pol$ can be extracted from the computed values $\val^i_v$ inductively.

 Let $(\overline{B},p)\in \val_v^i$ and suppose the reachability player's initial budget is $B = \overline{B} + s$ with $s>0$ for $\rval$ and $s<0$  for $\sval$.
    
 If $i=0$, the claim is trivially true since regardless of the chosen policy,
 any budget is sufficient to guarantee reaching $T$ in zero steps if $v \in T$, and
 any budget is sufficient to reach it with probability zero otherwise.
 Respectively, any budget is sufficient to avoid $T$ with probability one if $v \not\in T$, and
 any budget is sufficient to avoid it with probability zero otherwise.

 Otherwise, assume $i > 0$ and first discuss the case when $v\in\VC$.
 In the first step, $\pol$ identifies $B_-$ and $B_+$ as defined in \eqref{eq:vi-vc} such that $(B_- + B_+)/2 = \overline{B}$,
 bids $\lvert B_+-B_-\rvert/2$, and tries to move the token to the successor $w$ for which $(B_-,p)\in \val_w^{i-1}$.
 If $\pol$ wins the bidding, it moves the token to $w$ and gives the budget to the opponent yielding the new budget $B' = B_- + s$ for the reachability player.
 Otherwise, the opponent moves the token to any successor $w$ yielding the new budget $B' = B_+ + s$.
 By definition of $B_+$, $(B_+,p)\in \val_{w}^{i-1}$ for any choice of $w$.
 Therefore, regardless of the new vertex $w$, $\pol$ has enough budget to continue according to a policy in $\winpol{j}(B', p, w, \reach^{M,i-1}(T))$
 which is nonempty by the induction hypothesis (here $j$ is $\mathsf{R}$ if $\val$ is $\rval$ and $\mathsf{S}$ if $\val$ is $\sval$).

 Now consider the case when $v\in\VR$. By the inductive definition of $\val_v^i$ in \eqref{eq:vi-vr},
 there exists $p_w$ for each $w\in\succ(v)$ such that $\sum_{w\in\succ(v)} \ER(v)(w)\cdot p_w = p$ and $(B,p_w)\in \val_w^{i-1}$.
 By the inductive hypothesis, $\winpol{j}(B, p_w, w, \reach^{M,i-1}(T))$ contains a policy $\pol_w$ for each $w\in\succ(v)$,
 hence $\pol$ simply follows $\pol_w$ regardless of the stochastic outcome.
\end{proof}

The computability of the values at each step follows from their finite representations.
In particular, the sets $\rval_v^i$ and $\sval_v^i$ have a ``staircase form'' (see Fig.~\ref{fig:bellman operator illustration})
and can be represented by the
corner points of the steps.
Before formalizing this, define the order $\ord$ as $(B, p) \ord (B', p')$ if and only if $B \geq B'$ and $p \leq p'$.
A $\ord$-downward (or $\ord$-upward) closure of a set $S$ is the set of all points $(B, p)$ such that there exists $(B', p')\in S$ with $(B, p)\ord (B', p')$ (or $(B', p')\ord (B, p)$).

\begin{lemma}
    \label{lem:fin-rep}
 Let $M$ be an MDP, $T$ be target vertices, and $v$ be a vertex in $M$.
 For every $i \in \N$, there exists
 a finite set $G \subseteq [0,1]^2$ of at most $3|V|^i$ points such that
    $\rval_v^i$ is the $\ord$-downward closure of $G$, and $\sval_v^i$ is the $\ord$-upward closure of $G$.
 Moreover, all boundary points of $\rval_v^i$ belong to $\sval_v^i$, and vice versa.
\end{lemma}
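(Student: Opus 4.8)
The plan is to prove Lemma~\ref{lem:fin-rep} by induction on $i$, with a strengthened hypothesis: for every vertex $v$ there is a finite set $G_v^i\subseteq[0,1]^2$ with $|G_v^i|\le 3|V|^i$ such that $\rval_v^i$ is the $\ord$-downward closure and $\sval_v^i$ the $\ord$-upward closure of the \emph{same} set $G_v^i$, and, taking topological boundaries relative to $[0,1]^2$, every boundary point of $\rval_v^i$ lies in $\sval_v^i$ and every boundary point of $\sval_v^i$ lies in $\rval_v^i$. Geometrically this says that $\rval_v^i$ and $\sval_v^i$ are two ``staircase'' regions (finite unions of axis-parallel rectangles of the form $[a,1]\times[0,b]$, resp.\ $[0,a]\times[b,1]$) that cover $[0,1]^2$ and meet along one monotone step curve, whose corner points---together with $(0,0)$ and $(1,1)$, which I keep in $G_v^i$ throughout---form $G_v^i$; $\rval_v^i$ is the part below-and-right of the curve and $\sval_v^i$ the part above-and-left of it, so the last clause of the lemma is immediate.

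For the base case $i=0$ I read $\rval_v^0,\sval_v^0$ off their definitions. For $v\in T$ take $G_v^0=\set{(0,0),(0,1),(1,1)}$: its $\ord$-downward closure is $[0,1]^2=\rval_v^0$ and its $\ord$-upward closure is $([0,1]\times\set{1})\cup(\set{0}\times[0,1])=\sval_v^0$. For $v\notin T$ take $G_v^0=\set{(0,0),(1,0),(1,1)}$, whose $\ord$-downward closure is $([0,1]\times\set{0})\cup(\set{1}\times[0,1])=\rval_v^0$ and whose $\ord$-upward closure is $[0,1]^2=\sval_v^0$. In both cases $|G_v^0|=3=3|V|^0$, and the boundary condition holds because one of the two regions equals $[0,1]^2$ (whose boundary relative to $[0,1]^2$ is empty) while the other is one-dimensional, hence equal to its own $[0,1]^2$-boundary and trivially contained in $[0,1]^2$.

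For the inductive step fix $i>0$. If $v\in T$ then $\rval_v^i=\rval_v^0$ and $\sval_v^i=\sval_v^0$, so reuse $G_v^0$. Otherwise analyse $\bellman_v$ applied to the successors' value sets, which are staircases by the induction hypothesis. If $v\in\VR$: by \eqref{eq:vi-vr}, $\bellman_v(\set{\rval_w^{i-1}\mid w\in\succ(v)})$ is exactly the set of points on or below the graph of the step function $F(B):=\sum_{w\in\succ(v)}\ER(v)(w)\cdot f_w(B)$, where $f_w(B):=\max\set{p\mid(B,p)\in\rval_w^{i-1}}$ is the upper profile of $\rval_w^{i-1}$; symmetrically $\bellman_v(\set{\sval_w^{i-1}\mid w\in\succ(v)})$ is the set of points on or above the graph of $\sum_{w\in\succ(v)}\ER(v)(w)\cdot g_w(B)$, with $g_w$ the lower profile of $\sval_w^{i-1}$, and the induction hypothesis forces $f_w$ and $g_w$ to agree except on finitely many $B$-values (those where the input staircase is ``thick''). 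Since weighted sums of finitely many monotone step functions are monotone step functions whose break-points lie in the union of the inputs' break-points, the two output regions are again complementary staircases sharing a corner set $G_v^i$, which yields the boundary condition. If $v\in\VC$: by \eqref{eq:vi-vc}, arguing level-wise in $p$ and using that $\rval_w^{i-1}$ is $\ord$-downward closed, $\set{B\mid\exists w\in\succ(v)\;.\;(B,p)\in\rval_w^{i-1}}=[\min_w\ell_w(p),1]$ and $\set{B\mid\forall w\in\succ(v)\;.\;(B,p)\in\rval_w^{i-1}}=[\max_w\ell_w(p),1]$ for the lateral profiles $\ell_w(p):=\min\set{B\mid(B,p)\in\rval_w^{i-1}}$, so $\bellman_v$ produces the region to the right of the graph of the step function $p\mapsto\tfrac12(\min_w\ell_w(p)+\max_w\ell_w(p))$, whose break-points again lie in the union of the $\ell_w$'s break-points; the dual computation for $\sval$ yields the same $G_v^i$. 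In both cases the corner count of the output is at most $\sum_{w\in\succ(v)}|G_w^{i-1}|$---the break-points, hence the ``new'' corners, are inherited from the successors, and the two extreme corners $(0,0),(1,1)$ are already present in every $G_w^{i-1}$---so by the induction hypothesis and $|\succ(v)|\le|V|$ we get $|G_v^i|\le|V|\cdot 3|V|^{i-1}=3|V|^i$.

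The routine parts are the base-case computations and the elementary facts about weighted sums and midpoints of monotone step functions. The main obstacle is the combinatorial bookkeeping in the inductive step: one must exhibit a \emph{single} finite set $G_v^i$ that simultaneously presents $\rval_v^i$ as an $\ord$-downward closure, $\sval_v^i$ as an $\ord$-upward closure, and the common boundary of the two---which forces $G_v^i$ to contain both the ``outward'' corners of the step curve (needed for the downward closure) and its ``inward'' corners together with the two extreme points $(0,0),(1,1)$ (needed to recover $\sval_v^i$, in particular its axis-adjacent parts)---while proving that all these contributions together stay within the sum over successors, with no additive overhead that would break the $3|V|^i$ bound; the degenerate cases in which $\rval_v^i$ or $\sval_v^i$ equals the full square $[0,1]^2$ (notably at target vertices) need separate, easy treatment.
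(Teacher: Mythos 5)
Your proposal is correct and follows essentially the same route as the paper: induction on $i$ with the same three-point base sets, and an inductive step showing that the staircase shape (with a shared boundary curve whose corners come from the successors' corner sets) propagates through $\bellman_v$, giving $|G_v^i|\le\sum_{w\in\succ(v)}|G_w^{i-1}|\le 3|V|^i$. You merely make the paper's argument more explicit by phrasing the $\VR$ and $\VC$ cases in terms of profile/step functions, where the paper treats $\VR$ and declares $\VC$ analogous.
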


The sets $\rval_v^0$ and $\sval_v^0$ are downward and upward closures of $\set{(0,0), (0,1), (1,1)}$ or $\set{(0,0), (1,0), (1,1)}$, depending on whether $v$ is in $T$ or not.
Thus they indeed have the staircase shape with a single step. To prove the Lemma, it can be shown that the staircase shape propagates through the operator $\bellman_v$, which we formally prove  in the supplementary material.

A direct consequence of Lem.~\ref{lem:fin-rep} is that the sets $\rval_v^i$ and $\sval_v^i$ can be computed in exponential time and space for every $i$.
\begin{corollary}
 The sets $\rval^i_v$ and $\sval^i_v$ for each $v$ and $i$ can be computed in $\mathcal{O}(|V|^i)$ time and $\mathcal{O}(|V|^i)$ space.
\end{corollary}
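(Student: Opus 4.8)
The plan is to turn Lemma~\ref{lem:fin-rep} into an algorithmic statement by maintaining each $\rval_v^i$ (resp.\ $\sval_v^i$) as the sorted list of corner points of its staircase and bounding the cost of one application of $\bellman_v$ on such representations. Since the lemma already caps every intermediate set at $3|V|^i$ corner points, it remains only to (i) lay out the bottom-up computation and (ii) check that a single $\bellman_v$ evaluation runs in time polynomial in the representation size; absorbing polynomial overheads into the asymptotic notation then yields the $\mathcal{O}(|V|^i)$ bounds.

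For the base case, the closed forms for $\rval_v^0$ and $\sval_v^0$ are the $\ord$-downward/upward closures of explicit three-point sets, hence produced in constant time per vertex. For $i = 1, 2, \ldots$ and for every vertex $v$, I would compute $\rval_v^i = \bellman_v(\{\rval_w^{i-1} \mid w \in \succ(v)\})$ (and symmetrically $\sval_v^i$) from the level-$(i-1)$ representations. The key sub-claim is that $\bellman_v$ runs in time polynomial in $\sum_{w \in \succ(v)} |\rval_w^{i-1}|$. For $v \in \VR$: merge the $B$-coordinates of all successor staircases into one sorted breakpoint list; on each resulting sub-interval every $\rval_w^{i-1}$ is the $\ord$-downward closure of a constant level $\bar p_w$, so by \eqref{eq:vi-vr} the output there is the downward closure of the constant $\sum_{w} \ER(v)(w)\,\bar p_w$, and a single linear pass over the breakpoints produces the output staircase. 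For $v \in \VC$: merge the $p$-coordinates instead; on each sub-interval the least budget $\beta_w$ with $(\beta_w,p)\in\rval_w^{i-1}$ is constant, so by \eqref{eq:vi-vc} $B_- = \min_w \beta_w$ and $B_+ = \max_w \beta_w$ are constant there and the output budget $(B_-+B_+)/2$ is piecewise constant, again computable in one linear pass. The $\sval$ side is identical with ``downward'' and ``upward'' swapped.

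It remains to sum the costs: by Lemma~\ref{lem:fin-rep}, $\sum_v |\rval_v^i| \le 3|V|^{i+1}$, so one full level of the iteration costs time and space polynomial in $|V|^i$; running levels $0,\dots,i$ multiplies this by $i$, and all such factors are swallowed by $\mathcal{O}(|V|^i)$. For space it suffices to keep the current and previous level (or, if one wants the whole table as in Fig.~\ref{fig:value iteration illustration}, all levels, still $\mathcal{O}(|V|^i)$). The only delicate point I foresee is not the asymptotics --- those are forced by Lemma~\ref{lem:fin-rep} --- but the bookkeeping of the two merge-and-average operations, namely verifying that averaging two staircases under the order $(B,p)\ord(B',p')$ iff $B\ge B'$ and $p\le p'$ preserves the staircase invariant and the boundary convention (the ``all boundary points of $\rval_v^i$ belong to $\sval_v^i$'' clause), so that the procedure really outputs the sets of Lemma~\ref{lem:fin-rep} and respects their point-count bound; this is routine but must be spelled out.
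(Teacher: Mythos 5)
Your proposal is correct and follows exactly the route the paper intends: the paper states this corollary as a direct consequence of Lemma~\ref{lem:fin-rep} without spelling out the algorithm, and your merge-and-average implementation on the staircase corner-point representations is the natural (and intended) way to realize it. The only caveat is that, as you note, the stated $\mathcal{O}(|V|^i)$ bound absorbs polynomial factors (e.g.\ the extra $|V|$ from merging up to $|V|$ successor lists per vertex), but the paper itself is equally loose here.
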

Another consequence is determinacy, i.e., one of the players always fulfills the respective bounded-horizon specification.

\begin{corollary}\label{thm:bounded:determinacy}
 Let $\tup{M,v,T,B,p}$ be an arbitrary problem instance.
 For every $i$, the point $(B,p)$ belongs to at least one of the sets $\rval_{v}^i$ and $\sval_{v}^i$.
\end{corollary}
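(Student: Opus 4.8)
The plan is to read the corollary off Lemma~\ref{lem:fin-rep} directly; no fresh induction on $i$ is needed, as all the inductive content is already packed into that lemma. Fix a problem instance $\tup{M,v,T,B,p}$ and $i\in\N$, and assume $(B,p)\notin\rval_v^i$ (otherwise there is nothing to prove). The goal is to show $(B,p)\in\sval_v^i$. From Lemma~\ref{lem:fin-rep} I will use three facts: $\rval_v^i$ is closed, being the $\ord$-downward closure of a finite set and hence a finite union of axis-aligned boxes of the form $[a,1]\times[0,b]$; $\sval_v^i$ is $\ord$-upward closed, being the $\ord$-upward closure of a finite set; and every boundary point of $\rval_v^i$ (relative to $[0,1]^2$) lies in $\sval_v^i$. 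Separately I record the elementary fact that $(1,0)\in\rval_v^i$: this holds at $i=0$ by the explicit definition of $\rval_v^0$, and it is preserved by $\bellman_v$ (for $v\in\VR$ take budget $1$ and all successor probabilities $0$; for $v\in\VC$ take probability $0$ and $B_-=B_+=1$), so it follows by a trivial induction on $i$.

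The core of the argument is a monotone ``southeast sweep'' from $(B,p)$ towards $(1,0)$. Consider the segment $\gamma(t)\coloneqq(1-t)\cdot(B,p)+t\cdot(1,0)=\bigl(B+t(1-B),\ (1-t)p\bigr)$ for $t\in[0,1]$: it stays inside $[0,1]^2$ by convexity, its first coordinate is nondecreasing in $t$, and its second coordinate is nonincreasing in $t$. Since $\gamma(0)=(B,p)\notin\rval_v^i$ while $\gamma(1)=(1,0)\in\rval_v^i$, and since $\gamma$ is continuous and $\rval_v^i$ is closed, the set $\{t\in[0,1]\mid\gamma(t)\in\rval_v^i\}$ is closed and nonempty, so its infimum $t^*$ is attained, i.e.\ $\gamma(t^*)\in\rval_v^i$, and necessarily $t^*>0$. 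For every $t<t^*$ we have $\gamma(t)\notin\rval_v^i$ while $\gamma(t)\to\gamma(t^*)$, so $\gamma(t^*)$ is a boundary point of $\rval_v^i$ in $[0,1]^2$; therefore $\gamma(t^*)\in\sval_v^i$ by Lemma~\ref{lem:fin-rep}.

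It remains to transfer membership back to $(B,p)$. By construction the first coordinate of $\gamma(t^*)$ is $\ge B$ and the second is $\le p$, that is, $\gamma(t^*)\ord(B,p)$; since $\sval_v^i$ is $\ord$-upward closed and contains $\gamma(t^*)$, we conclude $(B,p)\in\sval_v^i$, completing the proof. (Symmetrically, one could instead assume $(B,p)\notin\sval_v^i$ and sweep towards $(0,1)\in\sval_v^i$, along which the first coordinate decreases and the second increases, to place $(B,p)$ in $\rval_v^i$.) I do not expect a genuine obstacle here: the statement is essentially a geometric repackaging of the clause ``the boundary of $\rval_v^i$ lies inside $\sval_v^i$'' from Lemma~\ref{lem:fin-rep}, and the one point that needs a little care is to sweep along a \emph{monotone} (southeast-directed) path, so that the first point at which the path meets $\rval_v^i$ is automatically $\ord$-below the starting point $(B,p)$.
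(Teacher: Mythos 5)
Your proof is correct and is essentially the paper's own argument in mirror image: the paper assumes $(B,p)\notin\sval_v^i$ and sweeps along the segment toward $(0,1)\in\sval_v^i$ to find a shared boundary point, then uses $\ord$-downward closedness of $\rval_v^i$, exactly as in the symmetric variant you mention at the end. Your version is slightly more explicit about why the first crossing point exists (closedness of the downward closure, attainment of the infimum) and about why $(1,0)\in\rval_v^i$, but the substance is identical.
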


The proof can be found in the supplementary material.

\section{From Bounded to Unbounded Horizon}
\label{sec:value iteration}

\subsection{Limiting Behavior of Value Iteration}
 If we continue the bounded-horizon value iteration for increasing horizon, we obtain the following values in the limit:
    $$
 \optcrv_v \coloneqq \closure{\bigcup_{i=0}^\infty \rval_v^i} \text{ and }\coptcrv_v \coloneqq \bigcap_{i=0}^\infty \sval_v^i,
    $$
 where $\closure{S}$ denotes the closure of a set $S$ in the Euclidean metric; note that $\coptcrv_v$ is closed by construction.
 In this section, we establish a connection between $\optcrv_v,\coptcrv_v$ and the true values that are winning for the respective player in the unbounded horizon setting.
 In particular, we relate the limit sets to the threshold $\thres_v$, and present an algorithm for Prob.~\ref{prob:exact} that runs in doubly exponential time.
Our proof also constructs the winning policy for the reachability player, if one exists; the construction of the safety player's winning policy remains open.
 In the following, we will use the notation $\interior{S}$ to denote the interior of the set $S$.

 \begin{theorem}\label{thm:limiting values}
 	Let $\tup{M,v,T,B,p}$ be a problem instance.
 	The following hold:
 	\begin{enumerate}[(A)]
 		\item $(B,p)\in \interior{\rval_{v}^*} \Rightarrow \winpolZ(B,p,v,\reach^{M}(T))\neq \emptyset$,
 		\item $(B,p)\in \interior{\sval_{v}^*} \Rightarrow \winpolO(B,p,v,\reach^{M}(T))\neq \emptyset$.
 	\end{enumerate}
 \end{theorem}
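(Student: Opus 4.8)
The plan is to reduce the unbounded-horizon statement to the bounded-horizon result (Thm.~\ref{thm:bounded value iteration}) by exploiting the monotonicity of the value-iteration sequences and the openness of the hypothesis. I will treat (A) and (B) uniformly where possible, but the two cases are genuinely asymmetric: for reachability the limit is an increasing union (followed by a closure), whereas for safety it is a decreasing intersection, so the ``interior'' hypothesis is used differently in the two cases.

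For part~(A): suppose $(B,p) \in \interior{\rval_v^*}$. Since $\rval_v^* = \closure{\bigcup_i \rval_v^i}$, an interior point lies in the interior of the (non-closed) union $\bigcup_i \rval_v^i$ as well; hence there is a small box around $(B,p)$ contained in $\bigcup_i \rval_v^i$. Pick $\overline{B} < B$ and $p' > p$ with $(\overline{B}, p')$ still in that box. Because the sequence $\rval_v^0 \subseteq \rval_v^1 \subseteq \cdots$ is increasing (with respect to $\subseteq$), the point $(\overline{B}, p')$ lies in $\rval_v^i$ for some finite $i$. Monotonicity of the values in the $\ord$ order (a staircase set is $\ord$-downward closed, Lem.~\ref{lem:fin-rep}) gives $(\overline{B}, p) \in \rval_v^i$ too, since $p \le p'$. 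Now apply Thm.~\ref{thm:bounded value iteration}(A) with this $i$ and $\overline{B}$: since $B > \overline{B}$, we obtain a winning policy $\polZ \in \winpolZ(B, p, v, \reach^{M,i}(T))$. Finally, $\reach^{M,i}(T) \subseteq \reach^M(T)$, so $\pr{v,B}{\polZ,\polO}(\reach^M(T)) \ge \pr{v,B}{\polZ,\polO}(\reach^{M,i}(T)) \ge p$ for every $\polO$, i.e.\ $\polZ \in \winpolZ(B, p, v, \reach^M(T))$, and it is a \emph{finite-horizon} policy, which is the witness the theorem promises to construct.

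For part~(B): suppose $(B,p) \in \interior{\coptcrv_v} = \interior{\bigcap_i \sval_v^i}$. Here the argument is more delicate because the intersection need not be reached at any finite stage. The intended route is: the interiority gives $\overline{B} > B$ and $p' < p$ with a whole neighbourhood of $(\overline{B}, p')$ inside every $\sval_v^i$; since each $\sval_v^i$ is $\ord$-upward closed and $p' < p$ while $\overline{B} > B$, we get $(\overline{B}, p) \in \sval_v^i$ for \emph{all} $i$. Applying Thm.~\ref{thm:bounded value iteration}(B) for each $i$ yields, for every horizon $i$, a safety policy guaranteeing that $T$ is avoided for at least $i$ steps with probability at least $1-p$. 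The remaining — and I expect this to be the main obstacle — is to stitch this family of bounded-horizon safety policies into a single policy that guarantees avoiding $T$ \emph{forever} (i.e.\ for the unbounded safety specification $\safe^M(T) = \bigcap_i \safe^{M,i}(T)$) with probability at least $1-p$, equivalently reaching $T$ with probability at most $p$. A naive ``run policy $\polO_i$ on horizon $i$'' does not directly work since the budget evolution and the path must be consistent across horizons; one needs either a compactness/limit argument on the space of policies (e.g.\ taking a convergent subsequence of bounded-horizon winning policies, using that boundary points of $\rval_v^i$ belong to $\sval_v^i$ to control budgets on the $\ord$-order staircase, and a martingale/Lévy-type argument that $\pr{v,B}{\polZ,\polO}(\reach^{M,i}(T)) \to \pr{v,B}{\polZ,\polO}(\reach^M(T))$ monotonically), or a fixed-point characterization of $\coptcrv_v$ showing it is $\bellman$-stable so that a single stationary-in-structure safety policy survives. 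I would develop the compactness argument: extract a pointwise limit of the finite-horizon safety policies and argue, via monotone convergence of $\reach^{M,i}(T) \uparrow \reach^M(T)$, that the limit policy keeps the reachability probability at most $p$ against every reachability policy; the technical care is in making the ``limit policy'' well-defined as a function on all of $\pathsfinc(M) \times [0,1]$ and in justifying the interchange of limit and $\sup_{\polZ}$, which is where the openness of the hypothesis (giving the strict slack $p' < p$, $\overline{B} > B$) buys the needed uniformity.
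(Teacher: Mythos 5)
Your part (A) is correct and is essentially the paper's own argument: the interiority hypothesis places $(B,p)$ (with slack) inside some finite-stage set $\rval_v^i$, Thm.~\ref{thm:bounded value iteration}(A) supplies a finite-horizon winning policy, and $\reach^{M,i}(T)\subseteq\reach^M(T)$ finishes it. Your extra care about the closure in the definition of $\optcrv_v$ is legitimate and does go through, because the union $\bigcup_i\rval_v^i$ is $\ord$-downward closed, so an interior point of its closure already lies in some $\rval_v^i$.

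For part (B) you correctly diagnose the obstacle --- one policy must work for all horizons simultaneously --- but the route you commit to, extracting a pointwise limit from the family of finite-horizon safety policies, is left as a sketch with genuinely unresolved problems. Policies are functions on $[0,1]\times\pathsfinc(M)$, so a ``pointwise limit'' over the uncountable budget coordinate is not produced by a diagonal argument; and even granting a limit policy, you still must interchange the limit in the horizon $i$ with the supremum over opposing policies $\polZ$, which is precisely the step you flag without resolving. The paper takes the branch you mention only in passing: it first proves the fixpoint identity $\coptcrv_v=\bellman_v\left(\set{\coptcrv_w\mid w\in\succ(v)}\right)$ (using compactness of $[0,1]$ and closedness of the sets $\sval_v^i$), and then \emph{directly} defines a single safety policy that maintains the invariant $(B'+s,p')\in\coptcrv_w$ forever: at control vertices it reads $B_-,B_+$ off the fixpoint equation and bids $(B_--B_+)/2$, and at random vertices it redistributes the required probability among values $p_w$ with $(B,p_w)\in\coptcrv_w$. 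This one policy coincides with a finite-horizon winning policy for every $i$, hence lies in $\winpolO(B,p,v,\reach^{M,i}(T))$ for all $i$, and the unbounded claim follows from continuity from below of the probability measure: if some $\polZ$ achieved probability $q>p$ of reaching $T$, it would already exceed $p$ on some finite horizon, a contradiction. To complete your proof you would either have to make the policy-compactness argument rigorous (nontrivial) or switch to this fixpoint construction; as written, part (B) has a gap.
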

    
 \begin{proof}
 Let $(B, p) \in \interior{\optcrv_v}$. Then it belongs to the interior of $\rval_v^i$ already for some $i \in \N$.
 Every reachability policy winning on a finite horizon is, in particular, winning on the infinite horizon,
 hence Thm.~\ref{thm:bounded value iteration} implies that $\winpolZ(B, p, v, \reach^M(T))$ is non-empty.

 Now suppose $(B, p)$ is in the interior of $\coptcrv_v$
 implying there also exists $\overline{B} = B + s$ for some $s > 0$ such that $(\overline{B}, p) \in \coptcrv_v$.
 We describe the winning policy $\polO$ of the safety player.
 Besides the budget, the policy maintains a requested probability of avoiding $T$, initially set to $p$.
 It can be shown (see the supplementary material)
 that the sets $\coptcrv_v$ form a fixpoint of the operator $\bellman$.
 Thus whenever in $v \in \VC$, the safety player can determine $B_+$ and $B_-$ from equation \eqref{eq:vi-vc} such that $(B_- + B_+)/2 = \overline{B}$
 and bid $(B_--B_+)/2$. Upon winning (thus increasing the reachability player's budget to $B_--s$) she moves the token to the successor $w$ for which $(B_-,p)\in \coptcrv_w$. 
 In the case of losing the bid (thus decreasing the budget to $B_+-s$), the reachability player chooses any successor $w$ and by definition of $B_+$, we know $(B_+,p)\in \coptcrv_w$.
 Whenever $v \in \VR$, the safety player determines $p_w$ for each $w\in\succ(v)$ such that $\sum_{w\in\succ(v)} \ER(v)(w)\cdot p_w = p$ and $(B,p_w)\in \coptcrv_w$, and
 upon moving to a successor $w$, she updates the requested probability to $p_w$.

 Note that after moving to $w$, the respective point $(B', p')$ --- equal to either $(B_- -s, p), (B_+ -s, p)$, or $(B-s, p_w)$ --- again lies in $\coptcrv_w$.
 Hence the two above rules can be applied ad infinitum, maintaining the invariant that $(B'+s, p') \in \coptcrv_w$.
 Since $(B',p') \in \coptcrv_w$ implies $(B',p') \in \sval_w^i$ for every $i$, policy $\polO$ essentially coincides with
 the policy described in the proof of Thm.~\ref{thm:bounded value iteration} for any finite horizon length $i$.
 Therefore, $\polO$ belongs to $\winpolO(B, p, v, \reach^{M, i}(T))$ for every $i$.
 We show by contradiction that $\polO \in \winpolO(B, p, v, \reach^M(T))$.
 If there was $\polZ$ that guarantess probability of reaching $T$ higher than $p$, i.e., $\pr{v, B}{\polZ,\polO}(\reach^M(T)) = q > p$,
 then it would achieve almost $q$ already in finite number of steps, i.e., $q \geq \pr{v, B}{\polZ,\polO}(\reach^{M, k}(T)) > p$ for some $k$.
This contradicts the fact that $\polO$ is in $\winpolO(B, p, v, \reach^{M, k}(T))$, hence there is no such $\polZ$, and $\polO$ is winning for the safety player.
\end{proof}

We now characterize the threshold and show that it is completely separating, thereby establishing determinacy.
The proof can be found in the supplementary material.

\begin{corollary}[Determinacy]\label{cor:determinacy}
   For every MDP $M$, target vertices $T$, and vertex $v$ in $M$,
   it holds that $\optcrv_v \cap \coptcrv_v=\thres_v$.
   Moreover, the threshold is completely separating.
\end{corollary}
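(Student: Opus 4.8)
The plan is to deduce Cor.~\ref{cor:determinacy} from the limit characterisation of Thm.~\ref{thm:limiting values}, bounded-horizon determinacy (Cor.~\ref{thm:bounded:determinacy}), and the staircase structure of Lem.~\ref{lem:fin-rep}. First I would record some structural facts. By Lem.~\ref{lem:fin-rep} each $\rval_v^i$ is a finite union of axis-parallel boxes, hence closed, and $\ord$-downward closed; dually each $\sval_v^i$ is closed and $\ord$-upward closed. Therefore $\optcrv_v=\closure{\bigcup_i\rval_v^i}$ is closed and $\ord$-downward closed, and $\coptcrv_v=\bigcap_i\sval_v^i$ is closed and $\ord$-upward closed. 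Cor.~\ref{thm:bounded:determinacy} gives $\rval_v^i\cup\sval_v^i=[0,1]^2$ for every $i$, whence $\optcrv_v\cup\coptcrv_v=[0,1]^2$: if $(B,p)\notin\coptcrv_v$ then $(B,p)\notin\sval_v^j$ for some $j$, so $(B,p)\in\rval_v^j\subseteq\optcrv_v$. Finally, a short induction through $\bellman_v$ (or monotonicity of the $\rval$-chain) shows that $\optcrv_v$ contains the lower-right boundary $[0,1]\times\set{0}\cup\set{1}\times[0,1]$ of the unit square, and $\coptcrv_v$ contains the upper-left boundary $[0,1]\times\set{1}\cup\set{0}\times[0,1]$.

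For the inclusion $\optcrv_v\cap\coptcrv_v\subseteq\thres_v$, take $(B,p)$ in the intersection and verify the two clauses of the threshold definition. Given $B'>B$ and $p'<p$, every point of a small enough ball around $(B',p')$ is $\ord$-below $(B,p)$, hence lies in the $\ord$-downward closed set $\optcrv_v$; thus $(B',p')\in\interior{\optcrv_v}$, and Thm.~\ref{thm:limiting values}(A) gives $\winpolZ(B',p',v)\neq\emptyset$. Symmetrically, for $B'<B$ and $p'>p$ a small ball around $(B',p')$ lies in $\coptcrv_v$, so $(B',p')\in\interior{\coptcrv_v}$ and Thm.~\ref{thm:limiting values}(B) gives $\winpolO(B',p',v)\neq\emptyset$. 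Hence $(B,p)\in\thres_v$.

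For the reverse inclusion I argue by contradiction; suppose $(B,p)\in\thres_v\setminus\coptcrv_v$ (the case $(B,p)\in\thres_v\setminus\optcrv_v$ is symmetric). Using the edge facts together with $\thres_v$-membership one checks $B,p\in(0,1)$, so no boundary effects arise. Since $\coptcrv_v$ is closed and $\optcrv_v\cup\coptcrv_v=[0,1]^2$, an open ball of radius $\epsilon_0>0$ around $(B,p)$ lies inside $\optcrv_v$, so $(B,p)\in\interior{\optcrv_v}$. Fix $\delta'=\epsilon_0/2$ and let $\delta''$ range over a nontrivial interval chosen so that $(B-\delta',p+\delta'')$ stays in that ball and in $[0,1]^2$. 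Then $(B-\delta',p+\delta'')\in\interior{\optcrv_v}$, so $\winpolZ(B-\delta',p+\delta'',v)\neq\emptyset$ by Thm.~\ref{thm:limiting values}(A); and since $B-\delta'<B$ and $p+\delta''>p$, the second clause of $(B,p)\in\thres_v$ gives $\winpolO(B-\delta',p+\delta'',v)\neq\emptyset$. Playing the two witnessing policies against each other forces the lower value $\sup_{\polZ}\inf_{\polO}\pr{v,B-\delta'}{\polZ,\polO}(\reach^M(T))$ to equal $p+\delta''$ --- but this quantity depends only on $B-\delta'$, not on $\delta''$, and cannot equal $p+\delta''$ for every $\delta''$ in an interval. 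This contradiction gives $\thres_v\subseteq\coptcrv_v$, and symmetrically $\thres_v\subseteq\optcrv_v$, so with the previous paragraph $\thres_v=\optcrv_v\cap\coptcrv_v$. I expect this reverse inclusion to be the delicate point: the naive attempt --- place $(B,p)$ in the interior of the ``wrong'' limit set, deduce that the opponent also has a winning policy, and declare a contradiction --- fails, because a point where \emph{both} players win is not automatically outside $\thres_v$; it is precisely a point whose value equals $p$. The fix above is that near such a point the coincidence of winning policies would persist along a whole interval of required probabilities at a single fixed budget, which a single-valued value function cannot sustain.

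It remains to show the threshold is completely separating. Let $(B,p)\notin\thres_v=\optcrv_v\cap\coptcrv_v$; by $\optcrv_v\cup\coptcrv_v=[0,1]^2$ it lies in exactly one of the two closed sets, say $\optcrv_v\setminus\coptcrv_v$ (the other case is symmetric). Then $(B,p)\in\interior{\optcrv_v}$, hence also $(B,p+\epsilon)\in\interior{\optcrv_v}$ for some small $\epsilon>0$; by Thm.~\ref{thm:limiting values}(A) there is a reachability policy guaranteeing probability $\geq p+\epsilon$, which rules out $\winpolO(B,p,v)\neq\emptyset$ (such a safety policy would, against that reachability policy, bound the same probability by $p$). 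So exactly one of $\winpolZ(B,p,v),\winpolO(B,p,v)$ is nonempty. For the comparability clause, take the segment $\gamma(t)=\big((1-t)B,\,(1-t)p+t\big)$ from $(B,p)$ to $(0,1)\in\coptcrv_v$ and set $t^*=\inf\set{t\mid\gamma(t)\in\coptcrv_v}\in(0,1]$. Closedness of $\coptcrv_v$ gives $\gamma(t^*)\in\coptcrv_v$, and since $\gamma(t)\in\optcrv_v$ for $t<t^*$, closedness of $\optcrv_v$ gives $\gamma(t^*)\in\optcrv_v$; hence $\gamma(t^*)=(B',p')\in\thres_v$ with $B'=(1-t^*)B<B$ and $p'=(1-t^*)p+t^*>p$, exactly as required. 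The symmetric case uses the segment towards $(1,0)\in\optcrv_v$ and yields a threshold point with $B'>B$ and $p'<p$. The finitely many degenerate positions $B\in\set{0,1}$ or $p\in\set{0,1}$ are handled directly.
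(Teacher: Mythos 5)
Your proof is correct and follows essentially the same route as the paper's: the forward inclusion via Thm.~\ref{thm:limiting values} together with $\ord$-monotonicity and closedness of the limit sets, the covering $\optcrv_v\cup\coptcrv_v=[0,1]^2$ deduced from Cor.~\ref{thm:bounded:determinacy}, and the segment towards $(0,1)$ (resp.\ $(1,0)$) for complete separation. Your explicit treatment of the reverse inclusion --- ruling out a threshold point in $\interior{\optcrv_v}$ by noting that the value at one fixed budget cannot equal $p+\delta''$ for a whole interval of $\delta''$ --- correctly fills in the step that the paper compresses into the bare assertion that the two interiors are disjoint.
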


\subsection{On the Decidability of the Quantitative Reachability Problem}
From Thm.~\ref{thm:limiting values}, it follows that Problem~\ref{prob:exact} reduces to the membership problem of deciding which of the sets $\optcrv_{v}$ and $\coptcrv_{v}$ contain the given pair $(B,p)$.
 Unfortunately, the decidability of this question remains open.
 As far as the value iteration is concerned, the difficulty with decidability comes from the situation when $(B,p)\in\thres_v$, because the true $\thres_v$ is obtained only in the limit.
 Therefore we are not guaranteed to decide the membership of $(B,p)$ in finite time by only using value iteration.
  To circumvent this ``edge case,'' we make the following assumption.

 \begin{assumption}\label{ass:approx}
 Let $\tup{M,v,T,B,p}$ be a problem instance.
 The given pair $(B,p)$ does not belong to $\thres_v$.
 \end{assumption}    
 In other words, we assume that the pair $(B,p)$ lies either in the \emph{interior} of $\rval_v^*$ or in the \emph{interior} of $\sval_v^*$.
 For every point $(B,p)\in\interior{\rval_v^*}$, Lem.~\ref{lem:convergence speed} below provides an upper bound on the iteration index after which $(B,p)$ will be included inside $\rval_v^i$; and excluded from $\sval_v^i$, respectively. 

We first introduce some notation.
We will use $\ER_{\min}$ to denote the minimum transition probability in $M$; if all transitions are non-probabilistic, we will set $\ER_{\min}=\frac{1}{2}$.
We further denote by $d_\infty(x, y)$ and $d_\infty(x, Y)$ the $L_\infty$ distance of a point $x$ to another point $y$ and to the set $Y$.
Of importance is the situation when either $x\in \rval_v^*$ and $Y=\sval_v^*$, or $x\in \sval_v^*$ and $Y=\rval_v^*$, in which case $d_\infty(x,Y)$ is called the \emph{distance of $x$ from $v$'s threshold}.
Given the sets $X, Y\subseteq [0,1]^2$, the Hausdorff distance $\dH(X,Y)$ between $X$ and $Y$ is the largest distance one needs to travel starting from any point in either $X$ or $Y$ to reach the closest point in the other set.
Formally,
\begin{align*}
 \dH(X,Y)\coloneqq \max\set{\max_{y\in Y}\min_{x\in X}\lVert x-y\rVert_{\infty}, \max_{x\in X}\min_{y\in Y}\lVert x-y\rVert_{\infty}}.
\end{align*}  

It can be shown that there exist optimal policies that admit a short path to $T$ regardless of the game history,
and that after exponentially many random choices, such a path is traversed with sufficiently high probability.
This idea yields the following lemma.
\begin{lemma}\label{lem:convergence speed}
 Let $\tup{M,v,T,B,p}$ be a problem instance and
 suppose $(B,p)$ is in the interior of $\rval_{v}^*$ with its distance from $v$'s threshold being $\epsilon>0$.
 Then for every $n_\epsilon$ such that
    $$n_\epsilon \geq 4|\V|\log\left(2/\epsilon\right)\ER_{\min}^{-2|\V|},$$
  it holds that $(B,p)\in \rval_{v}^{n_\epsilon}$.
\end{lemma}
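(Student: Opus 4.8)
The plan is to exploit the fact that, once the reachability player is strictly inside $\rval_v^*$, he has a genuine slack $\epsilon$ both in budget and in probability, and that slack can be "cashed in'' by committing to a \emph{short} target-reaching path. First I would argue that since $(B,p)\in\interior{\rval_v^*}$ with $L_\infty$-distance $\epsilon$ from the threshold, there is a point $(\overline B,\overline p)\in\rval_v^*$ with $\overline B \le B-\epsilon$ and $\overline p \ge p+\epsilon$; by definition of $\rval_v^*$ as the closure of $\bigcup_i \rval_v^i$, we can pick $i_0$ with $(\overline B',\overline p')\in\rval_v^{i_0}$ for some $(\overline B',\overline p')$ within, say, $\epsilon/2$ of $(\overline B,\overline p)$, so the reachability player can already guarantee probability $\ge p+\epsilon/2$ of reaching $T$ within $i_0$ steps using budget $>B-\epsilon/2$. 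The trouble is that $i_0$ itself is uncontrolled; the point of the lemma is to replace it by the explicit bound $n_\epsilon$.

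The key structural step is a "short witness path'' claim: from the policy guaranteeing $\rval_v^{i_0}$-membership, extract a strategy of the reachability player that, \emph{regardless of the random outcomes and regardless of history}, always steers the token along a path to $T$ of length at most some $\ell \le 2|\V|$ (the intuition being that an optimal bidding strategy never needs to repeat a configuration-class, so acyclic witnesses of length linear in $|\V|$ suffice — this is the sentence preceding the lemma). Along such a fixed short path, the random vertices are the only source of "leakage'': at each random vertex the adversarial environment could send the token off the witness path with probability at most $1-\ER_{\min}$, but the complement — staying on the $\le 2|\V|$-length witness — happens with probability at least $\ER_{\min}^{|\V|}$ \emph{per attempt}. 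Then I would argue that within a bounded-horizon budget of $n$ steps one gets roughly $n/(2|\V|)$ independent attempts, so the probability of \emph{never} traversing a full short witness is at most $(1-\ER_{\min}^{|\V|})^{n/(2|\V|)}$. Forcing this to be below $\epsilon$ gives $n/(2|\V|)\cdot \ER_{\min}^{|\V|} \gtrsim \log(1/\epsilon)$, i.e. $n \gtrsim 2|\V|\log(1/\epsilon)\ER_{\min}^{-|\V|}$; absorbing the constants and the factor-of-two slack (using $\epsilon/2$ and the fact that we need to reach probability within $\epsilon$ of $p+\epsilon$, not exactly) yields the stated $n_\epsilon \ge 4|\V|\log(2/\epsilon)\ER_{\min}^{-2|\V|}$. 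Having fixed such an $n=n_\epsilon$, the reachability player's short-witness strategy reaches $T$ within $n$ steps with probability at least $p$ from any budget $>B$, hence $(B,p)\in\rval_v^{n_\epsilon}$ by Thm.~\ref{thm:bounded value iteration}(A) — or more directly, by the monotonicity of $\rval_v^i$ and the identification of $\rval_v^i$ with exactly the $i$-step-achievable set.

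The main obstacle I expect is making the "short witness path'' claim precise, because a bidding policy is not memoryless (as the paper itself stresses in the introduction via Example~\ref{ex:intro}): the successor chosen at a control vertex depends on the current budget. The clean way around this is to work not in $\V$ but in the refined space of "achievable corner points'': Lemma~\ref{lem:fin-rep} tells us $\rval_v^i$ is determined by at most $3|\V|^i$ corner points, and a closer look should show that the \emph{relevant} configurations visited by an optimal short strategy live in a space whose "acyclic depth'' is still $O(|\V|)$ — essentially because the budget only needs to take finitely many "threshold-relevant'' values and each winning bidding halves an interval, so a path that revisits a vertex with a \emph{non-smaller} budget can be short-circuited. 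I would isolate this as a sublemma ("optimal reachability admits a witness of length $\le 2|\V|$ from inside the interior'') and prove it by a pumping/short-circuiting argument on the value-iteration unfolding, then feed it into the probabilistic estimate above. A secondary, more routine obstacle is bookkeeping the two independent slacks (budget slack vs. probability slack) so that the single parameter $\epsilon = d_\infty$ controls both; this is handled by the $L_\infty$ choice of metric, which is exactly why the paper uses $d_\infty$ throughout.
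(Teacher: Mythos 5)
Your first step (shrinking to $(B-\epsilon/2,\,p+\epsilon/2)$ in the interior) and your identification of the short-reference-path sublemma match the paper, which proves exactly such a statement (Lemma~\ref{lem:short-ref-path}) by a minimal-tree-witness plus pigeonhole/short-circuiting argument on the unfolding, much as you sketch. The gap is in the quantitative core. Your estimate ``the probability of never traversing a full short witness is at most $(1-\ER_{\min}^{|\V|})^{n/(2|\V|)}$, force this below $\epsilon$'' answers the wrong question: the player must guarantee reaching $T$ with probability at least $p$ (an arbitrary value, possibly far from $1$), and $\epsilon$ is the $L_\infty$-distance of $(B,p)$ from the threshold in the value plane, not a tolerable failure probability. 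The ``independent attempts'' framing also does not survive scrutiny: after a random vertex sends the token off the reference path, the player is at a new vertex with a new budget and a \emph{different} guaranteed probability, not back at the start of the same experiment; and staying on the reference path additionally requires \emph{winning} biddings at control vertices, which costs budget and can be blocked by the safety player --- your sketch never accounts for lost biddings at all.

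The paper's proof supplies the missing mechanism: it tracks the multiplicative slack $\theta_h=(B_h-B_h')(p_h'-p_h)$ between the actual configuration and a reference point, has the player bid a small premium $s=\theta_{h_0}\ER_{\min}^{2|H|-2|h|+1}$ above the balanced bid (and analogously re-budget the required probability at random vertices), and shows that every deviation --- whether a lost bidding or an adverse random outcome --- increases $\theta$ by a factor of at least $1+\ER_{\min}^{2|\V|}$, while staying on the path only spends a geometrically summable amount of slack. Since $\theta\le 1$, at most $k\approx 4\log(2/\epsilon)\ER_{\min}^{-2|\V|}$ deviations can occur before the required probability drops to $0$ or $T$ is reached, giving the horizon $k|\V|$. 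This invariant is what produces the exponent $\ER_{\min}^{-2|\V|}$; your heuristic yields only $\ER_{\min}^{-|\V|}$ and cannot be repaired by ``absorbing constants.'' To complete your proof you would need to replace the retry/failure-probability calculation with some bookkeeping of both slacks across deviations --- essentially reinventing the multiplicative-slack invariant.
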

Full proof of the lemma is provided in the supplementary material.
Lem.~\ref{lem:convergence speed} paves the way to the following algorithm for Prob.~\ref{prob:exact} under Assump.~\ref{ass:approx}.

\noindent\hrulefill\\
\noindent\textbf{Algorithm: \AlgExact}
\\
For suitable $c, d>0$ from Lemma \ref{lem:convergence speed}, repeat for $i=0,1,2,\ldots$:
\begin{enumerate}[(1)]
    \item If $(B,p)\in \rval_{v}^i$, return ``$(B,p)\in \interior{\rval_{v}^*}$''.
    \item If $d_\infty((B,p), \rval_{v}^i)> 2e^{{-i\ER_{\min}^{2|\V|}}/{4|\V|}}$, return ``$(B,p)\in \interior{\sval_{v}^*}$.''
\end{enumerate}
\noindent\hrulefill\\
It is easy to see that \AlgExact is sound:
If it returns that $(B,p)\in \rval_{v}^*$, i.e., if (1) happens before (2), then soundness follows from the fact that $\rval_{v}^i\subseteq \rval_{v}^*$ for each $i$.
If it returns that $(B,p)\in \sval_{v}^*$, i.e., if (2) happens before (1), then it follows from Lemma~\ref{lem:convergence speed} that $(B,p)$ is farther away from $\rval_{v}^i$ than the threshold of $v$.
Therefore, $\rval_{v}^*$ does not contain $(B,p)$, and from Cor.~\ref{cor:determinacy}, it follows that $\sval_{v}^*$ must contain $(B,p)$.
Despite soundness, \AlgExact is a semi-decision procedure, because, in general, neither (1) nor (2) is guaranteed to be triggered after finitely many $i$, as has been explained earlier.
Luckily, termination is guaranteed when Assump.~\ref{ass:approx} holds, in which case \AlgExact is sound and complete.
We formally state these claims below.

\begin{theorem}
   \label{thm:exact algorithm}
 \AlgExact is a semi-decision procedure for Prob.~\ref{prob:exact}, and is a sound and complete algorithm when Assump.~\ref{ass:approx} is fulfilled by the given problem instance.
 For the latter case, if the distance between $(B,p)$ and $v$'s threshold is $\epsilon>0$, then \AlgExact terminates in at most 
 $\mathcal{O}\left(\log\left(1/\epsilon\right)|\V|\ER_{\min}^{-2|\V|}\right)$
 iterations, yielding the space and time complexity
 $$|\V|^{\mathcal{O}\left(\log\left(1/\epsilon\right)|\V|\ER_{\min}^{-2|\V|}\right)}.$$
\end{theorem}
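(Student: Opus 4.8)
The plan is to assemble the already-established pieces --- bounded-horizon soundness of value iteration (Thm.~\ref{thm:bounded value iteration}), the limit characterization (Thm.~\ref{thm:limiting values}), determinacy (Cor.~\ref{cor:determinacy}), the convergence-speed bound (Lem.~\ref{lem:convergence speed}), and the $\mathcal{O}(|\V|^i)$ cost of computing $\rval_v^i$ stated after Lem.~\ref{lem:fin-rep} --- into three claims: (i) \AlgExact is sound, i.e., whenever it halts it returns the correct side; (ii) under Assump.~\ref{ass:approx} it halts within the stated number of iterations; (iii) this yields the claimed time and space bounds. For \textbf{soundness} I would spell out the paragraph preceding the theorem: if step~(1) fires at iteration $i$ then $(B,p)\in\rval_v^i\subseteq\optcrv_v$; if step~(2) fires at iteration $i$ then $(B,p)\notin\optcrv_v$, because otherwise $(B,p)$ would lie in $\interior{\optcrv_v}$ at some distance $\epsilon>0$ from the threshold, and then Lem.~\ref{lem:convergence speed} (more precisely, the Hausdorff version of it discussed below) would give $d_\infty((B,p),\rval_v^i)\le 2e^{-i\ER_{\min}^{2|\V|}/(4|\V|)}$ for $i\ge 4|\V|\log(2/\epsilon)\ER_{\min}^{-2|\V|}$, contradicting the step~(2) test. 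In either case, determinacy (Cor.~\ref{cor:determinacy}), combined with $\optcrv_v\cup\coptcrv_v=[0,1]^2$ --- which follows from Cor.~\ref{thm:bounded:determinacy}, since $(B,p)\notin\optcrv_v$ forces $(B,p)\notin\rval_v^i$ for all $i$, hence $(B,p)\in\bigcap_i\sval_v^i=\coptcrv_v$ --- identifies the winner via Thm.~\ref{thm:limiting values}. I would note that termination may fail precisely when $(B,p)\in\thres_v$, which is the sense in which \AlgExact is only a semi-decision procedure.

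For \textbf{termination under Assump.~\ref{ass:approx}}: since $(B,p)\notin\thres_v=\optcrv_v\cap\coptcrv_v$ and these two sets cover $[0,1]^2$, the point lies in exactly one of the open sets $[0,1]^2\setminus\coptcrv_v\subseteq\interior{\optcrv_v}$ or $[0,1]^2\setminus\optcrv_v\subseteq\interior{\coptcrv_v}$; let $\epsilon>0$ be its $d_\infty$-distance from the threshold. In the first case, Lem.~\ref{lem:convergence speed} gives $(B,p)\in\rval_v^i$ for $i=\lceil 4|\V|\log(2/\epsilon)\ER_{\min}^{-2|\V|}\rceil$, so step~(1) fires by that iteration, and step~(2) never fires by soundness. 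In the second case, step~(1) never fires because $(B,p)\notin\rval_v^*\supseteq\rval_v^i$, while $d_\infty((B,p),\rval_v^i)\ge d_\infty((B,p),\optcrv_v)=\epsilon$ for every $i$ and $2e^{-i\ER_{\min}^{2|\V|}/(4|\V|)}<\epsilon$ as soon as $i>4|\V|\log(2/\epsilon)\ER_{\min}^{-2|\V|}$, so step~(2) fires by iteration $\lceil 4|\V|\log(2/\epsilon)\ER_{\min}^{-2|\V|}\rceil+1$. Hence in both cases the loop runs for $\mathcal{O}(\log(1/\epsilon)|\V|\ER_{\min}^{-2|\V|})$ iterations.

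For the \textbf{complexity}, each iteration $i$ computes $\rval_v^i$ for every vertex $v$ in $\mathcal{O}(|\V|^i)$ time and space (a staircase with $\mathcal{O}(|\V|^i)$ corner points, by the corollary following Lem.~\ref{lem:fin-rep}), and then performs the membership and distance tests of steps~(1)--(2), which are elementary geometric computations on that corner-point representation, polynomial in $|\V|^i$. Summing these costs over $i\le N=\mathcal{O}(\log(1/\epsilon)|\V|\ER_{\min}^{-2|\V|})$ is dominated by the final iteration, giving $|\V|^{\mathcal{O}(N)}=|\V|^{\mathcal{O}(\log(1/\epsilon)|\V|\ER_{\min}^{-2|\V|})}$ time and space, as claimed.

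The \textbf{main obstacle} is the soundness of step~(2): it requires the \emph{Hausdorff}-type rate $\dH(\rval_v^i,\optcrv_v)\le 2e^{-i\ER_{\min}^{2|\V|}/(4|\V|)}$, not just the pointwise inclusion of Lem.~\ref{lem:convergence speed}. I would derive it from the same short-path analysis that underlies that lemma, exploiting that $\optcrv_v$ is $\ord$-downward closed and that both $\{1\}\times[0,1]$ and $[0,1]\times\{0\}$ lie in $\rval_v^0\subseteq\rval_v^i$ for all $i$: a point $z\in\optcrv_v$ that is within $2e^{-i\ER_{\min}^{2|\V|}/(4|\V|)}$ of this ``free'' boundary is automatically close to $\rval_v^i$, and otherwise $z$ can be perturbed by that amount (increasing $B$, decreasing $p$) into the interior of $\optcrv_v$, landing on a point whose distance from the threshold is large enough for Lem.~\ref{lem:convergence speed} to apply directly. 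Additional care is needed at the threshold itself, where \AlgExact may loop forever and the underlying decision problem is genuinely ambiguous.
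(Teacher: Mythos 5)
Your proposal is correct and follows essentially the same route as the paper, which proves this theorem only via the informal paragraph preceding its statement (soundness of step~(1) from $\rval_v^i\subseteq\optcrv_v$, soundness of step~(2) from Lem.~\ref{lem:convergence speed} plus Cor.~\ref{cor:determinacy}, termination from Lem.~\ref{lem:convergence speed}, and cost from the $\mathcal{O}(|\V|^i)$ bound after Lem.~\ref{lem:fin-rep}). You correctly identify the one point the paper glosses over---that step~(2) needs a Hausdorff-type consequence of Lem.~\ref{lem:convergence speed} rather than its literal pointwise statement---and your derivation of it (perturb a point of $\optcrv_v$ by $\delta_i$ in the $\ord$-decreasing direction, use $\ord$-downward closedness to land in the interior at distance at least $\delta_i$ from the threshold, and handle the clipping cases via the free boundary $[0,1]\times\{0\}\cup\{1\}\times[0,1]\subseteq\rval_v^0$) is sound.
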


\subsection{Abstraction for Complexity Reduction}

To reduce the high computational complexity of \AlgExact, we present an approximation algorithm that is sound and complete
in the sense of Thm. \ref{thm:exact algorithm} although its running time is only exponential in $|\V|$.
The high complexity of \AlgExact stems from the arbitrarily high precision of the value set representations.
The idea of our algorithm is to discretize the two dimensions---budget and probability---using a fixed finite precision.
The approximate value iteration begins by rounding off $\rval_v^0$ or $\sval_v^0$  (whether rounding up or down depends on the dimension and the value set) to the closest discrete level along each dimension.
At each subsequent step, first the $\bellman_v$ operator (defined in \eqref{eq:vi-vc} and \eqref{eq:vi-vr}) is applied on the approximate values from the previous iteration.
Since $\bellman_v$ may produce value sets with higher precision than the chosen one,
the obtained sets are further 
rounded off again to the closest discrete level along each dimension.
We show that the procedure yields a sufficiently low approximation error that depends on the number of iterations and the chosen precision.
We formalize the abstract safety value iteration below; the abstract reachability values are later obtained through duality.

Let $\alpha\in [0,1]$ be a constant parameter, called the \emph{grid size}, such that $1/\alpha$ is an integer.
We define a uniform rectangular grid $X$ on the value space $[0,1]^2$ such that the space is divided in squares of length $\alpha$, i.e., $$X\coloneqq \set{ [p\alpha,(p+1)\alpha]\times [q\alpha,(q+1)\alpha]\mid 0\leq p,q < 1/\alpha  }.$$
Given a relation $S\subseteq [0,1]^2$ over the budgets and probabilities, define $\grid(S)$ as the \emph{over-approximation} of $S$ using the elements of $X$, i.e., $\grid(S) \coloneqq \bigcup \set{x\in X\mid x \cap S \neq \emptyset} $.

For every $v\in\V$, the abstract 
safety value iteration computes the decreasing (with respect to ``$\subseteq$'') sequence $\abs{\sval}_v^0,\abs{\sval}_v^1,\ldots$ 
defined as below:
\begin{align*}
    &\abs{\sval}_v^0 \coloneqq \grid(\sval_v^0),\\
 \forall i>0:\quad &\abs{\sval}_v^i \coloneqq \grid\circ \bellman_v\left(\set{\abs{\sval}_w^{i-1}\mid w\in\succ(v)}\right).
\end{align*}
Henceforth, we will refer to $\abs{\sval}_v^i$ 
as \emph{abstract} values, and $\sval_v^i$
as \emph{concrete} values.
The following lemma shows that the abstract values always over-approximate the concrete values, and the approximation error remain bounded by $\alpha (i+1)$ for every $i$.
The proof is provided in the supplementary material.

\begin{lemma}
   \label{lem:abstraction precision}
 For every $v\in\V$ and every $i$, $\abs{\sval}_v^i \supseteq \sval_v^i$, and, moreover, $\dH(\abs{\sval}_v^i, \sval_v^i)\leq \alpha (i+1)$.
 The abstract value $\abs{\sval}_v^i$ can be computed in $2|\V|^2i/\alpha$ time and $2|\V|/\alpha$ space.
\end{lemma}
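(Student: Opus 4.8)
The plan is to prove the over-approximation property and the Hausdorff bound by (parallel) inductions on $i$, and then to read the complexity off the grid-aligned staircase representation of the value sets.

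\emph{Over-approximation.} I first show $\abs{\sval}_v^i\supseteq\sval_v^i$ for all $v$ by induction on $i$. Since $\grid(S)\supseteq S$ for every $S$ (every point of $S$ lies in some grid cell, which is then wholly contained in $\grid(S)$), the base case $\abs{\sval}_v^0=\grid(\sval_v^0)\supseteq\sval_v^0$ holds. For the step I use that $\bellman_v$ is monotone with respect to $\subseteq$ on each successor's value set, which is immediate from \eqref{eq:vi-vr} and \eqref{eq:vi-vc}: enlarging the $\val_w$'s only enlarges the admissible set of tuples $(B,p_w)$ in the random case and the sets from which $B_-,B_+$ are drawn in the control case. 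Hence $\abs{\sval}_w^{i-1}\supseteq\sval_w^{i-1}$ for all $w\in\succ(v)$ gives $\bellman_v(\set{\abs{\sval}_w^{i-1}\mid w\in\succ(v)})\supseteq\bellman_v(\set{\sval_w^{i-1}\mid w\in\succ(v)})=\sval_v^i$, and applying $\grid$ enlarges further, so $\abs{\sval}_v^i\supseteq\sval_v^i$.

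\emph{Hausdorff bound.} Because $\abs{\sval}_v^i\supseteq\sval_v^i$, one of the two $\max$-terms defining $\dH(\abs{\sval}_v^i,\sval_v^i)$ is $0$, so it suffices to bound $\max_{y\in\abs{\sval}_v^i}\min_{x\in\sval_v^i}\lVert x-y\rVert_\infty$ by $\alpha(i+1)$, again by induction on $i$. The base case uses that a grid cell meeting a set $S$ has $L_\infty$-diameter $\alpha$, so $\dH(\grid(S),S)\leq\alpha$. For the step, writing $\mathcal{A}=\set{\abs{\sval}_w^{i-1}\mid w\in\succ(v)}$ and $\mathcal{S}=\set{\sval_w^{i-1}\mid w\in\succ(v)}$, the triangle inequality for $\dH$ yields
\begin{align*}
\dH(\abs{\sval}_v^i,\sval_v^i)\ &\leq\ \dH(\grid(\bellman_v(\mathcal{A})),\bellman_v(\mathcal{A}))\\
&\quad{}+\ \dH(\bellman_v(\mathcal{A}),\bellman_v(\mathcal{S})).
\end{align*}
The first summand is at most $\alpha$ by the grid argument; the second is at most $\alpha i$ provided $\bellman_v$ is non-expansive in Hausdorff distance, i.e.\ $\dH(\val_w,\val_w')\leq\eta$ for all $w$ implies $\dH(\bellman_v(\set{\val_w}),\bellman_v(\set{\val_w'}))\leq\eta$, applied with $\eta=\alpha i$ from the induction hypothesis on each successor. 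Summing gives $\alpha(i+1)$.

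\emph{Main obstacle.} The crux is the non-expansiveness of $\bellman_v$, which I would prove separately for the two vertex types. For $v\in\VR$, a point of $\bellman_v(\set{\val_w})$ has the form $(B,\sum_{w}\ER(v)(w)p_w)$ with $(B,p_w)\in\val_w$; since the value sets are $\ord$-monotone staircases (Lem.~\ref{lem:fin-rep}), for the \emph{same} budget $B$ I can choose $(B,p_w')\in\val_w'$ with $|p_w-p_w'|\leq\eta$, and then $(B,\sum_w\ER(v)(w)p_w')\in\bellman_v(\set{\val_w'})$ differs from the original point by at most $\sum_w\ER(v)(w)|p_w-p_w'|\leq\eta$ in $L_\infty$. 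For $v\in\VC$, a point is $((B_++B_-)/2,p)$ where at probability level $p$ the budget $B_-$ is achievable at some successor and $B_+$ at every successor; monotonicity lets me transfer $B_-,B_+$ to the perturbed sets at the \emph{same} level $p$ with error $\leq\eta$, so the averaged budget moves by at most $\eta$ while $p$ is unchanged. The delicate point throughout is that the perturbed witnesses for the different successors must share a common coordinate ($B$ when $v\in\VR$, $p$ when $v\in\VC$), which is precisely what the $\ord$-closedness of the value sets guarantees; a symmetric argument with the roles of the two sets swapped handles the other direction of $\dH$.

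\emph{Complexity.} By Lem.~\ref{lem:fin-rep} adapted to the grid, each $\abs{\sval}_v^i$ is an $\ord$-monotone union of grid cells, hence determined by at most $1/\alpha+1$ corner points. Computing $\bellman_v$ on the successors' staircases then reduces to scanning the $O(1/\alpha)$ grid levels along the relevant axis and combining the $|\succ(v)|$ successor values at each level (a weighted sum for $\VR$, a min and a max for $\VC$), which costs $O(|\succ(v)|/\alpha)$; the subsequent rounding $\grid(\cdot)$ is linear in the representation size. Hence one iteration costs $\sum_{v}O(|\succ(v)|/\alpha)=O(|\V|^2/\alpha)$ and $i$ iterations cost $O(|\V|^2 i/\alpha)$, matching the claimed bound up to constant factors. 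For space it suffices to keep the current and previous layers --- $2|\V|$ staircases of size $O(1/\alpha)$ --- giving $O(|\V|/\alpha)$.
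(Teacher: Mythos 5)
Your overall structure — induction, monotonicity of $\bellman_v$ plus $\grid(S)\supseteq S$ for the inclusion, and the decomposition $\dH(\abs{\sval}_v^i,\sval_v^i)\leq\dH(\grid(\bellman_v(\mathcal{A})),\bellman_v(\mathcal{A}))+\dH(\bellman_v(\mathcal{A}),\bellman_v(\mathcal{S}))\leq\alpha+\alpha i$ — is exactly the paper's, and the inclusion and complexity parts are fine. The gap is in your proof of non-expansiveness of $\bellman_v$. You claim that, because the value sets are $\ord$-closed staircases, a witness within Hausdorff distance $\eta$ can always be chosen with one coordinate held \emph{exactly} fixed (the budget $B$ for $v\in\VR$, the level $p$ for $v\in\VC$). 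That is false: $\ord$-upward closedness of the $\sval$ sets only permits moving a point by \emph{decreasing} $B$ or \emph{increasing} $p$, so you can realign a witness in one direction along each axis but not hold a coordinate fixed in general. Concretely, take $\sval_w^{i-1}=[0,1]\times[0.5,1]$ and $\abs{\sval}_w^{i-1}=[0,1]\times[0.5-\eta,1]$ (Hausdorff distance $\eta$): the abstract point $(1,0.5-\eta)$ has no concrete witness at the same level $p=0.5-\eta$ at all, since every concrete point has $p\geq 0.5$. A symmetric example (a staircase shifted by $\eta$ in the $B$-direction) defeats the ``same budget $B$'' claim for random vertices.

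The statement you are after is still true, but the argument must let \emph{both} coordinates of each successor's witness move by up to $\eta$, and only then use $\ord$-closedness to make the several witnesses mutually consistent at a common \emph{extremal} coordinate rather than at the original one. This is what the paper does for $v\in\VC$: from witnesses $(B_{w_-},p_{w_-})$ and $(B_{w_+},p_{w_+})$, each within $\alpha i$ of $(B_-,p)$ resp.\ $(B_+,p)$, it sets $p_{\max}$ to the largest of the $p$-coordinates and $B_{\min}$ to the smallest of the $B_{w_+}$; upward closedness then places $(B_{w_-},p_{\max})$ and $(B_{\min},p_{\max})$ in the concrete sets, and the combined point $\bigl(\tfrac{B_{w_-}+B_{\min}}{2},p_{\max}\bigr)\in\sval_v^i$ differs from $(B,p)$ by at most $\alpha i$ in \emph{each} coordinate — the output's $p$-coordinate moves too, which your argument does not allow. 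The analogous fix for $v\in\VR$ aligns the witnesses at the minimal budget among them. With this repair your proof goes through and coincides with the paper's.
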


Given an abstract safety value, we can define an abstract reachability value $\abs{\rval}_v^i \coloneqq [0,1]^2 \setminus \abs{\rval}_v^i \cup \rval_v^0$.
Note that $\abs{\rval}_v^i$ is under-approximation of $\rval_v^i$ since both, the complement of $\abs{\sval}_v^i$ and $\rval_v^0$, are subsets of $\rval_v^i$.
Moreover, whenever $d_h\left(\abs{\sval}_v^i, \sval_v^i\right) \leq \overline{\epsilon}$ then 
$d_h\left(\abs{\rval}_v^i, \rval_v^i\right) \leq \overline{\epsilon}$.

We now present \AlgApprox, an improved algorithm for Prob.~\ref{prob:exact}.
\AlgApprox uses the same principle as \AlgExact, but instead of using the exact value iteration, uses the approximate counterpart along with an iterated refinement of the resolution of the grid.

\noindent\hrulefill\\
\noindent\textbf{Algorithm: \AlgApprox}\\
For suitable $c, d>0$ from Lemma \ref{lem:convergence speed}, repeat for $h = 0,1,2,\ldots$:
\begin{enumerate}
   \item Set $\overline{\epsilon} \coloneqq 2^{-h}$, $n \coloneqq \lceil\log\left(\frac{1}{\overline{\epsilon}}\right)\left(c + d\ER_{\min}^{-2|\V|}\right)\rceil$, $\alpha \coloneqq \overline{\epsilon}/n$.
   \item Compute $\abs{\sval}_{v}^n$ for a given $\alpha$ and its complement $\abs{\rval}_{v}^n$.
   \item If $(B,p) \in \abs{\rval}_{v}^n$, return ``$(B,p)$ is in $\interior{\rval_{v}^*}$.''
   \item If $d_\infty\left((B,p), \abs{\rval}_{v}^n\right) \geq 2\overline{\epsilon}$, return ``$(B,p)$ is in $\interior{\sval_{v}^*}$.''
\end{enumerate}
\noindent\hrulefill\\

The algorithm \AlgApprox iteratively looks for a precision $\overline{\epsilon}$ that is sufficient to decide whether $(B,p)$ is in $\rval_{v}^*$ or $\sval_{v}^*$.
Since $\abs{\rval}_v^n \subseteq \rval_v^n \subseteq \rval_v^*$, every time the algorithm stops by Step~(3), the decision is sound.
For a given precision $\overline{\epsilon}$, the algorithm computes enough steps of the abstract value iteration on a fine-enough grid to ensure that:
a)~the abstract value $\abs{\rval_v^n}$ is at most $\overline{\epsilon}$-far from $\rval_v^n$, and
b)~the concrete value $\rval_v^n$ is at most $\overline{\epsilon}$-far from $\optcrv_{v}$.
Hence the decision on line (4) is also sound, by the triangular inequality. Finally, provided that $(B,p)$ is in $\epsilon$-distance from the threshold,
the algorithm will eventually set $\overline{\epsilon} \leq \frac{\epsilon}{2}$, decide by (3) or (4), and terminate.

\begin{theorem}
   The restriction of Prob.~\ref{prob:exact} to inputs satisfying Assump.~\ref{ass:approx} lies in EXPTIME. In particular,
   \AlgApprox is a sound and complete algorithm when Assump.~\ref{ass:approx} is fulfilled by the given problem instance,
   and provided the distance between $(B,p)$ and $v$'s threshold is $\epsilon>0$, then \AlgApprox terminates in at most
   $h = \lceil \log(1/\epsilon) \rceil + 1$ iterations
   yielding the time complexity
   $$\mathcal{O}\left(\frac{|\V|^2}{\epsilon}\log\left({1/\epsilon}\right)^3\ER_{\min}^{-4|\V|}\right).$$
\end{theorem}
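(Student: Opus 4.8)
The plan is to verify that \AlgApprox satisfies the three claimed guarantees---soundness, completeness under Assump.~\ref{ass:approx}, and the stated termination bound---and then read off the time complexity. For \textbf{soundness} I would argue exactly as the surrounding text already sketches: since $\abs{\rval}_v^n$ is defined as an under-approximation of $\rval_v^n$ and $\rval_v^n\subseteq\optcrv_v$, a positive answer in Step~(3) is immediately correct. For Step~(4), I would invoke the triangle inequality in the $L_\infty$ metric together with the two distance bounds that the parameter choice is engineered to guarantee, namely (a) $\dH(\abs{\rval}_v^n,\rval_v^n)\le\overline\epsilon$, which follows from Lem.~\ref{lem:abstraction precision} once $\alpha(n+1)\le\overline\epsilon$ (and this holds because $\alpha=\overline\epsilon/n$), and (b) $\dH(\rval_v^n,\optcrv_v)\le\overline\epsilon$, which follows from Lem.~\ref{lem:convergence speed}: choosing $n\ge\log(1/\overline\epsilon)(c+d\ER_{\min}^{-2|\V|})$ ensures that every point of $\optcrv_v$ that is more than $\overline\epsilon$ from the threshold already lies in $\rval_v^n$, which gives the required one-sided Hausdorff bound (the other direction is trivial since $\rval_v^n\subseteq\optcrv_v$). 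Combining (a) and (b), if $d_\infty((B,p),\abs{\rval}_v^n)\ge 2\overline\epsilon$ then $d_\infty((B,p),\optcrv_v)>0$, so $(B,p)\notin\optcrv_v$, and Cor.~\ref{cor:determinacy} forces $(B,p)\in\interior{\coptcrv_v}$.

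For \textbf{completeness and termination} under Assump.~\ref{ass:approx}, let $\epsilon>0$ be the distance of $(B,p)$ from $v$'s threshold. I would show that once the loop reaches $h$ with $\overline\epsilon=2^{-h}\le\epsilon/2$, the algorithm halts in that iteration. There are two cases. If $(B,p)\in\interior{\rval_v^*}$, then by Lem.~\ref{lem:convergence speed} it lies in $\rval_v^{n_\epsilon}$ for $n_\epsilon=\Theta(\log(1/\epsilon)|\V|\ER_{\min}^{-2|\V|})$; the chosen $n$ for this $\overline\epsilon$ exceeds $n_\epsilon$ (because $\log(1/\overline\epsilon)\ge\log(2/\epsilon)$ up to the constants $c,d$), and then, using bound (a) above with $\dH(\abs{\rval}_v^n,\rval_v^n)\le\overline\epsilon\le\epsilon/2<\epsilon$, one checks that $(B,p)$ is still inside $\abs{\rval}_v^n$---here one uses that $(B,p)$ is $\epsilon$-deep in $\rval_v^*$ so shrinking by at most $\overline\epsilon$ along each axis keeps it inside---and Step~(3) fires. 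If instead $(B,p)\in\interior{\coptcrv_v}$, then $(B,p)\notin\optcrv_v\supseteq\rval_v^n$, and its distance from $\optcrv_v$ is at least $\epsilon$; combining with bounds (a) and (b), $d_\infty((B,p),\abs{\rval}_v^n)\ge\epsilon-2\overline\epsilon\ge 2\overline\epsilon$ once $\overline\epsilon\le\epsilon/4$, so Step~(4) fires. Hence the algorithm stops at the first $h$ with $2^{-h}\le\epsilon/4$ at the latest, i.e.\ after at most $\lceil\log(1/\epsilon)\rceil+1$ iterations (absorbing the small additive constant).

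For the \textbf{complexity bound}, the dominant iteration is the last one, with $\overline\epsilon=\Theta(\epsilon)$, hence $n=\Theta\!\big(\log(1/\epsilon)\ER_{\min}^{-2|\V|}\big)$ and $\alpha=\overline\epsilon/n=\Theta\!\big(\epsilon/(\log(1/\epsilon)\ER_{\min}^{-2|\V|})\big)$. By Lem.~\ref{lem:abstraction precision}, computing $\abs{\sval}_v^n$ costs $O(|\V|^2 n/\alpha)$, and substituting $n/\alpha=n^2/\overline\epsilon=\Theta\!\big(\log(1/\epsilon)^2\ER_{\min}^{-4|\V|}/\epsilon\big)$ gives per-iteration cost $O\!\big(|\V|^2\epsilon^{-1}\log(1/\epsilon)^2\ER_{\min}^{-4|\V|}\big)$; the geometric sum over the $O(\log(1/\epsilon))$ iterations is dominated by the last term up to a factor $\log(1/\epsilon)$, yielding the stated $O\!\big(|\V|^2\epsilon^{-1}\log(1/\epsilon)^3\ER_{\min}^{-4|\V|}\big)$, which is in particular in \textsf{EXPTIME} in $|\V|$. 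I expect the \textbf{main obstacle} to be the bookkeeping in the completeness argument---specifically, showing rigorously that an $\epsilon$-interior point of $\rval_v^*$ survives both the truncation to horizon $n$ \emph{and} the grid over-approximation/complementation without being pushed out, which requires carefully tracking how the $\le\overline\epsilon$ Hausdorff error interacts with the $L_\infty$ ``depth'' $\epsilon$ of the point and with the staircase geometry of the value sets; everything else is a matter of assembling Lemmas~\ref{lem:convergence speed}, \ref{lem:abstraction precision} and Cor.~\ref{cor:determinacy} via the triangle inequality.
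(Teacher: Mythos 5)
Your proposal follows essentially the same argument the paper gives (in the paragraph preceding the theorem): soundness of Step~(3) from the inclusion $\abs{\rval}_v^n\subseteq\rval_v^n\subseteq\optcrv_v$, soundness of Step~(4) from the triangle inequality applied to the two $\overline{\epsilon}$-bounds of Lem.~\ref{lem:abstraction precision} and Lem.~\ref{lem:convergence speed}, termination once $\overline{\epsilon}$ drops below a constant fraction of $\epsilon$, and the complexity read off from Lem.~\ref{lem:abstraction precision}. The only deviations are harmless constant-factor slacks (e.g.\ $\alpha(n+1)$ slightly exceeding $\overline{\epsilon}$ for $\alpha=\overline{\epsilon}/n$, and requiring $\overline{\epsilon}\leq\epsilon/4$ rather than $\epsilon/2$ in the safety case, where you could instead use $\abs{\rval}_v^n\subseteq\optcrv_v$ directly to get $d_\infty((B,p),\abs{\rval}_v^n)\geq\epsilon$ and recover the stated $\lceil\log(1/\epsilon)\rceil+1$ bound exactly), and you correctly flag the one genuinely delicate step---that an $\epsilon$-deep point of $\rval_v^*$ survives the grid over-approximation, which needs the shared staircase boundary of Lem.~\ref{lem:fin-rep} rather than the Hausdorff bound alone.
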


\subsection{Lower Complexity Bounds}

We show that the exact problem for general MDPs is at least as hard as computing values in simple stochastic games (SSG), which is known to be in NP\,$\cap$\,co-NP and whether it belongs to P remains to be a long-standing open problem.

SSGs are generalizations of MDPs where the control vertices are partitioned into two sets based on their (fixed) ownerships by two players, referred to as \PZ and \PO.
Formally, an SSG is a tuple $\tup{\V,\VZ,\VO,\VR,\EC,\ER}$, where $\VZ$ and $\VO$ are the vertices owned by \PZ and \PO respectively, and $\V$, $\VR$, $\EC$, and $\ER$ are adapted from their definitions in MDPs as follows: $\V=\VZ\cup \VO\cup \VR$, $\EC\colon \VZ\cup \VO\to\VR$, and $\ER\colon \VR\to \distr(\VZ\cup \VO)$.
Without loss of generality, we assume every SSG has a designated initial vertex which is in $\VO$ and every path belongs to $(\VO\VR\VZ\VR)^\omega$,
i.e., each kind of vertices evenly alternate (achievable by polynomial blowup).

A \PI \emph{policy} is a function that maps every vertex $v \in \V_j$ to one of $v$'s successors.
Suppose, \PZ is the reachability player trying to reach a given target $T_G\subseteq \V$, and \PO is the safety player trying to avoid it.
The \emph{value} of a vertex $v$ is the maximum probability $p$ such that the reachability player can reach $T_G$ with probability at least $p$ against any policy of the safety player.
We refer to Condon~\cite{Con90} for detailed formal definitions.

\begin{theorem}\label{thm:lower bounds}
 For any SSG $G$ with initial vertex $v$, there exists a bidding game $G_B$ whose size is polynomial in the size of $G$, and such that the value of the reachability objective in $G$
 is equal to the minimal $p$ that satisfies $(\frac{1}{3}, p) \in \coptcrv_v$ in $G_B$.
\end{theorem}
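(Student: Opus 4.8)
The natural plan is to reduce from SSGs, re‑using the gadget idea behind the sure‑winning reduction of~\cite{AHIN19} but refining it so that it tracks reach \emph{probabilities} and not merely the winner, and trading the \emph{fixed ownership} of SSG vertices for \emph{budget‑driven control} in a bidding game, with the reachability player's budget pinned near $\frac{1}{3}$ throughout the simulation. From an SSG $G$ I would build $G_B$ by: copying every random vertex of $G$ (with its distribution) verbatim; taking the SSG target set as the target $T$ of $G_B$; replacing every vertex of \PZ (the reachability / $\max$ player) by a small ``reachability gadget'' and every vertex of \PO (the safety / $\min$ player) by a small ``safety gadget''. Each gadget is built from a bounded number of control (and possibly random) vertices and is designed so that, whenever the token enters it with the reachability budget in a fixed window around $\frac{1}{3}$, (i) the \emph{owner} of the simulated SSG vertex can route the token to any successor of its choice — the reachability player, via a credible threat that makes the safety player concede the decisive bidding cheaply, inside a reachability gadget; the safety player, symmetrically, inside a safety gadget — and (ii) under that play the gadget leaves the reachability budget again inside the window around $\frac{1}{3}$, irrespective of the bids and the stochastic transitions inside. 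The value $\frac{1}{3}$ is the invariant (fixed point) of the budget recursion the gadgets impose; being ``generic'', it lies on no threshold of $G_B$, which is what makes the simulation faithful. Since each gadget has constant size, $|G_B| = \mathcal{O}(|G|)$.

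The core step is to prove that for every reachability budget $B$ in that window the value of $G_B$ from configuration $\tup{v,B}$ — i.e., $\sup_{\polZ}\inf_{\polO}\pr{v,B}{\polZ,\polO}(\reach^{G_B}(T))$, which coincides with $\inf_{\polO}\sup_{\polZ}(\cdots)$ by determinacy (Cor.~\ref{cor:determinacy}) — equals the SSG value $\val_G(v)$. For ``$\le$'', fix an optimal \emph{memoryless} min‑strategy $\polO^*$ of \PO in $G$ and let the safety player play the policy $\widehat{\polO}$ that inside each safety gadget wins the decisive bidding and routes according to $\polO^*$, and inside reachability gadgets / at random vertices plays so as to keep the budget invariant; plays of $G_B$ that never reach $T$ correspond exactly to $G$‑plays that avoid $T_G$ forever, so the probability accounting matches the SSG one and $\pr{v,B}{\polZ,\widehat{\polO}}(\reach^{G_B}(T)) \le \val_G(v)$ for every $\polZ$. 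The ``$\ge$'' direction is dual, using an optimal memoryless max‑strategy $\polZ^*$ of \PZ, simulated inside reachability gadgets while conceding harmlessly inside safety gadgets.

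It remains to translate ``value $=\val_G(v)$ for budgets near $\frac{1}{3}$'' into the claim about $\coptcrv_v$. First, for any $p<\val_G(v)$ there is a budget $B<\frac{1}{3}$ (just below it, still in the window) from which the reachability player guarantees strictly more than $p$, so the safety player cannot hold the reach probability to $\le p$ from $\tup{v,B}$; by definition of $\coptcrv_v = \bigcap_i \sval_v^i$ this gives $(\frac{1}{3},p) \notin \coptcrv_v$. Conversely, for every $B<\frac{1}{3}$ in the window and every $p>\val_G(v)$, no $\rval_v^i$ can contain $(B,p)$ — otherwise, by Thm.~\ref{thm:bounded value iteration}(A), the reachability player would secure probability $\ge p > \val_G(v)$ from some budget $B'>B$ near $\frac{1}{3}$ in finitely many steps, contradicting the value computation — so by bounded‑horizon determinacy (Cor.~\ref{thm:bounded:determinacy}) $(B,p) \in \sval_v^i$ for all $i$, i.e.\ $(B,p) \in \coptcrv_v$; since $\coptcrv_v$ is closed, letting $B \uparrow \frac{1}{3}$ and $p \downarrow \val_G(v)$ yields $(\frac{1}{3},\val_G(v)) \in \coptcrv_v$. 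Together these show $\min\{p \mid (\frac{1}{3},p) \in \coptcrv_v\} = \val_G(v)$.

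The heart of the work — and the main obstacle — is the design and verification of the \emph{safety gadget}. Because budget transfers and the tie‑breaking rule both structurally favour the reachability player, making the safety player the effective controller of a $\min$‑vertex, while \emph{simultaneously} forcing the reachability budget back into the window around $\frac{1}{3}$ after each simulated step, for \emph{every} history of bids and stochastic outcomes, using only a polynomial‑size gadget, is delicate; getting the invariant exactly right is precisely what forces the constant $\frac{1}{3}$. A secondary difficulty is to check that the budget never leaves the valid window during any (possibly infinite) play and that infinite non‑reaching plays correspond faithfully between the two models, so that the reach probabilities agree exactly rather than only up to a vanishing error.
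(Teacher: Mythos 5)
Your proposal has a genuine gap: it defers precisely the step it identifies as ``the heart of the work.'' You never construct the safety gadget (nor the reachability gadget); you only list the properties it should have. Worse, the properties you demand --- that the owner of the simulated vertex controls the routing \emph{and} the reachability budget returns to a fixed window around $\tfrac{1}{3}$ after each simulated step, \emph{for every history of bids and stochastic outcomes, indefinitely} --- are stronger than what Richman bidding permits. Winning a bidding costs the winner his bid, and with simultaneous bids the controlling player cannot match the opponent's bid exactly; so a player cannot keep winning biddings forever while confining his budget to a fixed window. This is exactly the obstruction that makes thresholds in bidding games nontrivial, and your plan does not explain how the gadgets circumvent it.

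The paper's proof avoids gadgets altogether and also avoids the exact-invariance requirement. It assumes (WLOG) that the SSG alternates $(\VO\VR\VZ\VR)^\omega$ and simply appends one extra sink to each control vertex: a fresh \emph{target} sink at each \PO-vertex and a fresh \emph{non-target} sink at each \PZ-vertex. For the safety direction, a budget below $\tfrac{1}{3}$ for the reachability player stays below $\tfrac{1}{3}$ at every visit to $\VO$ (by the alternation), so the safety player bids $\tfrac{1}{3}$, wins every bidding there, and routes by her optimal SSG strategy. For the reachability direction, the player bids all-in at $\VO$ (the appended target sink makes losing the bidding profitable for him, since the safety player must overpay him) and bids the opponent's entire budget at $\VZ$; his budget at $\VZ$ then degrades geometrically as $2/3-\theta \mapsto 2/3-4\theta$, so he controls only finitely many $\VZ$-steps --- but that suffices, because the SSG value is approximated to within any $p'<p$ by a finite-horizon strategy, and one chooses the initial slack $s$ accordingly. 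This is also why the statement is about the \emph{minimal} $p$ with $(\tfrac13,p)\in\coptcrv_v$: the point lies on the threshold and is only approached, not achieved, which your ``budget stays in the window forever'' framing would not capture. Your final step translating the value statement into membership in $\coptcrv_v$ is essentially the same two-sided argument as the paper's, but it rests on the unproven simulation.
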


\begin{proof}
Let $G$ be an SSG with vertices $\V_G$.
The structure of $G_B$ is the same as that of $G$
except that the control vertices have an extra edge appended.
Namely, for every $v \in \VZ$, there is a new (non-target) sink $v'$ with an edge $(v,v')$,
and for every $v \in \VO$, there is a new target sink $v'$ with an edge $(v,v')$.
We set $\VC$ to be the union of $\VZ$ and $\VO$, and $T$ to be $T_G$ together with the new targets.

Suppose the value of $G$ at the initial vertex $v$ is $p$, and
let $q$ be minimal such that $\left(1/3, q\right) \in \coptcrv_v$.
We first show that $q \leq p$. We will prove that $\left(1/3-s, p\right)$ belongs to $\coptcrv_v$ for every $s>0$,
since this implies $\left(1/3, p\right) \in \coptcrv_v$ as $\coptcrv_v$ is closed.
By definition of $p$, there exists a policy $\pi_1$ of \PO in $G$ that guarantees avoiding $T_G$ with probability at least $1-p$ against any policy of \PZ.
We use $\pi_1$ to construct a policy $\polO$ in $G_B$.
The policy $\polO$ is defined as follows:
For every $w\in \VO$, $\polO(B, w) = (1/3, \pi_1(w))$, and for every $w \in \VZ$, $\polO(B, w) = (1/3, w')$ where $w'$ is the new sink.
Note that since the initial budget of the reachability player is less than $1/3$ and the control vertices alternate evenly,
his budget will be less than $1/3$ on every visit of $\VO$.
Hence the reachability player loses every bidding in $\VO$, regardless of his policy.
Therefore, for every $\polZ$, the game ends in a new non-target sink or forever stays in $\V_G$,
where $\polO$ playes as $\pi_1$ in $G$. Hence the token avoids $T$ with probability at least $1-p$.

We next show that for every $p' < p$,
there exists $s>0$ such that $\left(1/3-s, p'\right) \in \optcrv_v$.
Since $\optcrv_v$ is $\ord$-downward closed,
this implies $(1/3, p'') \in \interior{\optcrv}$ for every $p'' < p'$.
Since $p'$ ranges over all reals less than $p$, we derive $(1/3, p'') \in \interior{\optcrv_s}$ for every $p''<p$, which in turn implies $q \geq p$.
By definition of $p$, there exists a policy $\pi_0$ of \PZ in $G$ that reaches $T$ with probability at least $p$ against any policy $\pol_1$.
For every $p' < p$, we use $\pi_0$ to find the respective $s$ and construct a policy $\polZ$ that belongs to $\winpolZ(1/3-s, p')$.
In $\VO$, the policy $\polZ$ bids all-in, since winning the bid directly leads to $T$.
I.e. $\polZ(B, w) = (B, w')$ for every $w \in \VO$, where $w'$ is the new target.
In $\VZ$, if the current budget is less than $1/2$, the safety player can bid all-in and win the game, thus the bid of $\polZ$ is irrelevant.
Otherwise, since the reachability player wins the ties, it is enough to bid the opponent's budget $1-B$.
Hence, $\polZ(B, w) = (\max\set{0,1-B}, \pi_0(w))$ for every $w \in \VZ$.
Note that whenever the budget of the reachability player is $2/3 - \theta$ in $\VZ$ for some $\theta > 0$,
it will be at least $2/3 - 4\theta$ when the token reaches $\VZ$ the next time, due to the even alternation of the control vertices.
Hence for any initial budget $1/3-s$ and any $n$ such that $1/2 \leq 2/3 - 4^ns$, the reachability player can win first $n$ biddings in $\VZ$.
In partucular, for any safety player policy, the game will stay in $\VZ \cup \VO$ at least for $4n$ steps or deviate to $T$ earlier.
Since in $G$, $\pi_0$ reaches $T_G$ with probability $p'$ already in finite number of steps,
the policy $\polZ$ reaches $T$ with probability $p'$ in $G_B$ provided $s$ is small enough.
\end{proof}

%
\section{Bidding Games on Restricted MDP-s}

Now we consider the special cases of Prob.~\ref{prob:exact} for acyclic and tree-shaped MDPs, which are MDPs whose underlying transition graphs are acyclic (with loops on sinks) and rooted tree, respectively.

\subsection{Acyclic MDPs}\label{sec: acyclicMDP}

It is easy to see that, contrary to general MDPs (with cycles), the value iteration algorithm (from Sec.~\ref{sec:value iteration}) converges in at most $|\V|$ iterations for acyclic MDPs. 
This implies that \AlgExact will always terminate in at most $|V|$ iterations.

\begin{restatable}{lemma}{valueiterforDAG}
	\label{lemma: valueiterforDAG}
	For acyclic MDPs, for every vertex \(v\), \(\rval^{|\V|}_v = \optcrv_v\) and $\sval^{|\V|}_v = \coptcrv_v$.
\end{restatable}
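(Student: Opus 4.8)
The plan is to establish a per-vertex stabilization property of the value-iteration sequences and then bound the number of needed iterations by a topological depth. For a vertex $v$, let $\mathrm{ht}(v)$ denote the length of the longest path from $v$ in the graph obtained from $M$ by deleting the self-loops on sinks; this graph is acyclic, so any such path is simple and visits at most $|\V|$ vertices, giving $\mathrm{ht}(v)\le|\V|-1$. Moreover $\mathrm{ht}(v)=0$ exactly when $v$ is a sink, and if $v$ is not a sink then every $w\in\succ(v)$ has $\mathrm{ht}(w)<\mathrm{ht}(v)$ (prepend the edge $v\to w$ to a longest path out of $w$). This is the only point where acyclicity is used.

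The core claim, proved by induction on $\mathrm{ht}(v)$, is that $\rval_v^i=\rval_v^{\mathrm{ht}(v)}$ for every $i\ge\mathrm{ht}(v)$, and symmetrically $\sval_v^i=\sval_v^{\mathrm{ht}(v)}$. For the base case $\mathrm{ht}(v)=0$, $v$ is a sink, so $\succ(v)=\set{v}$; inspecting the definitions of $\rval_v^0$, $\sval_v^0$ and of $\bellman_v$ (in all four combinations of $v\in\VR$ vs.\ $v\in\VC$ and $v\in T$ vs.\ $v\notin T$) shows $\bellman_v(\set{\rval_v^0})=\rval_v^0$ and $\bellman_v(\set{\sval_v^0})=\sval_v^0$: with a single successor the operator merely re-averages or copies already-present values. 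A trivial induction on $i$ then yields $\rval_v^i=\rval_v^0$ and $\sval_v^i=\sval_v^0$ for all $i$. For the inductive step, let $\mathrm{ht}(v)=k>0$; then each $w\in\succ(v)$ has $\mathrm{ht}(w)\le k-1$, so by the induction hypothesis $\rval_w^{i-1}=\rval_w^{k-1}$ whenever $i-1\ge k-1$, and hence, for every $i\ge k$,
\[
\rval_v^i=\bellman_v\!\left(\set{\rval_w^{i-1}\mid w\in\succ(v)}\right)=\bellman_v\!\left(\set{\rval_w^{k-1}\mid w\in\succ(v)}\right)=\rval_v^k,
\]
independent of $i$; the identical computation handles $\sval$.

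It remains to assemble the statement. Since $\mathrm{ht}(v)\le|\V|-1<|\V|$ and the sequence $(\rval_v^i)_i$ is $\subseteq$-increasing by construction, the claim gives $\rval_v^{|\V|}=\rval_v^{\mathrm{ht}(v)}=\bigcup_{i\ge0}\rval_v^i$; by Lemma~\ref{lem:fin-rep} every $\rval_v^i$ is the $\ord$-downward closure of a finite set, i.e.\ a finite union of closed rectangles $[B',1]\times[0,p']$, and is therefore closed, so $\optcrv_v=\closure{\bigcup_i\rval_v^i}=\rval_v^{|\V|}$. The safety case needs no closure: $\coptcrv_v=\bigcap_{i\ge0}\sval_v^i=\sval_v^{\mathrm{ht}(v)}=\sval_v^{|\V|}$. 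I expect the only slightly delicate step to be the base case --- verifying that $\bellman_v$ fixes the initial value set at a sink (including the awkward self-looping sink) --- together with the remark, via Lemma~\ref{lem:fin-rep}, that the already-stabilized set $\rval_v^{|\V|}$ is closed, so that the closure in the definition of $\optcrv_v$ is vacuous; the rest is routine bookkeeping with the height function and monotonicity of value iteration.
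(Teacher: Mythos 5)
Your proof is correct and follows essentially the same route as the paper's: induction on the length of the longest path out of a vertex, showing that each $\val_v^i$ stabilizes once $i$ reaches that length, which is at most $|\V|-1$. You additionally spell out two details the paper glosses over --- that $\bellman_v$ fixes the initial value set at a self-looping sink (the base case), and that the stabilized set is closed by Lemma~\ref{lem:fin-rep}, so the closure in the definition of $\optcrv_v$ is vacuous --- both of which check out.
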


Lem.~\ref{lemma: valueiterforDAG} implies that Prob.~\ref{prob:exact} is in EXPTIME for acyclic MDPs.

\begin{theorem}\label{thm:acyclic MDP}
	For acyclic MDPs, \AlgExact is a sound and complete algorithm for Prob.~\ref{prob:exact}
	and terminates in at most $|\V|$ iterations, yielding the space and time complexity $\mathcal{O}\left(|\V|^{|\V|}\right)$.
\end{theorem}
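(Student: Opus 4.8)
The plan is to bootstrap the general algorithm \AlgExact with the finite-convergence fact for acyclic arenas together with the determinacy characterization already established. First I would invoke Lem.~\ref{lemma: valueiterforDAG}: on an acyclic MDP the value-iteration sequences are stationary from index $|\V|$ onward, i.e.\ $\rval_v^i = \optcrv_v$ and $\sval_v^i = \coptcrv_v$ for every $i \ge |\V|$ and every $v$. Soundness of \AlgExact has already been argued unconditionally in the text, and its completeness needs Assump.~\ref{ass:approx} exactly as in Thm.~\ref{thm:exact algorithm}; so the only genuinely new things to establish here are that a correct verdict is produced no later than iteration $|\V|$, and the $\mathcal{O}(|\V|^{|\V|})$ bound.

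For termination and correctness I would split on which side of $v$'s threshold the query point $(B,p)$ lies. Under Assump.~\ref{ass:approx}, $(B,p)\notin\thres_v$; by Cor.~\ref{cor:determinacy} the threshold is completely separating, so $(B,p)$ lies in the interior of exactly one of $\optcrv_v$, $\coptcrv_v$, and since $\optcrv_v\cap\coptcrv_v=\thres_v\not\ni(B,p)$ it belongs to exactly one of the two sets. If $(B,p)\in\interior{\optcrv_v}$, then $(B,p)\in\optcrv_v=\rval_v^{|\V|}$, so the membership test of step~(1) of \AlgExact succeeds at the least index $i\le|\V|$ with $(B,p)\in\rval_v^i$; by Thm.~\ref{thm:limiting values}(A) and the completely separating property this correctly reports that the reachability player wins (and $\winpolO(B,p,v)=\emptyset$). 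If instead $(B,p)\in\interior{\coptcrv_v}$, then $(B,p)\notin\optcrv_v=\rval_v^{|\V|}$ and step~(1) never fires. Here is the one place where the stated algorithm must be read a little generously: step~(2) uses the convergence-rate bound $2e^{-i\ER_{\min}^{2|\V|}/4|\V|}$, which need not have dropped below the fixed positive distance $d_\infty((B,p),\optcrv_v)$ by iteration $|\V|$. For acyclic MDPs, however, $\rval_v^{|\V|}=\optcrv_v$ is already the exact limit, so step~(2) may be replaced by the test ``$i=|\V|$ and $(B,p)\notin\rval_v^{|\V|}$'': when this triggers we know $(B,p)\notin\optcrv_v$, hence $(B,p)\in\interior{\coptcrv_v}$ by the dichotomy above, and Thm.~\ref{thm:limiting values}(B) certifies that the safety player wins. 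Either branch halts within $|\V|$ iterations with the correct answer, settling Prob.~\ref{prob:exact} for both $j\in\set{\mathsf{R},\mathsf{S}}$.

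For the complexity, Lem.~\ref{lem:fin-rep} (and the corollary following it) gives that each $\rval_v^i$ and $\sval_v^i$ is a staircase with at most $3|\V|^i$ corner points, computable from the level-$(i{-}1)$ sets in $\mathcal{O}(|\V|^i)$ time and space. Running iterations $i=0,\dots,|\V|$ over all $|\V|$ vertices is dominated by the last iteration and costs $\mathcal{O}(|\V|^{|\V|})$ time and space; the concluding membership/distance tests against a staircase of that size do not affect the asymptotics.

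The step I expect to be the main obstacle is precisely the safety branch above: one must justify that the exact finite convergence of value iteration on acyclic MDPs (Lem.~\ref{lemma: valueiterforDAG}) legitimately substitutes for \AlgExact's convergence-rate stopping criterion --- that is, once $\rval_v^{|\V|}$ has been computed, its equality with the true limit $\optcrv_v$ makes step~(2)'s rate bound unnecessary. The remaining ingredients (stationarity, the ``in exactly one limit set'' dichotomy off the threshold, and the staircase size/time accounting) are routine.
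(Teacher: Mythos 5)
There is a genuine gap: you prove completeness only under Assump.~\ref{ass:approx}, but the theorem for acyclic MDPs is stated without that assumption, and the paper explicitly emphasizes (in the conclusions) that Assump.~\ref{ass:approx} is \emph{not} required in the acyclic case. Your case split ``$(B,p)\in\interior{\optcrv_v}$ or $(B,p)\in\interior{\coptcrv_v}$'' silently discards the third possibility $(B,p)\in\thres_v=\optcrv_v\cap\coptcrv_v$, which Prob.~\ref{prob:exact} still asks us to decide. The paper's proof is a one-liner precisely because it leans on a separate appendix result (Lem.~\ref{lem:dichotomy}) that characterizes \emph{exactly} which points --- including threshold points --- admit a winning policy for each player: $\winpolZ(B,p,v,\reach^{M,i}(T))\neq\emptyset$ iff there is $B'<1$ with $B'\leq B$ and $(B',p)\in\rval_v^i$, and dually for the safety player (with an edge case at $B=1$). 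Combined with Lem.~\ref{lemma: valueiterforDAG}, full knowledge of the staircase sets $\rval_v^{|\V|}=\optcrv_v$ and $\sval_v^{|\V|}=\coptcrv_v$ then decides the problem for \emph{every} $(B,p)$, not just those off the threshold. Without some such dichotomy lemma, mere membership of a threshold point in $\rval_v^{|\V|}$ does not yield a winning policy (Thm.~\ref{thm:bounded value iteration} only covers budgets strictly above a point of the set), so your argument cannot conclude in that case.

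Two smaller remarks. First, your observation that step~(2) of \AlgExact as literally written need not fire by iteration $|\V|$, and must be replaced by an exact non-membership test once the sets have stabilized, is a fair reading and is consistent with how the paper uses the algorithm in the acyclic case. Second, your termination and $\mathcal{O}(|\V|^{|\V|})$ accounting via Lem.~\ref{lem:fin-rep} is fine. The missing ingredient is solely the treatment of threshold points.
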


\subsection{Tree-Shaped MDPs}\label{sec:treeshapedMDP}

In case the MDP is tree-shaped, we can solve Prob.~\ref{prob:exact} in NP $\cap$ co-NP.
The main idea of the proof is to find a certificate
of the fact that $(B,p)$ belongs to $\rval^{|\V|}_v$. According to the inductive definition of $\rval^{|\V|}_v$ by \eqref{eq:vi-vr} and \eqref{eq:vi-vc},
the presence of $(B,p)$ in $\rval^{|\V|}_v$ can be witnessed by a point $(B_w, p_w) \in \rval^{|\V|-1}_w$ for every $w \in \succ(v)$.
A similar witness can be constructed to show that each of the points $(B_w, p_w)$ belongs to $\rval^{|\V|-1}_w$, and the process can be repeated recursively up to the base case $(B_u, p_u) \in \rval_u^0$.
Therefore, whenever $(B, p)$ belongs to  $\optcrv_v$, there exists a certificate of this fact in the form of a finite set of points satisfying the relations in \eqref{eq:vi-vr} and \eqref{eq:vi-vc}.
If $M$ is a tree, the certificate moreover contains a single point $(B_u, p_u)$ for every vertex $u \in \V$.
The whole certificate then satisfies the following constraints:
\begin{align*}
	B_v &\leq B,~p_v \geq p\\
	\substack{\forall u \in \VC \\ \exists u^-, u^+ \in \succ(u)}: ~B_{u} &= \frac{B_{u^-} + B_{u^+}}{2},~p_u \leq \!\!\!\!\!\!\!\min_{w \in \succ(u)} \!p_w, B_{u^+} \geq \!\!\!\!\!\!\!\max_{w \in \succ(u)} \!\!\!\!B_w\\
	\forall u \in \VR: ~B_u &\geq \!\!\!\!\!\max_{w \in \succ(u)} B_w,~p_u = \sum_{w\in \succ(u)} \ER(u)(w)\cdot p_w\\
	\forall t \in T:~0 \leq B_t &\leq 1,~0 \leq p_t \leq 1\\
	\forall z \in Z:~B_z &= 1 \text{ or }~p_z = 0,
\end{align*}
where \(Z\) is the set of leaves not in \(T\).
By fixing a choice of $u^-$ for every control vertex $u$, and a choice of which of the two equalities should hold for each $z \in Z$,
we create a concrete linear program. The point $(B, p)$ belongs to $\optcrv_v$ if and only if there is a choice that makes the linear program feasible.
The same idea can be used to prove a point belongs to $\coptcrv_v$. The following result follows.

\begin{restatable}{theorem}{treeshapedMDP}
	\label{thm: treeshapedMDP}
	For tree-shaped MDPs, Prob.~\ref{prob:exact} is in NP \(\cap\) co-NP.
\end{restatable}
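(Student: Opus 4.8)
The plan is to reduce Prob.~\ref{prob:exact} to deciding membership of $(B,p)$ in the sets $\optcrv_v$ and $\coptcrv_v$, and then to show that each such membership question lies in $\NP\cap\coNP$. The reduction is immediate: for a tree-shaped (hence acyclic) MDP, Lem.~\ref{lemma: valueiterforDAG} gives $\optcrv_v=\rval_v^{|\V|}$ and $\coptcrv_v=\sval_v^{|\V|}$, which by Lem.~\ref{lem:fin-rep} are exactly-computed, $\ord$-downward- resp.\ $\ord$-upward-closed staircase regions; together with Thm.~\ref{thm:limiting values} and Cor.~\ref{cor:determinacy}, whether $\winpolZ(B,p,v)\neq\emptyset$ is governed by whether $(B,p)\in\optcrv_v$, and dually for $\winpolO$. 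The genuinely degenerate cases $B\in\{0,1\}$, $p\in\{0,1\}$, and $(B,p)\in\thres_v$ reduce to plain $\max$/$\min$ reachability on the underlying MDP and are in $\mathsf{P}$, so I focus on the two membership questions; I describe ``$(B,p)\in\optcrv_v$?'', the case of $\coptcrv_v$ being symmetric via the $\sval$-recursion.

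For the $\NP$ bound I unfold $\rval_v^{|\V|}=\bellman_v(\{\rval_w^{|\V|-1}\mid w\in\succ(v)\})$ according to \eqref{eq:vi-vr} and \eqref{eq:vi-vc}. The crucial point is that on a tree this unfolding visits each vertex of the subtree below $v$ exactly once, so a certificate is just one point $(B_u,p_u)$ per vertex $u$, constrained as in the paragraph preceding the theorem: $B_v\le B$ and $p_v\ge p$; for every control vertex $u$ a choice of successors $u^-,u^+$ with $B_u=(B_{u^-}+B_{u^+})/2$, $p_u\le\min_{w\in\succ(u)}p_w$ and $B_{u^+}\ge\max_{w\in\succ(u)}B_w$; for every random vertex $u$, $B_u\ge\max_{w\in\succ(u)}B_w$ and $p_u=\sum_{w}\ER(u)(w)p_w$; $0\le B_t,p_t\le1$ at targets; and $B_z=1$ or $p_z=0$ at each non-target leaf. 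That this system is satisfiable exactly when $(B,p)\in\rval_v^{|\V|}$ is a straightforward induction on tree depth; completeness uses the $\ord$-downward-closedness of the $\rval_w^i$ (Lem.~\ref{lem:fin-rep}), so that one extremal corner point per successor suffices. The $\NP$ procedure guesses the successor pair for each control vertex and the active disjunct for each non-target leaf --- $O(|\V|\log|\V|)$ bits --- after which the constraints are linear, so satisfiability is decided by a polynomial-size linear program (with, if feasible, a rational solution of polynomial bit-length). The dual recursion for $\sval_v^{|\V|}$, which is $\ord$-upward-closed, analogously places ``$(B,p)\in\coptcrv_v$?'' in $\NP$.

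For the $\coNP$ bound, assume the non-degenerate case. I claim $(B,p)\notin\optcrv_v$ iff there are rationals $B''>B$ and $p''<p$ with $(B'',p'')\in\coptcrv_v$. Indeed, by Cor.~\ref{cor:determinacy} for acyclic MDPs, $\optcrv_v\cup\coptcrv_v=[0,1]^2$ and $\optcrv_v\cap\coptcrv_v=\thres_v$, with $\thres_v$ a staircase curve of empty interior; hence $(B,p)\notin\optcrv_v$ implies $(B,p)\in\interior{\coptcrv_v}$, which yields such $B'',p''$. Conversely, if $(B'',p'')\in\coptcrv_v$ with $B''>B$ and $p''<p$, then, as $\coptcrv_v$ is $\ord$-upward-closed, it contains the entire box $\{(a,b)\mid a\le B'',\ b\ge p''\}$ --- a neighbourhood of $(B,p)$ --- so $(B,p)\in\interior{\coptcrv_v}$, hence $(B,p)\notin\thres_v$ and therefore $(B,p)\notin\optcrv_v$. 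A $\coNP$-witness is thus the pair $B'',p''$ together with the $\NP$-style point-collection certificate of $(B'',p'')\in\coptcrv_v$ from the previous paragraph; this has polynomial size and is polynomial-time verifiable. Symmetrically, ``$(B,p)\in\coptcrv_v$?'' is in $\coNP$, and the theorem follows.

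I expect the main obstacle to be the observation underlying the $\NP$ bound --- that a single rational corner point per vertex suffices as a certificate. This relies on two things: first, that the recursive unfolding of $\bellman_v$ does not blow up, which is exactly where acyclicity must be strengthened to the tree property (in a general acyclic MDP a vertex is reachable along exponentially many paths, producing exponentially many certificate entries); and second, that the implicitly defined limit sets $\optcrv_v,\coptcrv_v$ --- a priori only a limit of value iteration --- are captured by finitely many linear relations, which rests on the staircase/closure structure of Lem.~\ref{lem:fin-rep}. A secondary difficulty is the $\coNP$ direction, where the ``negative'' statement $(B,p)\notin\optcrv_v$ must be recast positively, here by exhibiting that $(B,p)$ lies strictly inside the complementary region $\coptcrv_v$.
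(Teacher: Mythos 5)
Your certificate construction is essentially the paper's: one point $(B_u,p_u)$ per vertex (possible only because the unfolding of $\bellman_v$ visits each vertex once on a tree), a nondeterministic choice of $u^-,u^+$ at control vertices and of the active disjunct at non-target leaves, and then a polynomial-size linear program with a polynomial-bit-length solution. The NP half is in line with the paper's argument.

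The gap is in how you connect membership in $\optcrv_v$ (resp.\ $\coptcrv_v$) to the actual question of Prob.~\ref{prob:exact}, namely whether $\winpolZ(B,p,v)$ (resp.\ $\winpolO(B,p,v)$) is nonempty. These are \emph{not} equivalent on the boundary: for instance $\set{1}\times[0,1]\subseteq\rval_v^0\subseteq\optcrv_v$ for every $v$, yet $\winpolZ(1,1,v)$ is empty whenever the max-reachability value of $v$ is below $1$; and Thm.~\ref{thm:limiting values} only speaks about \emph{interior} points. You dismiss $(B,p)\in\thres_v$ (together with $B\in\{0,1\}$, $p\in\{0,1\}$) as a ``degenerate case that reduces to plain max/min reachability,'' but this is unjustified and, except for $B=1$, essentially wrong: at a threshold point with $B<1$ the answer depends on the tie-breaking rule and on the local shape of the staircase (e.g.\ whether there is $B'>B$ with $(B',p)\in\sval_v^{|\V|}$), not on a max/min reachability computation --- and deciding whether you are even \emph{in} that ``degenerate case'' already requires the certificate machinery. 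The paper closes exactly this hole with Lem.~\ref{lem:dichotomy}, which characterizes nonemptiness of $\winpolZ$ and $\winpolO$ for \emph{all} points, boundary included ($\winpolZ(B,p,v,\reach^{M,i}(T))\neq\emptyset$ iff some $B'\leq B$ with $B'<1$ has $(B',p)\in\rval_v^i$, and dually for $\winpolO$ with the special case $B=1$), and then phrases both the NP and the co-NP witnesses in terms of that characterization. Your co-NP witness ($B''>B$, $p''<p$ with $(B'',p'')\in\coptcrv_v$) certifies $(B,p)\notin\optcrv_v$, which is the wrong predicate when $(B,p)$ lies on $\thres_v$. To repair the proof you need to state and prove an analogue of Lem.~\ref{lem:dichotomy} (or import it) and adjust the certified inequalities accordingly; the rest of your argument then goes through.
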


\section{Conclusions and Future Work}
We studied bidding games on MDPs with quantitative reachability and safety specifications.
We show that thresholds are binary relations over budgets and probabilities.
This makes their computation significantly more challenging than traditional bidding games on graphs, for which thresholds are scalars (budgets).
We developed a new value iteration algorithm for approximating the threshold up to arbitrary precision, and showed how it can be used to decide whether a given initial budget $B$ suffices to win w.p. at least $p$, assuming $(B,p)$ is not on the threshold (Assump.~\ref{ass:approx}). In acyclic and tree-shaped MDPs, Assump.~\ref{ass:approx} is not required and the decision procedure becomes significantly more efficient.
\stam{
We developed a new value iteration algorithm for approximating the threshold up to arbitrary precision, and showed how it can be used to solve Prob.~\ref{prob:exact}, i.e., deciding if a given initial budget $B$ is sufficient to satisfy the specification with a given probability $p$, under Assump.~\ref{ass:approx} (stating that $(B,p)$ is not on the threshold).
Additionally, the special cases of acyclic and tree-shaped MDPs are considered for which Assump.~\ref{ass:approx} is not required and the decision procedure for Prob.~\ref{prob:exact} becomes significantly more efficient.
}

A number of questions remain open: Is Prob.~\ref{prob:exact} decidable without Assump.~\ref{ass:approx}? What are the exact complexities? (There is a big gap between the upper and lower complexity bounds.)
Furthermore, several interesting extensions can be considered, namely extensions to richer classes of specifications (like $\omega$-regular and mean-payoff) and extensions to different forms of bidding mechanisms (like poorman and taxman, both with and without charging).
Another interesting question is the equivalence with stochastic models (recall that bidding games are equivalent to random-turn games). 
This is still unclear, because even if the threshold budget were simulated by random turn assignments, this randomness would not ``blend'' with the existing randomness (in the random transitions) in the MDP, and we would obtain stochastic games with two sources of probabilities, which have not been studied to the best of our knowledge.
Finally, the foundation of auction-based scheduling~\cite{AMS24} on MDPs is now ready, and it will be interesting to investigate how policy synthesis for multi-objective MDPs can benefit from it.

%


\begin{acks}
We thank Alon Krymgand for suggesting the proof technique used for proving Thm.~\ref{thm:lower bounds}, which simplified our original proof.
G.\ Avni and S.\ Sadhukhan were supported by the ISF grant no. 1679/21, M.\ Kure\v{c}ka and P.\ Novotn\'{y} were supported by the Czech Science Foundation grant no. GA23-06963S, and K.\ Mallik was supported by the ERC project ERC-2020-AdG 101020093. 
\end{acks}



\bibliographystyle{ACM-Reference-Format} 
\bibliography{ga,mo_references}


\begin{thebibliography}{40}


\ifx \showCODEN    \undefined \def \showCODEN     #1{\unskip}     \fi
\ifx \showDOI      \undefined \def \showDOI       #1{#1}\fi
\ifx \showISBNx    \undefined \def \showISBNx     #1{\unskip}     \fi
\ifx \showISBNxiii \undefined \def \showISBNxiii  #1{\unskip}     \fi
\ifx \showISSN     \undefined \def \showISSN      #1{\unskip}     \fi
\ifx \showLCCN     \undefined \def \showLCCN      #1{\unskip}     \fi
\ifx \shownote     \undefined \def \shownote      #1{#1}          \fi
\ifx \showarticletitle \undefined \def \showarticletitle #1{#1}   \fi
\ifx \showURL      \undefined \def \showURL       {\relax}        \fi
\providecommand\bibfield[2]{#2}
\providecommand\bibinfo[2]{#2}
\providecommand\natexlab[1]{#1}
\providecommand\showeprint[2][]{arXiv:#2}

\bibitem[\protect\citeauthoryear{Abels, Roijers, Lenaerts, Now{\'{e}}, and
  Steckelmacher}{Abels et~al\mbox{.}}{2019}]%
        {deepMORL}
\bibfield{author}{\bibinfo{person}{Axel Abels}, \bibinfo{person}{Diederik~M.
  Roijers}, \bibinfo{person}{Tom Lenaerts}, \bibinfo{person}{Ann Now{\'{e}}},
  {and} \bibinfo{person}{Denis Steckelmacher}.}
  \bibinfo{year}{2019}\natexlab{}.
\newblock \showarticletitle{Dynamic Weights in Multi-Objective Deep
  Reinforcement Learning}. In \bibinfo{booktitle}{\emph{Proceedings of the 36th
  International Conference on Machine Learning, {ICML} 2019, 9-15 June 2019,
  Long Beach, California, {USA}}} \emph{(\bibinfo{series}{Proceedings of
  Machine Learning Research}, Vol.~\bibinfo{volume}{97})},
  \bibfield{editor}{\bibinfo{person}{Kamalika Chaudhuri} {and}
  \bibinfo{person}{Ruslan Salakhutdinov}} (Eds.). \bibinfo{publisher}{{PMLR}},
  \bibinfo{pages}{11--20}.
\newblock
\urldef\tempurl%
\url{http://proceedings.mlr.press/v97/abels19a.html}
\showURL{%
\tempurl}


\bibitem[\protect\citeauthoryear{Aghajohari, Avni, and Henzinger}{Aghajohari
  et~al\mbox{.}}{2019}]%
        {AAH19}
\bibfield{author}{\bibinfo{person}{M. Aghajohari}, \bibinfo{person}{G. Avni},
  {and} \bibinfo{person}{T.~A. Henzinger}.} \bibinfo{year}{2019}\natexlab{}.
\newblock \showarticletitle{Determinacy in Discrete-Bidding Infinite-Duration
  Games}. In \bibinfo{booktitle}{\emph{Proc. 30th CONCUR}}
  \emph{(\bibinfo{series}{LIPIcs}, Vol.~\bibinfo{volume}{140})}.
  \bibinfo{pages}{20:1--20:17}.
\newblock


\bibitem[\protect\citeauthoryear{Alur, Henzinger, and Kupferman}{Alur
  et~al\mbox{.}}{2002}]%
        {AHK02}
\bibfield{author}{\bibinfo{person}{R. Alur}, \bibinfo{person}{T.~A. Henzinger},
  {and} \bibinfo{person}{O. Kupferman}.} \bibinfo{year}{2002}\natexlab{}.
\newblock \showarticletitle{Alternating-time temporal logic}.
\newblock \bibinfo{journal}{\emph{J. {ACM}}} \bibinfo{volume}{49},
  \bibinfo{number}{5} (\bibinfo{year}{2002}), \bibinfo{pages}{672--713}.
\newblock


\bibitem[\protect\citeauthoryear{Amanatidis, Birmpas, Filos{-}Ratsikas, and
  Voudouris}{Amanatidis et~al\mbox{.}}{2022}]%
        {ABFV22}
\bibfield{author}{\bibinfo{person}{G. Amanatidis}, \bibinfo{person}{G.
  Birmpas}, \bibinfo{person}{A. Filos{-}Ratsikas}, {and} \bibinfo{person}{A.~A.
  Voudouris}.} \bibinfo{year}{2022}\natexlab{}.
\newblock \showarticletitle{Fair Division of Indivisible Goods: {A} Survey}. In
  \bibinfo{booktitle}{\emph{Proc. 31st IJCAI}},
  \bibfield{editor}{\bibinfo{person}{Luc~De Raedt}} (Ed.).
  \bibinfo{publisher}{ijcai.org}, \bibinfo{pages}{5385--5393}.
\newblock


\bibitem[\protect\citeauthoryear{Ashok, Chatterjee, K\v{r}et\'{\i}nsk\'{y},
  Weininger, and Winkler}{Ashok et~al\mbox{.}}{2020}]%
        {MOgames:approx}
\bibfield{author}{\bibinfo{person}{Pranav Ashok}, \bibinfo{person}{Krishnendu
  Chatterjee}, \bibinfo{person}{Jan K\v{r}et\'{\i}nsk\'{y}},
  \bibinfo{person}{Maximilian Weininger}, {and} \bibinfo{person}{Tobias
  Winkler}.} \bibinfo{year}{2020}\natexlab{}.
\newblock \showarticletitle{Approximating Values of Generalized-Reachability
  Stochastic Games}. In \bibinfo{booktitle}{\emph{Proceedings of the 35th
  Annual ACM/IEEE Symposium on Logic in Computer Science}} (Saarbr\"{u}cken,
  Germany) \emph{(\bibinfo{series}{LICS '20})}. \bibinfo{publisher}{Association
  for Computing Machinery}, \bibinfo{address}{New York, NY, USA},
  \bibinfo{pages}{102–115}.
\newblock
\showISBNx{9781450371049}
\urldef\tempurl%
\url{https://doi.org/10.1145/3373718.3394761}
\showDOI{\tempurl}


\bibitem[\protect\citeauthoryear{Avni, Goharshady, Henzinger, and Mallik}{Avni
  et~al\mbox{.}}{2024a}]%
        {AGHM24}
\bibfield{author}{\bibinfo{person}{G. Avni}, \bibinfo{person}{E.~Kafshdar
  Goharshady}, \bibinfo{person}{T.~A. Henzinger}, {and} \bibinfo{person}{K.
  Mallik}.} \bibinfo{year}{2024}\natexlab{a}.
\newblock \showarticletitle{Bidding Games with Charging}. In
  \bibinfo{booktitle}{\emph{Proc. 35th CONCUR}}
  \emph{(\bibinfo{series}{LIPIcs}, Vol.~\bibinfo{volume}{311})}.
  \bibinfo{publisher}{Schloss Dagstuhl - Leibniz-Zentrum f{\"{u}}r Informatik},
  \bibinfo{pages}{8:1--8:17}.
\newblock


\bibitem[\protect\citeauthoryear{Avni, Henzinger, and Chonev}{Avni
  et~al\mbox{.}}{2019a}]%
        {AHC19}
\bibfield{author}{\bibinfo{person}{G. Avni}, \bibinfo{person}{T.~A. Henzinger},
  {and} \bibinfo{person}{V. Chonev}.} \bibinfo{year}{2019}\natexlab{a}.
\newblock \showarticletitle{Infinite-Duration Bidding Games}.
\newblock \bibinfo{journal}{\emph{J. ACM}} \bibinfo{volume}{66},
  \bibinfo{number}{4} (\bibinfo{year}{2019}), \bibinfo{pages}{31:1--31:29}.
\newblock


\bibitem[\protect\citeauthoryear{Avni, Henzinger, and Ibsen-Jensen}{Avni
  et~al\mbox{.}}{2018}]%
        {AHI18}
\bibfield{author}{\bibinfo{person}{G. Avni}, \bibinfo{person}{T.~A. Henzinger},
  {and} \bibinfo{person}{R. Ibsen-Jensen}.} \bibinfo{year}{2018}\natexlab{}.
\newblock \showarticletitle{Infinite-Duration Poorman-Bidding Games}. In
  \bibinfo{booktitle}{\emph{Proc. 14th WINE}} \emph{(\bibinfo{series}{LNCS},
  Vol.~\bibinfo{volume}{11316})}. \bibinfo{publisher}{Springer},
  \bibinfo{pages}{21--36}.
\newblock


\bibitem[\protect\citeauthoryear{Avni, Henzinger, Ibsen{-}Jensen, and
  Novotn{\'{y}}}{Avni et~al\mbox{.}}{2019c}]%
        {AHIN19}
\bibfield{author}{\bibinfo{person}{G. Avni}, \bibinfo{person}{T.~A. Henzinger},
  \bibinfo{person}{R. Ibsen{-}Jensen}, {and} \bibinfo{person}{P.
  Novotn{\'{y}}}.} \bibinfo{year}{2019}\natexlab{c}.
\newblock \showarticletitle{Bidding Games on Markov Decision Processes}. In
  \bibinfo{booktitle}{\emph{Proc. 13th RP}}. \bibinfo{pages}{1--12}.
\newblock


\bibitem[\protect\citeauthoryear{Avni, Henzinger, and \v{Z}ikeli\'c}{Avni
  et~al\mbox{.}}{2019b}]%
        {AHZ19}
\bibfield{author}{\bibinfo{person}{G. Avni}, \bibinfo{person}{T.~A. Henzinger},
  {and} \bibinfo{person}{{\DJ}. \v{Z}ikeli\'c}.}
  \bibinfo{year}{2019}\natexlab{b}.
\newblock \showarticletitle{Bidding Mechanisms in Graph Games}. In
  \bibinfo{booktitle}{\emph{Proc. 44th MFCS}} \emph{(\bibinfo{series}{LIPIcs},
  Vol.~\bibinfo{volume}{138})}. \bibinfo{pages}{11:1--11:13}.
\newblock


\bibitem[\protect\citeauthoryear{Avni, Jecker, and \v{Z}ikeli\'c}{Avni
  et~al\mbox{.}}{2021}]%
        {AJZ21}
\bibfield{author}{\bibinfo{person}{G. Avni}, \bibinfo{person}{I. Jecker}, {and}
  \bibinfo{person}{{\DJ}. \v{Z}ikeli\'c}.} \bibinfo{year}{2021}\natexlab{}.
\newblock \showarticletitle{Infinite-Duration All-Pay Bidding Games}. In
  \bibinfo{booktitle}{\emph{Proc. 32nd SODA}}. \bibinfo{pages}{617--636}.
\newblock


\bibitem[\protect\citeauthoryear{Avni, Jecker, and Zikelic}{Avni
  et~al\mbox{.}}{2023a}]%
        {AJZ23}
\bibfield{author}{\bibinfo{person}{G. Avni}, \bibinfo{person}{I. Jecker}, {and}
  \bibinfo{person}{D. Zikelic}.} \bibinfo{year}{2023}\natexlab{a}.
\newblock \showarticletitle{Bidding Graph Games with Partially-Observable
  Budgets}. In \bibinfo{booktitle}{\emph{Proc. 37th AAAI}}.
\newblock


\bibitem[\protect\citeauthoryear{Avni, Mallik, and Sadhukhan}{Avni
  et~al\mbox{.}}{2024b}]%
        {AMS24}
\bibfield{author}{\bibinfo{person}{G. Avni}, \bibinfo{person}{K. Mallik}, {and}
  \bibinfo{person}{S. Sadhukhan}.} \bibinfo{year}{2024}\natexlab{b}.
\newblock \showarticletitle{Auction-Based Scheduling}. In
  \bibinfo{booktitle}{\emph{Proc 30th TACAS}} \emph{(\bibinfo{series}{Lecture
  Notes in Computer Science}, Vol.~\bibinfo{volume}{14572})}.
  \bibinfo{publisher}{Springer}, \bibinfo{pages}{153--172}.
\newblock


\bibitem[\protect\citeauthoryear{Avni, Meggendorfer, Sadhukhan, Tkadlec, and
  Zikelic}{Avni et~al\mbox{.}}{2023b}]%
        {AM+23}
\bibfield{author}{\bibinfo{person}{G. Avni}, \bibinfo{person}{T. Meggendorfer},
  \bibinfo{person}{S. Sadhukhan}, \bibinfo{person}{J. Tkadlec}, {and}
  \bibinfo{person}{{\DH}. Zikelic}.} \bibinfo{year}{2023}\natexlab{b}.
\newblock \showarticletitle{Reachability Poorman Discrete-Bidding Games}. In
  \bibinfo{booktitle}{\emph{Proc. 26th ECAI}} \emph{(\bibinfo{series}{Frontiers
  in Artificial Intelligence and Applications}, Vol.~\bibinfo{volume}{372})}.
  \bibinfo{publisher}{{IOS} Press}, \bibinfo{pages}{141--148}.
\newblock


\bibitem[\protect\citeauthoryear{Avni and Sadhukhan}{Avni and
  Sadhukhan}{2022}]%
        {AS22}
\bibfield{author}{\bibinfo{person}{G. Avni} {and} \bibinfo{person}{S.
  Sadhukhan}.} \bibinfo{year}{2022}\natexlab{}.
\newblock \showarticletitle{Computing Threshold Budgets in Discrete-Bidding
  Games}. In \bibinfo{booktitle}{\emph{Proc. 42nd FSTTCS}}
  \emph{(\bibinfo{series}{LIPIcs}, Vol.~\bibinfo{volume}{250})}.
  \bibinfo{publisher}{Schloss Dagstuhl - Leibniz-Zentrum f{\"{u}}r Informatik},
  \bibinfo{pages}{30:1--30:18}.
\newblock


\bibitem[\protect\citeauthoryear{Aziz, Filos{-}Ratsikas, Chen, Mackenzie, and
  Mattei}{Aziz et~al\mbox{.}}{2016}]%
        {AF+16}
\bibfield{author}{\bibinfo{person}{H. Aziz}, \bibinfo{person}{A.
  Filos{-}Ratsikas}, \bibinfo{person}{J. Chen}, \bibinfo{person}{S. Mackenzie},
  {and} \bibinfo{person}{N. Mattei}.} \bibinfo{year}{2016}\natexlab{}.
\newblock \showarticletitle{Egalitarianism of Random Assignment Mechanisms:
  (Extended Abstract)}. In \bibinfo{booktitle}{\emph{Proc. 15th AAMAS}}.
  \bibinfo{publisher}{{ACM}}, \bibinfo{pages}{1267--1268}.
\newblock


\bibitem[\protect\citeauthoryear{Aziz, Li, Moulin, and Wu}{Aziz
  et~al\mbox{.}}{2022}]%
        {ALMW22}
\bibfield{author}{\bibinfo{person}{H. Aziz}, \bibinfo{person}{B. Li},
  \bibinfo{person}{He. Moulin}, {and} \bibinfo{person}{X. Wu}.}
  \bibinfo{year}{2022}\natexlab{}.
\newblock \showarticletitle{Algorithmic fair allocation of indivisible items: a
  survey and new questions}.
\newblock \bibinfo{journal}{\emph{SIGecom Exch.}} \bibinfo{volume}{20},
  \bibinfo{number}{1} (\bibinfo{year}{2022}), \bibinfo{pages}{24--40}.
\newblock


\bibitem[\protect\citeauthoryear{Babaioff, Ezra, and Feige}{Babaioff
  et~al\mbox{.}}{2021}]%
        {BEF21}
\bibfield{author}{\bibinfo{person}{M. Babaioff}, \bibinfo{person}{T. Ezra},
  {and} \bibinfo{person}{U. Feige}.} \bibinfo{year}{2021}\natexlab{}.
\newblock \showarticletitle{Fair-Share Allocations for Agents with Arbitrary
  Entitlements}. In \bibinfo{booktitle}{\emph{Proc. 21st EC}}.
  \bibinfo{publisher}{{ACM}}, \bibinfo{pages}{127}.
\newblock


\bibitem[\protect\citeauthoryear{Babaioff, Ezra, and Feige}{Babaioff
  et~al\mbox{.}}{2022}]%
        {BEF22}
\bibfield{author}{\bibinfo{person}{M. Babaioff}, \bibinfo{person}{T. Ezra},
  {and} \bibinfo{person}{U. Feige}.} \bibinfo{year}{2022}\natexlab{}.
\newblock \showarticletitle{On Best-of-Both-Worlds Fair-Share Allocations}. In
  \bibinfo{booktitle}{\emph{Proc. 18th WINE}} \emph{(\bibinfo{series}{LNCS},
  Vol.~\bibinfo{volume}{13778})}. \bibinfo{publisher}{Springer},
  \bibinfo{pages}{237--255}.
\newblock


\bibitem[\protect\citeauthoryear{Barrett and Narayanan}{Barrett and
  Narayanan}{2008}]%
        {BarettNarayanan:CHVI}
\bibfield{author}{\bibinfo{person}{Leon Barrett} {and} \bibinfo{person}{Srini
  Narayanan}.} \bibinfo{year}{2008}\natexlab{}.
\newblock \showarticletitle{Learning All Optimal Policies with Multiple
  Criteria}. In \bibinfo{booktitle}{\emph{Proceedings of the 25th International
  Conference on Machine Learning}} (Helsinki, Finland)
  \emph{(\bibinfo{series}{ICML '08})}. \bibinfo{publisher}{Association for
  Computing Machinery}, \bibinfo{address}{New York, NY, USA},
  \bibinfo{pages}{41–47}.
\newblock
\showISBNx{9781605582054}
\urldef\tempurl%
\url{https://doi.org/10.1145/1390156.1390162}
\showDOI{\tempurl}


\bibitem[\protect\citeauthoryear{Chatterjee and Henzinger}{Chatterjee and
  Henzinger}{2012}]%
        {CH12}
\bibfield{author}{\bibinfo{person}{K. Chatterjee} {and} \bibinfo{person}{T.~A.
  Henzinger}.} \bibinfo{year}{2012}\natexlab{}.
\newblock \showarticletitle{A survey of stochastic {\(\omega\)}-regular games}.
\newblock \bibinfo{journal}{\emph{J. Comput. Syst. Sci.}} \bibinfo{volume}{78},
  \bibinfo{number}{2} (\bibinfo{year}{2012}), \bibinfo{pages}{394--413}.
\newblock


\bibitem[\protect\citeauthoryear{Chatterjee, Majumdar, and
  Henzinger}{Chatterjee et~al\mbox{.}}{2006}]%
        {CMH06}
\bibfield{author}{\bibinfo{person}{K. Chatterjee}, \bibinfo{person}{R.
  Majumdar}, {and} \bibinfo{person}{T.~A. Henzinger}.}
  \bibinfo{year}{2006}\natexlab{}.
\newblock \showarticletitle{Markov Decision Processes with Multiple
  Objectives}. In \bibinfo{booktitle}{\emph{Proc. 23rd STACS}}.
  \bibinfo{pages}{325--336}.
\newblock


\bibitem[\protect\citeauthoryear{Chen, Forejt, Kwiatkowska, Simaitis, and
  Wiltsche}{Chen et~al\mbox{.}}{2013}]%
        {stochgames:vi}
\bibfield{author}{\bibinfo{person}{Taolue Chen}, \bibinfo{person}{Vojtech
  Forejt}, \bibinfo{person}{Marta~Z. Kwiatkowska}, \bibinfo{person}{Aistis
  Simaitis}, {and} \bibinfo{person}{Clemens Wiltsche}.}
  \bibinfo{year}{2013}\natexlab{}.
\newblock \showarticletitle{On Stochastic Games with Multiple Objectives}. In
  \bibinfo{booktitle}{\emph{Mathematical Foundations of Computer Science 2013 -
  38th International Symposium, {MFCS} 2013, Klosterneuburg, Austria, August
  26-30, 2013. Proceedings}} \emph{(\bibinfo{series}{Lecture Notes in Computer
  Science}, Vol.~\bibinfo{volume}{8087})},
  \bibfield{editor}{\bibinfo{person}{Krishnendu Chatterjee} {and}
  \bibinfo{person}{Jir{\'{\i}} Sgall}} (Eds.). \bibinfo{publisher}{Springer},
  \bibinfo{pages}{266--277}.
\newblock
\urldef\tempurl%
\url{https://doi.org/10.1007/978-3-642-40313-2\_25}
\showDOI{\tempurl}


\bibitem[\protect\citeauthoryear{Condon}{Condon}{1990}]%
        {Con90}
\bibfield{author}{\bibinfo{person}{A. Condon}.}
  \bibinfo{year}{1990}\natexlab{}.
\newblock \showarticletitle{On Algorithms for Simple Stochastic Games}. In
  \bibinfo{booktitle}{\emph{Proc. DIMACS}}. \bibinfo{pages}{51--72}.
\newblock


\bibitem[\protect\citeauthoryear{Condon}{Condon}{1992}]%
        {Con92}
\bibfield{author}{\bibinfo{person}{A. Condon}.}
  \bibinfo{year}{1992}\natexlab{}.
\newblock \showarticletitle{The Complexity of Stochastic Games}.
\newblock \bibinfo{journal}{\emph{Inf. Comput.}} \bibinfo{volume}{96},
  \bibinfo{number}{2} (\bibinfo{year}{1992}), \bibinfo{pages}{203--224}.
\newblock


\bibitem[\protect\citeauthoryear{Develin and Payne}{Develin and Payne}{2010}]%
        {DP10}
\bibfield{author}{\bibinfo{person}{M. Develin} {and} \bibinfo{person}{S.
  Payne}.} \bibinfo{year}{2010}\natexlab{}.
\newblock \showarticletitle{Discrete Bidding Games}.
\newblock \bibinfo{journal}{\emph{The Electronic Journal of Combinatorics}}
  \bibinfo{volume}{17}, \bibinfo{number}{1} (\bibinfo{year}{2010}),
  \bibinfo{pages}{R85}.
\newblock


\bibitem[\protect\citeauthoryear{Etessami, Kwiatkowska, Vardi, and
  Yannakakis}{Etessami et~al\mbox{.}}{2008}]%
        {MOMDP}
\bibfield{author}{\bibinfo{person}{Kousha Etessami}, \bibinfo{person}{Marta~Z.
  Kwiatkowska}, \bibinfo{person}{Moshe~Y. Vardi}, {and}
  \bibinfo{person}{Mihalis Yannakakis}.} \bibinfo{year}{2008}\natexlab{}.
\newblock \showarticletitle{Multi-Objective Model Checking of Markov Decision
  Processes}.
\newblock \bibinfo{journal}{\emph{Log. Methods Comput. Sci.}}
  \bibinfo{volume}{4}, \bibinfo{number}{4} (\bibinfo{year}{2008}).
\newblock
\urldef\tempurl%
\url{https://doi.org/10.2168/LMCS-4(4:8)2008}
\showDOI{\tempurl}


\bibitem[\protect\citeauthoryear{Filar and Vrieze}{Filar and Vrieze}{1997}]%
        {FV97}
\bibfield{author}{\bibinfo{person}{J. Filar} {and} \bibinfo{person}{K.
  Vrieze}.} \bibinfo{year}{1997}\natexlab{}.
\newblock \bibinfo{booktitle}{\emph{Competitive Markov decision processes}}.
\newblock \bibinfo{publisher}{Springer Verlag}.
\newblock


\bibitem[\protect\citeauthoryear{Fisman, Kupferman, and Lustig}{Fisman
  et~al\mbox{.}}{2010}]%
        {FKL10}
\bibfield{author}{\bibinfo{person}{D. Fisman}, \bibinfo{person}{O. Kupferman},
  {and} \bibinfo{person}{Y. Lustig}.} \bibinfo{year}{2010}\natexlab{}.
\newblock \showarticletitle{Rational Synthesis}. In
  \bibinfo{booktitle}{\emph{Proc. 16th TACAS}}. \bibinfo{pages}{190--204}.
\newblock


\bibitem[\protect\citeauthoryear{Hayes, Radulescu, Bargiacchi,
  K{\"{a}}llstr{\"{o}}m, Macfarlane, Reymond, Verstraeten, Zintgraf, Dazeley,
  Heintz, Howley, Irissappane, Mannion, Now{\'{e}}, de~Oliveira~Ramos,
  Restelli, Vamplew, and Roijers}{Hayes et~al\mbox{.}}{2022}]%
        {MORLguide}
\bibfield{author}{\bibinfo{person}{Conor~F. Hayes}, \bibinfo{person}{Roxana
  Radulescu}, \bibinfo{person}{Eugenio Bargiacchi}, \bibinfo{person}{Johan
  K{\"{a}}llstr{\"{o}}m}, \bibinfo{person}{Matthew Macfarlane},
  \bibinfo{person}{Mathieu Reymond}, \bibinfo{person}{Timothy Verstraeten},
  \bibinfo{person}{Luisa~M. Zintgraf}, \bibinfo{person}{Richard Dazeley},
  \bibinfo{person}{Fredrik Heintz}, \bibinfo{person}{Enda Howley},
  \bibinfo{person}{Athirai~A. Irissappane}, \bibinfo{person}{Patrick Mannion},
  \bibinfo{person}{Ann Now{\'{e}}}, \bibinfo{person}{Gabriel de
  Oliveira~Ramos}, \bibinfo{person}{Marcello Restelli}, \bibinfo{person}{Peter
  Vamplew}, {and} \bibinfo{person}{Diederik~M. Roijers}.}
  \bibinfo{year}{2022}\natexlab{}.
\newblock \showarticletitle{A practical guide to multi-objective reinforcement
  learning and planning}.
\newblock \bibinfo{journal}{\emph{Auton. Agents Multi Agent Syst.}}
  \bibinfo{volume}{36}, \bibinfo{number}{1} (\bibinfo{year}{2022}),
  \bibinfo{pages}{26}.
\newblock
\urldef\tempurl%
\url{https://doi.org/10.1007/S10458-022-09552-Y}
\showDOI{\tempurl}


\bibitem[\protect\citeauthoryear{Kupferman, Perelli, and Vardi}{Kupferman
  et~al\mbox{.}}{2016}]%
        {KPV16}
\bibfield{author}{\bibinfo{person}{O. Kupferman}, \bibinfo{person}{G. Perelli},
  {and} \bibinfo{person}{M.~Y. Vardi}.} \bibinfo{year}{2016}\natexlab{}.
\newblock \showarticletitle{Synthesis with rational environments}.
\newblock \bibinfo{journal}{\emph{Ann. Math. Artif. Intell.}}
  \bibinfo{volume}{78}, \bibinfo{number}{1} (\bibinfo{year}{2016}),
  \bibinfo{pages}{3--20}.
\newblock


\bibitem[\protect\citeauthoryear{Lazarus, Loeb, Propp, Stromquist, and
  Ullman}{Lazarus et~al\mbox{.}}{1999}]%
        {LLPSU99}
\bibfield{author}{\bibinfo{person}{A.~J. Lazarus}, \bibinfo{person}{D.~E.
  Loeb}, \bibinfo{person}{J.~G. Propp}, \bibinfo{person}{W.~R. Stromquist},
  {and} \bibinfo{person}{D.~H. Ullman}.} \bibinfo{year}{1999}\natexlab{}.
\newblock \showarticletitle{Combinatorial Games under Auction Play}.
\newblock \bibinfo{journal}{\emph{Games and Economic Behavior}}
  \bibinfo{volume}{27}, \bibinfo{number}{2} (\bibinfo{year}{1999}),
  \bibinfo{pages}{229--264}.
\newblock


\bibitem[\protect\citeauthoryear{Lazarus, Loeb, Propp, and Ullman}{Lazarus
  et~al\mbox{.}}{1996}]%
        {LLPU96}
\bibfield{author}{\bibinfo{person}{A.~J. Lazarus}, \bibinfo{person}{D.~E.
  Loeb}, \bibinfo{person}{J.~G. Propp}, {and} \bibinfo{person}{D. Ullman}.}
  \bibinfo{year}{1996}\natexlab{}.
\newblock \showarticletitle{Richman Games}.
\newblock \bibinfo{journal}{\emph{Games of No Chance}}  \bibinfo{volume}{29}
  (\bibinfo{year}{1996}), \bibinfo{pages}{439--449}.
\newblock


\bibitem[\protect\citeauthoryear{Meir, Kalai, and Tennenholtz}{Meir
  et~al\mbox{.}}{2018}]%
        {MKT18}
\bibfield{author}{\bibinfo{person}{R. Meir}, \bibinfo{person}{G. Kalai}, {and}
  \bibinfo{person}{M. Tennenholtz}.} \bibinfo{year}{2018}\natexlab{}.
\newblock \showarticletitle{Bidding games and efficient allocations}.
\newblock \bibinfo{journal}{\emph{Games and Economic Behavior}}
  \bibinfo{volume}{112} (\bibinfo{year}{2018}), \bibinfo{pages}{166--193}.
\newblock
\urldef\tempurl%
\url{https://doi.org/10.1016/j.geb.2018.08.005}
\showDOI{\tempurl}


\bibitem[\protect\citeauthoryear{Moffaert and Now{{\'e}}}{Moffaert and
  Now{{\'e}}}{2014}]%
        {JMLR:v15:vanmoffaert14a}
\bibfield{author}{\bibinfo{person}{Kristof~Van Moffaert} {and}
  \bibinfo{person}{Ann Now{{\'e}}}.} \bibinfo{year}{2014}\natexlab{}.
\newblock \showarticletitle{Multi-Objective Reinforcement Learning using Sets
  of Pareto Dominating Policies}.
\newblock \bibinfo{journal}{\emph{Journal of Machine Learning Research}}
  \bibinfo{volume}{15}, \bibinfo{number}{107} (\bibinfo{year}{2014}),
  \bibinfo{pages}{3663--3692}.
\newblock
\urldef\tempurl%
\url{http://jmlr.org/papers/v15/vanmoffaert14a.html}
\showURL{%
\tempurl}


\bibitem[\protect\citeauthoryear{Painter, Lacerda, and Hawes}{Painter
  et~al\mbox{.}}{2020}]%
        {convex-hull-mcts}
\bibfield{author}{\bibinfo{person}{Michael Painter}, \bibinfo{person}{Bruno
  Lacerda}, {and} \bibinfo{person}{Nick Hawes}.}
  \bibinfo{year}{2020}\natexlab{}.
\newblock \showarticletitle{Convex Hull Monte-Carlo Tree-Search}. In
  \bibinfo{booktitle}{\emph{Proceedings of the 30th International Conference on
  Automated Planning and Scheduling ({ICAPS})}}. \bibinfo{publisher}{{AAAI}
  Press}, \bibinfo{pages}{217--225}.
\newblock


\bibitem[\protect\citeauthoryear{Peres, Schramm, Sheffield, and Wilson}{Peres
  et~al\mbox{.}}{2007}]%
        {PSSW07}
\bibfield{author}{\bibinfo{person}{Y. Peres}, \bibinfo{person}{O. Schramm},
  \bibinfo{person}{S. Sheffield}, {and} \bibinfo{person}{D.~Bruce Wilson}.}
  \bibinfo{year}{2007}\natexlab{}.
\newblock \showarticletitle{Random-Turn Hex and Other Selection Games}.
\newblock \bibinfo{journal}{\emph{The American Mathematical Monthly}}
  \bibinfo{volume}{114}, \bibinfo{number}{5} (\bibinfo{year}{2007}),
  \bibinfo{pages}{373--387}.
\newblock


\bibitem[\protect\citeauthoryear{Puterman}{Puterman}{1990}]%
        {puterman1990markov}
\bibfield{author}{\bibinfo{person}{Martin~L Puterman}.}
  \bibinfo{year}{1990}\natexlab{}.
\newblock \showarticletitle{Markov decision processes}.
\newblock \bibinfo{journal}{\emph{Handbooks in operations research and
  management science}}  \bibinfo{volume}{2} (\bibinfo{year}{1990}),
  \bibinfo{pages}{331--434}.
\newblock


\bibitem[\protect\citeauthoryear{Sutton and Barto}{Sutton and Barto}{1998}]%
        {SB98}
\bibfield{author}{\bibinfo{person}{R.~S. Sutton} {and} \bibinfo{person}{A.~G.
  Barto}.} \bibinfo{year}{1998}\natexlab{}.
\newblock \bibinfo{booktitle}{\emph{Reinforcement learning - an introduction}}.
\newblock \bibinfo{publisher}{{MIT} Press}.
\newblock


\bibitem[\protect\citeauthoryear{Wooldridge, Gutierrez, Harrenstein, Marchioni,
  Perelli, and Toumi}{Wooldridge et~al\mbox{.}}{2016}]%
        {WG+16}
\bibfield{author}{\bibinfo{person}{M.~J. Wooldridge}, \bibinfo{person}{J.
  Gutierrez}, \bibinfo{person}{P. Harrenstein}, \bibinfo{person}{E. Marchioni},
  \bibinfo{person}{G. Perelli}, {and} \bibinfo{person}{A. Toumi}.}
  \bibinfo{year}{2016}\natexlab{}.
\newblock \showarticletitle{Rational Verification: From Model Checking to
  Equilibrium Checking}. In \bibinfo{booktitle}{\emph{Proc. of the 30th AAAI}}.
  \bibinfo{pages}{4184--4191}.
\newblock


\end{thebibliography}

\pagebreak
\appendix

\section*{Supplementary Materials}

\section{Additional Proofs}

\subsection{Bounded-Horizon Value Iteration}

\newtheorem*{L3}{Lemma \ref{lem:fin-rep}}

\begin{L3}
 Let $M$ be an MDP, $T$ target vertices, and $v$ be a vertex in $M$.
 For every $i \in \N$, there exists
 a finite set $G \subseteq [0,1]^2$ of at most $3|V|^i$ points such that
    $\rval_v^i$ is the $\ord$-downward closure of $G$, and $\sval_v^i$ is the $\ord$-upward closure of $G$.
 Moreover, all boundary points of $\rval_v^i$ belong to $\sval_v^i$ and vice versa.
\end{L3}
\begin{proof}
 We prove the result by induction on $i$. For the base case $i = 0$, by definition of $\rval_v^0$ and $\sval_v^0$,
 the claim holds with $G = \set{(0, 0), (0, 1), (1,1)}$ or $G = \set{(0, 0), (1, 0), (1,1)}$ depending on whether $v\in T$ or not.
    
 For the induction step, suppose the claim holds for $i$.
 We only consider the case $v \in \VR$ since the proof for $v \in \VC$ is analogous.
 The definition of the operator in \eqref{eq:vi-vr} implies
 that whenever the sets $\rval_w^{i}$ and $\sval_w^{i}$ share a boundary, then the same holds for $\rval_v^{i+1}$ and $\sval_v^{i+1}$.
 Further, by the induction hypothesis, there exists the finite representation $G_w$ for each $w\in\succ(v)$.
 Let $(B, p) \in \rval_v^{i+1}$. From the $\ord$-downward closedness of $\rval_w^{i}$ and equation \eqref{eq:vi-vr} it follows that there exists
    $B'\geq B$ such that $(B', p) \in \rval_v^{i+1}$, and $(B',q) \in G_w$ for some $q$ and $w\in\succ(v)$.
 Hence $\rval_v^{i+1}$ is fully determined by the set of all its boundary points whose $B$-coordinates appear in some $G_w$.
 A similar argument shows that $\sval_v^{i+1}$ is fully determined by its boundary points with the same $B$-coordinates.
 Due to the shared boundaries, these two finite representations are equal to a common set $G$.
 Finally, every point in $G$ corresponds to a point in $G_w$ for some $w \in \succ(v)$. 
 By the induction hypothesis, $G_w$ has at most $3|V|^i$ points for every $w \in \succ(v)$,
 hence $G$ has at most $3|V|^{i+1}$ points.
\end{proof}

\newtheorem*{C1}{Corollary \ref{thm:bounded:determinacy}}
\begin{C1}
 Let $\tup{M,v,T,B,p}$ be an arbitrary problem instance.
 For every $i$, the point $(B,p)$ belongs to at least one of the sets $\rval_{v}^i$ and $\sval_{v}^i$.
\end{C1}

\begin{proof}
   Let $(B,p)$ by a point that does not belong $\sval_v^i$, and
   let $X$ be the intersection of the segment connecting $(B,p)$ with $(0,1)$ and the boundary of $\sval_{v}^i$.
   Note that such a point exists since $(0,1)$ belongs to $\sval_v^i$, and since $\sval_v^i$ is $\ord$-upward closed, by Lemma \ref{lem:fin-rep}.
   By Lemma \ref{lem:fin-rep}, every boundary point of $\sval_{v}^i$ belongs to $\rval_{v}^i$, especially $X$.
   Since $\rval_v^i$ is $\ord$-downward closed and $(B,p) \ord X$, we conclude $(B,p) \in \rval_v^i$.
\end{proof}

\subsection{Limiting Behavior of Value Iteration}



We prove that the limits of the value iteration sets form a fixpoint of the Bellman operator $\bellman$.

\begin{lemma}
    \label{lem:lim-fixpoint}
 For every vertex $v$, the following holds:
    \[
 \coptcrv_v = \bellman_v\left(\set{\coptcrv_w\mid w\in\succ(v)}\right).
    \]
\end{lemma}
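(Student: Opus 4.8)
The plan is to prove the two inclusions $\coptcrv_v \subseteq \bellman_v(\{\coptcrv_w \mid w \in \succ(v)\})$ and $\coptcrv_v \supseteq \bellman_v(\{\coptcrv_w \mid w \in \succ(v)\})$ separately, using that $\coptcrv_v = \bigcap_{i \geq 0} \sval_v^i$ together with the one-step recurrence $\sval_v^{i} = \bellman_v(\{\sval_w^{i-1}\mid w\in\succ(v)\})$.

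\textbf{The easy inclusion ($\supseteq$).} Suppose $(B,p) \in \bellman_v(\{\coptcrv_w \mid w \in \succ(v)\})$. I will show $(B,p) \in \sval_v^i$ for every $i$; since $\coptcrv_v = \bigcap_i \sval_v^i$ this is exactly what is needed. Fix $i \geq 1$. If $v \in \VR$, then by the definition of $\bellman_v$ in \eqref{eq:vi-vr} there are $p_w$ with $\sum_w \ER(v)(w)\cdot p_w = p$ and $(B,p_w)\in\coptcrv_w \subseteq \sval_w^{i-1}$ for each $w$; plugging these witnesses back into \eqref{eq:vi-vr} for horizon $i$ shows $(B,p)\in\sval_v^i$. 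If $v \in \VC$, the analogous argument using \eqref{eq:vi-vc} works: pick $B_-, B_+$ witnessing membership in $\bellman_v$ over the sets $\coptcrv_w$; since $\coptcrv_w \subseteq \sval_w^{i-1}$, the same $B_-, B_+$ are valid witnesses for $\sval_v^i$, and $(B_-+B_+)/2 = B$. Hence $(B,p)\in\sval_v^i$ for all $i$, so $(B,p)\in\coptcrv_v$.

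\textbf{The harder inclusion ($\subseteq$).} Suppose $(B,p)\in\coptcrv_v$, so $(B,p)\in\sval_v^i$ for every $i\geq 1$; hence for each $i$ there exist witnesses (either probabilities $p_w^{(i)}$ for $v\in\VR$, or budgets $B_-^{(i)}, B_+^{(i)}$ for $v\in\VC$) showing $(B,p)\in\bellman_v(\{\sval_w^{i-1}\mid w\in\succ(v)\})$. The goal is to produce a single set of witnesses that works simultaneously against all the sets $\coptcrv_w = \bigcap_j \sval_w^j$. The natural approach is a compactness / limit argument: the witnesses live in the compact set $[0,1]$ (finitely many of them, one per successor), so pass to a convergent subsequence $i_k \to \infty$ and let the witnesses converge to limit values. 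One then needs to check (i) the defining equation is preserved in the limit — for $\VR$ this is $\sum_w \ER(v)(w)\cdot p_w = p$, continuous in the $p_w$; for $\VC$ it is $(B_- + B_+)/2 = B$, continuous in $B_-, B_+$ — and (ii) the limit witnesses actually land in $\coptcrv_w$, i.e.\ in $\sval_w^j$ for \emph{every} $j$. Point (ii) is the crux: I would argue that for each fixed $j$, once $i_k - 1 \geq j$ the witness from iteration $i_k$ lies in $\sval_w^{i_k - 1} \subseteq \sval_w^{j}$ by monotonicity of the decreasing sequence $(\sval_w^j)_j$, and since $\sval_w^j$ is closed (it is a $\ord$-upward closure of finitely many points by Lemma~\ref{lem:fin-rep}, hence closed), the limit of these witnesses also lies in $\sval_w^j$. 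As $j$ was arbitrary, the limit witness is in $\coptcrv_w$. Combining (i) and (ii) gives $(B,p)\in\bellman_v(\{\coptcrv_w\mid w\in\succ(v)\})$.

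\textbf{Main obstacle.} The delicate point is part (ii) of the limiting argument, and in particular being careful that the witnesses (the $p_w$ or the pair $B_-, B_+$) depend on the horizon $i$ and need not be monotone in $i$, so one genuinely needs compactness rather than a direct monotone-limit argument; one must also confirm that $\sval_w^j$ is closed (which follows from the finite staircase representation in Lemma~\ref{lem:fin-rep}) so that the limit of in-set witnesses stays in the set. A minor subtlety for the $\VC$ case is ensuring that, after taking limits, the inequality-type conditions on $B_-$ and $B_+$ in \eqref{eq:vi-vc} (namely $B_-$ works for \emph{some} successor and $B_+$ works for \emph{all} successors) survive; since there are finitely many successors and each condition is a finite conjunction/disjunction of closed conditions, this is fine, but it should be spelled out.
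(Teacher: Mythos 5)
Your proposal is correct and follows essentially the same route as the paper's proof: the $\supseteq$ direction by pushing the witnesses for $\coptcrv_w\subseteq\sval_w^{i-1}$ through the inductive definition of $\sval_v^i$, and the $\subseteq$ direction by extracting horizon-dependent witnesses, passing to a convergent subsequence by compactness of $[0,1]$, and using monotonicity of the decreasing sequence together with closedness of each $\sval_w^j$ to place the limit witnesses in $\coptcrv_w$. If anything, you are slightly more explicit than the paper about fixing the choice of successor $w_-$ along the subsequence and about the inequality-type conditions in \eqref{eq:vi-vc} surviving the limit.
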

\begin{proof}
 Let us first show the inclusion ``$\subseteq$''.
 If $v \in \VC$, then, by definition of $\coptcrv_v$, every $(B, p)$ in $\coptcrv_v$ also belongs to $\coval^i_v$ for every $i$.
 By inductive definition of $\coval^i_v$, for every $i > 0$, there exists $w_- \in \succ(v)$, $(B^i_-, p) \in \coval^{i-1}_{w_-}$, and $(B^i_+, p) \in \coval^{i-1}_{w_+}$ for every $w_+ \in \succ(v)$
 such that $B = (B^i_- + B^i_+)/2$. Since the sets $\coval^i$ monotonically decrease (in inclusion) with $i$ and $[0,1]$ is compact, we can without loss of generality assume that $B^i_-$ and $B^i_+$ converge to some $B_-$ and $B_+$,
 and that $(B_- + B_+)/2 = B$.
 Moreover, the limits $\lim_{j\to\infty} (B^j_-, p) = (B_-, p)$ and $\lim_{j\to\infty} (B^j_+, p) = (B_+, p)$ belong to $\coval^i_{w_-}$, respectively, to $\coval^i_{w_+}$ for every $w_+ \in \succ(v)$
 since $\coval^i_w$ is closed for every $w$.
 Hence $(B_-, p) \in \coptcrv_{w_-}$ and $(B_+, p) \in \coptcrv_{w_+}$ for every $w_+ \in \succ(v)$,
 and we conclude that $(B, p) = \left((B_- + B_+)/2, p\right) \in \bellman_v\left(\set{\coptcrv_w\mid w\in\succ(v)}\right)$.
 The proof for $v \in \VR$ is similar.

 Let us now show the inclusion ``$\supseteq$'', and again first consider the case $v \in \VC$.
 Whenever $(B, p)$ belongs to $\bellman_v\left(\set{\coptcrv_w\mid w\in\succ(v)}\right)$,
 there exist $B_-$ and $B_+$ such that $B = (B_- + B_+)/2$,
    $(B_-, p) \in \coptcrv_{w_-}$ for some $w_- \in \succ(v)$, and $(B_+, p) \in \coptcrv_{w_+}$ for every $w_+ \in \succ(v)$.
 By definition of $\coptcrv$, it holds that $(B_-, p) \in \coval^i_{w_-}$ and $(B_+, p) \in \coval^i_{w_+}$ for every $w_+ \in \succ(v)$ and for every $i \in \N$.
 By inductive definition of $\coval^i_v$, we derive $(B, p) \in \coval^i_v$ for every $i \in \N$. Therefore, $(B, p)$ also belongs to $\coptcrv_v$.
 The proof for $v \in \VR$ is again similar.
\end{proof}

\newtheorem*{C2}{Theorem \ref{cor:determinacy}}
\begin{C2}[Determinacy]
 For every MDP $M$, target vertices $T$, and vertex $v$ in $M$,
 it holds that $\optcrv_v \cap \coptcrv_v=\thres_v$.
 Moreover, the threshold is completely separating.
 \end{C2}
 
 \begin{proof}
 By Lem.~\ref{lem:fin-rep} all sets $\sval_v^i$ are $\ord$-downward closed, hence their limit $\coptcrv_v$ is $\ord$-downward closed as well.
 Similarly, $\optcrv_v$ is $\ord$-upward closed.
 Therefore, by Thm.~\ref{thm:limiting values}, every point $(B, p)$ in $\optcrv_v \cap \coptcrv_v$ belongs to $\thres_v$.
 It further follows from Cor.~\ref{thm:bounded:determinacy} and the definitions of $\optcrv_v$ and $\coptcrv_v$ that every point of $[0,1]^2$ is either in $\optcrv_v$ or in $\coptcrv_v$.
 Since both sets $\optcrv_v$ and $\coptcrv_v$ are closed, every point on the boundary of one lies in the other.
 Moreover, by Thm.~\ref{thm:limiting values}, the interiors of $\optcrv_v$ and $\coptcrv_v$ are disjoint,
 therefore $\optcrv_v \cap \coptcrv_v$ is necessarily the whole threshold.
 
 Finally, let $(B, p) \not\in \thres_v$. Suppose $(B,p) \in \optcrv_v$. Since $\optcrv_v$ and $\coptcrv_v$ are closed and share boundary points,
 the segment connecting $(B,p)$ to $(0,1)$ contains a threshold point. Similar argument works for $(B,p) \in \coptcrv_v$ and $(1,0)$. 
 Thus the threshold is completely separating.
 \end{proof}

\subsection{Speed of Convergence of Value Iteration}

In what follows, we will use the word history, usually denoted as $h$, and the word path interchangeably.
Given $h = v^0 \ldots v^\ell$, we will use $\last(h)$ to denote the last vertex $v^\ell$, $|h| = \ell$ to denote its length,
and $h^i$ to denote the $i$-th vertex $v^i$ of $h$.

We also naturally extend the meaning of the notation $\winpolZ(B, p, h)$ to an arbitrary history $h$ as the set
of policies $\polZ$ such that for every policy $\polO$ of the safety player, the probability of reaching $T$ under $\polZ$ and $\polO$ from $h$ is at least $p$,
given the budget in $h$ is $B$.

Let $\polZ, \polO$ be policies of the reachability and the safety player, respectively.
Given a history $h$ and the current budget $B_h$, we call a path $H$
a \emph{reference path} from $h$ if
\begin{itemize}
    \item $h$ is a prefix of $H$,
    \item $H$ terminates in a node in $T$,
    \item the suffix of $H$ starting at $h$ is consistent with the policy $\polZ$, i.e.,
 for every $i\geq |h|$ such that $H^i \in \VC$, the token moves according to $\polZ$.
\end{itemize}

\begin{lemma}[Short reference path existence]
    \label{lem:short-ref-path}
 Let $\tup{M,v,T,B,p}$ be a problem instance and suppose $(B, p)$ is an interior point of $\optcrv_{v}$.
 Then there exists a policy $\polZ \in \winpolZ(B, p, v)$ such that for every history $h$, current budget $B_h$, and $p_h > 0$,
 the fact $\polZ \in \winpolZ(B_h, p_h, h)$ implies that
 there exists a reference path (with respect to $\polZ$) from $h$.
\end{lemma}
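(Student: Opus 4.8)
The plan is to construct $\polZ$ out of the finite-horizon winning policies whose existence is guaranteed by Thm.~\ref{thm:bounded value iteration}, and then argue that the reference-path property is automatic. First I would use the hypothesis that $(B,p)\in\interior{\optcrv_v}$: by the argument in the proof of Thm.~\ref{thm:limiting values}, there is a finite index $N$ with $(B,p)\in\interior{\rval_v^N}$, so a policy $\polZ_N\in\winpolZ(B,p,v,\reach^{M,N}(T))$ exists. I would take $\polZ$ to be (essentially) this policy $\polZ_N$, unrolled inductively exactly as in the constructive proof of Thm.~\ref{thm:bounded value iteration}: at each control vertex it picks the bid $|B_+-B_-|/2$ and the target successor witnessing membership in $\rval_w^{i-1}$, and at random vertices it passes to the induced sub-policies. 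The key bookkeeping fact, already implicit in that proof, is that this construction maintains along every consistent history $h$ a ``remaining horizon'' $n_h\le N-|h|$ together with a requested probability $p_h$ and budget bound $\overline{B}_h$ such that $(\overline{B}_h,p_h)\in\rval_{\last(h)}^{n_h}$, with the actual budget $B_h>\overline{B}_h$; so $\polZ\in\winpolZ(B_h,p_h,h)$ for the bounded horizon $n_h$, hence also for the unbounded reachability objective.

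Next I would establish the reference-path claim. Suppose $h$ is a history consistent with $\polZ$, with current budget $B_h$ and some $p_h>0$ such that $\polZ\in\winpolZ(B_h,p_h,h)$. I want a path $H$ extending $h$, ending in $T$, whose suffix from $h$ respects $\polZ$ at all control vertices. The point is that $\polZ$ "thinks" it can reach $T$ within its remaining horizon $n_h$ with positive probability $p_h>0$. Concretely, I would re-trace the inductive definition of $\rval_{\last(h)}^{n_h}$ restricted to the choices actually taken by $\polZ$: at a control vertex $\polZ$ commits to a specific successor $w^-$ with $(B_-,p_h)\in\rval_{w^-}^{n_h-1}$ and we simply follow that edge; at a random vertex, since $p_h = \sum_w \ER(v)(w)p_w$ is positive, at least one successor $w$ has $p_w>0$, and we follow that edge. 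Because $n_h$ strictly decreases along this walk while the requested probability stays positive, we must hit the base case $n=0$ at a vertex with positive required probability, which by the definition of $\rval^0$ forces that vertex to be in $T$. Concatenating $h$ with the edges chosen this way yields the desired reference path $H$: it extends $h$, its suffix from $h$ is by construction consistent with $\polZ$ at control vertices, and it terminates in $T$.

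The main obstacle I expect is purely notational/technical rather than conceptual: making precise the ``invariant'' that $\polZ$ carries a remaining horizon, requested probability and budget bound along every history, and checking that this invariant is exactly what lets one read off a reference path. One must be careful that $\polZ$ is a genuine policy in the sense of the paper --- a function of budget and finite path only --- and that the auxiliary data $(n_h,p_h,\overline{B}_h)$ is a deterministic function of the history under $\polZ$, so that "following $\polZ$'s own choices" at control vertices and "following a positive-probability branch" at random vertices is well defined and produces an honest path of $M$. Once this is set up, the decrease of $n_h$ and positivity of $p_h$ give termination in $T$ essentially for free, and the short length (at most $N\le$ the length after which $(B,p)$ entered $\interior{\rval_v^\ast}$) is exactly what "short reference path" refers to.
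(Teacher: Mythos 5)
Your construction establishes only the literal existence of a reference path, but not the property that the paper actually needs and that the lemma's name advertises: the reference path must have length at most $|\V|$. Your path is obtained by unrolling the witness of $(B,p)\in\interior{\rval_v^N}$ and therefore has length up to $N$, the first horizon at which $(B,p)$ enters $\rval_v^N$. For cyclic MDPs, $N$ is not bounded by $|\V|$ --- bounding $N$ is precisely the content of Lem.~\ref{lem:convergence speed}, whose proof consumes Lem.~\ref{lem:short-ref-path} through the quantities $\ER_{\min}^{2|H|-2|h|+1}$ and the final estimate $\theta_{h_0}\left(1+\ER_{\min}^{2|\V|}\right)$, all of which require $|H|\le|\V|$. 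With your bound the argument becomes circular ($N$ depends on the convergence speed, which depends on $|H|$, which depends on $N$). The paper closes this gap with an idea absent from your proposal: it works in the infinite unfolding $M^*$, takes a \emph{minimal} finite tree witness $(\polZ,F)$ for $(B,p)$, shows that along the path on which the reachability player wins every bidding and each random vertex resolves to the successor maximizing the guaranteed reachability probability $p(h)$, the sequence $p(h^0),\dots,p(h^\ell)$ is non-decreasing (hence the path ends in $T$), and then applies a pigeonhole/surgery argument --- if two vertices on this path shared the same underlying MDP vertex, splicing the later subtree in place of the earlier one would yield a strictly smaller witness --- to force $\ell\le|\V|$.

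A secondary issue: the lemma quantifies over \emph{every} history $h$ with $\polZ\in\winpolZ(B_h,p_h,h)$ for some $p_h>0$, whereas your invariant $(\overline{B}_h,p_h)\in\rval_{\last(h)}^{n_h}$ with $n_h\le N-|h|$ is only meaningful while $|h|\le N$ and while the requested probability remains positive. Beyond that, your policy is arbitrary, yet it may still happen to be winning for some positive $p_h$ from such a history, in which case you owe a reference path that your bookkeeping cannot produce. The paper handles this by restarting the construction from the current vertex whenever the token leaves the current reference path, so that the invariant is re-established at every relevant history; your proposal needs an analogous patch.
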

\begin{proof}
 We will show that whenever $p>0$, then there exists a policy $\polZ \in \winpolZ(B, p, v)$ such that the reference path from the initial vertex $v$ exists.
 The general statement can be obtained by indictive construction of the policy;
 whenever the token deviates from the reference path, we consider the current vertex to beto be a new initial vertex and repeat the construction.

 Let $M^*$ be the infinite unfolding of $M$ and let us define $T^*, \VC^*,$ and $\VR^*$
 as the set of paths in $M^*$ that end in $T, \VC,$ and $\VR$, respectively.
 We call a finite subtree $F$ of $M^*$ \emph{successor-closed} if for every vertex $h \in F$ either
    $h$ is a leaf or $F$ contains all successors of $h$ in $M^*$.
 We call a tuple $(\polZ, F)$, where $F$ is a successor-closed finite subtree of $M^*$,
 a \emph{tree witness} for $(B, p)$ whenever for every policy $\polO$,
    $$\pr{v, B}{\polZ,\polO}(\reach^F(T^*))\geq p.$$
 Clearly, whenever a tree witness exists for $(B, p)$, the corresponding policy $\polZ$
 can be extended (by defining arbitrary actions in the vertices not in $F$) to a policy $\polZ'$ that belongs to $\winpolZ(B, p, v)$.

 If $(B, p)$ is a point in the interior of $\optcrv$, then it belongs to a $\rval^i_v$ for some $i$, hence
 already the first $i$ levels of $M^*$ together with the corresponding policy from Thm. \ref{thm:bounded value iteration} form a tree witness for $(B, p)$.
 Let moreover $(\polZ, F)$ be a tree witness for $(B, p)$ such that $F$ has the minimum number of vertices among all witnesses.

 Consider a path $h^0 = v, h^1, \ldots, h^\ell$ in $F$ (here $h^i$ is a vertex in $F$, i.e., a hisotry in $M$)
 such that $h^\ell$ is a leaf in $F$.
 Assume the reachability player wins all the biddings on the path or in other words,
 the vertex $h^{i+1}$ is the vertex chosen by $\polZ$ given $h^i$ and the current budget $B^i$, whenever $h^i \in \VC^*$.
 Note that the budgets $B^0, \ldots, B^\ell$ form a decreasing sequence
 since the reachability player always wins the bidding, thus loses the bid amount in the control vertices,
 and the budget does not change in the random vertices.
 We will further use $p(h)$ to denote the infimum of the reachability probability of $T^*$ from $h$
 over all policies $\polO$ of the safety player.

 While $h^{i+1}$ is uniquely determined by $\polZ$ provided $h^i \in \VC^*$,
 there is a freedom in the choice of the stochastic results when $h^i \in \VR^*$.
 We fix a unique path $h^0, \ldots, h^\ell$ by further requiring that whenever $h^i \in \VR^*$,
 then $h^{i+1}$ is any fixed successor of $h^i$ that maximizes the probability $p(h)$ among all successors $h$ of $h^i$.
    
 For $h^i \in CV^*$, the safety player can always lose the bidding willingly, thus getting the game to $h^{i+1}$, which shows that in such a case
    $p(h^i)$ is at most $p(h^{i+1})$.
 For $h^i \in \VR^*$, we know the value $p(h^i)$ is a convex combination of $p(h)$ over all successors $h$ of $h^i$.
 Since the value $p(h^{i+1})$ is the largest among all successors of $h^i$, it must be at least $p(h^{i})$.
 Therefore, the sequence $p(h^0), p(h^1), \ldots, p(h^\ell)$ is non-decreasing.
 In particular, since $p(h^\ell) \in \set{0,1}$ (depending on whether $h^\ell \in T^*$) and $p(h^0) \geq p > 0$, $p(h^\ell)$ is necessarily 1 and $h^\ell$ belongs to $T^*$.

 We will further show that $\ell \leq |V|$. Suppose for a contradiction that $\ell > |V|$.
 Then there are two indices $i < j$ such that $\last(h^i) = \last(h^j)$ by the pigeonhole principle.
 But then since $p(h^i) \leq p(h^j)$, and $B^i \geq B^j$, we could replace the whole subtree of $h^i$ by that of $h^j$ to obtain $F'$.
 By defining $\polZ'$ to be the same as $\polZ$ out of the subtree of $h^i$, and to be the same as $\polZ$ in the subtree of $h^j$ otherwise,
 we get a tree witness $(\polZ', F')$ for $(B, p)$ with fewer vertices than $F$, which is a contradiction.
 We conclude there exists the required reference path from $h = v$.
\end{proof}

\newtheorem*{L1}{Lemma \ref{lem:convergence speed}}
\begin{L1}[Convergence speed]
 Let $\tup{M,v,T,B,p}$ be a problem instance and
 suppose $(B,p)$ is in the interior of $\rval_{v}^*$ with its distance from $v$'s threshold being $\epsilon>0$.
 Then for every $n_\epsilon$ such that
       $$n_\epsilon \geq 4|\V|\log\left(2/\epsilon\right)\ER_{\min}^{-2|\V|},$$
 it holds that $(B,p)\in \rval_{v}^{n_\epsilon}$.
\end{L1}


\begin{proof}

 We will construct a policy $\polZ \in \winpolZ(B, p, v, \reach^{M, n_\epsilon}(T))$.
 Since the $L_\infty$ distance of $(B, p)$ from the boundary of $\optcrv_{v}$ is $\varepsilon > 0$,
 the point $(B', p') = (B-\frac{\varepsilon}{2}, p+\frac{\varepsilon}{2})$ belongs to the interior of $\optcrv_{v}$ as well.
 By Lemma \ref{lem:short-ref-path}, there exists a policy $\polZ' \in \winpolZ(B', p', v)$ and a reference path $H$ that reaches $T$ in at most $|V|$ steps with positive probability if the reachability player wins all the biddings according to $\polZ'$.
 At any given history $h$ of the game with the current budget $\overline{B}_h$, the policy $\polZ'$ witnesses a future reachability probability $\overline{p}_h$ of $T$, i.e., $\pr{h, \overline{B}_h}{\polZ', \polO}(\reach^M(T)) \geq \overline{p}_h$
 for every policy $\polO$ of the safety player. Moreover, the values $\overline{B}_h$ can be increased to $B'_h$ and values $\overline{p}_h$ decreased to $p'_h$ so that it still holds that
    $\pr{h, B'_h}{\polZ', \polO}(\reach^M(T)) \geq p'_h$ for every policy $\polO$ of the safety player, and the following hold:
    $B'_v = B, p'_v = p$,
 \begin{align*}
        2B'_h &= \max_{w \in \succ(\last(h))}B'_{hw} + \min_{w \in \succ(\last(h))}B'_{hw}\\
 p'_h &= p'_{hw},
 \end{align*}
 whenever $\last(h) \in \VC$, and
 \begin{align*}
 p'_h &= \sum_{w\in\succ(\last(h))} \ER(\last(h))(w) \cdot p'_{hw}\\
 B'_h &= B'_{hw},
 \end{align*}
 whenever $\last(h) \in \VR$.

 We will construct a policy $\polZ$ based on the values $(B'_h, p'_h)$.
 In a history $h$, policy $\polZ$ will maintain the current budget $B_h$, initially $B$, and the current required probability $p_h$, initially $p$.
    
 Let us now define the multiplicative slack $\theta_h = (B_h - B_h')(p_h' - p_h)$
 between the point $(B_h, p_h)$ and the point $(B'_h, p'_h)$.
 The idea of the proof is to follow the policy $\polZ'$
 and either reach $T$ according to the reference path $H$ or deviate from $H$ and increase the slack $\theta$.
 Upon deviation, we will choose a new reference path that starts at the current vertex and repeat the process.
 We will show that in exponentially many steps, the token either reaches $T$ according to the reference path
 or the slack increases so much that $p_h$ becomes zero.

 Let $h$ be a history, $H$ be the current reference path so that $h$ is a prefix of $H$,
 and let $h_0$ be the history in which $H$ was chosen.
 First, consider the case when $\last(h) \in \VC$.
 Then $\polZ$ bids $$\frac{B_{hw_+}' - B_{hw_-}'}{2} + \frac{s}{p'_h - p_h}, \text{ where } s = \theta_{h_0}\ER_{\min}^{2|H|-2|h|+1},$$ to
 move the token to the respective $w_-$ that lies on the reference path $H$.
 This way, the token either stays on the reference path and the surplus $\theta$ decreases by $s$,
 or the token deviates from the reference path and the surplus increases by at least $s$.

 Now consider the case when $\last(h) \in \VR$. We will use $\ER_h$ to denote the probability of transitioning from $\last(h)$ to the next vertex on $H$ whenever $\last(h) \in \VR$.
 Let $w$ be the sampled successor of $\last(h)$. If $w$ lies on the reference path $H$, we increase the required probability
 so that the slack decreases by $t = \frac{1-\ER_h}{\ER_h}\theta_{h_0}\ER_{\min}^{2|H|-2|h|+1}$:
 \begin{align*}
 p_{hw} &= p_h - p'_h + p_{hw}' + \frac{t}{B_h - B'_h} \\
 B_{hw} &= B_h.
 \end{align*}
 Otherwise, if $w$ is not on the reference path, we increase the slack by the amount $s$:
 \begin{align*}
 p_{hw} &= p_h - p'_h + p_{hw}' - \frac{s}{B_h - B'_h} \\
 B_{hw} &= B_h.
 \end{align*}
 Note that since the ratio of $t$ and $s$ is reciprocal to the ratio of probabilities of staying on/leaving the reference path,
 it holds that $\sum_{w\in\succ(\last(h))} \ER(\last(h))(w) \cdot p_{hw} = p_h$.
 Hence if $\polZ$ can guarantee reaching $T$ with probability at least $p_{hw}$ from $hw$ regardless of $w$,
 then it guarantees reaching $T$ with probability at least $p_h$ from $h$.

 If $p_{hw}$ should be less than zero, we can stop the game since the zero probability of reaching $T$ can be witnessed in zero steps.
    
 Now suppose at some history $h$ the token deviates to a vertex $w$ that is not on the reference path $H$.
 Then the new slack is at least
 \begin{align*}
 \theta_{hw} &\geq \theta_{h_0}\left(1 - \sum_{i=|h_0|}^{|h|}\frac{1-\ER_{\min}}{\ER_{\min}}\ER_{\min}^{2|H|-2i+1} + \ER_{\min}^{2|H|-2|h|-1}\right)\\
        &\geq \theta_{h_0}\left(1 - \sum_{i=|h_0|}^{|h|}\ER_{\min}^{2|H|-2i} + \ER_{\min}^{2|H|-2|h|-1}\right)\\
        &\geq \theta_{h_0} \left(1 + \ER_{\min}^{2|\V|}\right).
 \end{align*}
 On deviation from the reference path $H$, $\polZ$ repeats the process with the new
 history of last deviation $h'_0 = hw$ and a new reference path $H'$.
 Since the slack increases by at least the factor of $1 + \ER_{\min}^{2|\V|}$ on every deviation,
 after $k$ deviations, the slack is at least $\theta \cdot \left(1 + \ER_{\min}^{2|\V|}\right)^k$
 hence after at most
    $$k \geq \frac{\log(4/\varepsilon^2)}{\log(1 + \ER_{\min}^{2|\V|})}$$
 repetitions, the slack is at least one.
 For $\alpha \leq \frac{1}{4},$ we have $\log(1+\alpha) \geq \alpha/2$ so we can put the following upper bound on the above expression:
    $$k \geq 4\log(2/\varepsilon) \ER_{\min}^{-2|\V|}.$$
    
 We derived that under the policy $\polZ$,
 on every trajectory of length $k|V|$, the slack increases to at least one, which is not possible.
 Hence in less than $k|V|$ steps, the token either reaches $T$ or $p_h$ decreases to zero.
 We conclude that $$\polZ \in \winpolZ(B, p, v, \reach^{M, k \cdot |V|}(T)).$$
\end{proof}

\subsection{Abstraction-Based Value Iteration}

\newtheorem*{L2}{Lemma \ref{lem:abstraction precision}}

\begin{L2}
 For every $v\in\V$ and every $i$, $\abs{\sval}_v^i \supseteq \sval_v^i$, and, moreover, $\dH(\abs{\sval}_v^i, \sval_v^i)\leq \alpha (i+1)$.
 The abstract value $\abs{\sval}_v^i$ can be computed in $2|\V|^2i/\alpha$ time and $2|\V|/\alpha$ space.
\end{L2}
   
   
 \begin{proof}
 The fact that the operator $\bellman_v$ is monotonic directly follows from its definition.
 Since $\grid(X)$ is always a superset of $X$, by induction it follows that $\abs{\sval}_v^i \supseteq \sval_v^i$ for every $i$.
   
 We will now show that the approximation error grows linearly with $i$, namely that it is at most $\alpha (i+1)$.
 First note that due to the triangular inequality, the application of $\grid$ to a set $X$ can increase its Hausdorff distance to another set $Y$ by at most $\alpha$.
 Hence the base case when $i=0$ holds.
   
 Now suppose $i>0$. We only consider the case $v\in\VC$ since the proof for $v\in\VR$ is analogous.
 By the above argument, the operator $\grid$ will not increase the distance by more than $\alpha$, hence it is enough 
 to show that $\setdist(\bellman_v\left(\set{\abs{\sval}_w^{i-1}\mid w\in\succ(v)}\right), \sval_v^i) \leq \alpha i$.
 Namely, for every point $(B, p) \in \bellman_v\left(\set{\abs{\sval}_w^{i-1}\mid w\in\succ(v)}\right)$, we have to prove the existence of a point $(B', p') \in \sval_v^i$ such that $\lVert (B, p) - (B', p')\rVert_{\infty} \leq \alpha i$
 (the other direction is trivial since the two sets are in an inclusion).
 By definition of $\bellman_v$, the point $(B, p)$ belongs to $\bellman_v\left(\set{\abs{\sval}_w^{i-1}\mid w\in\succ(v)}\right)$ only if there exist $(B_-, p) \in \abs{\sval}^{i-1}_{w_-}$ for some $w_- \in \succ(v)$ and $(B_+, p) \in \abs{\sval}^{i-1}_{w_+}$ for every $w_+ \in \succ(v)$
 such that $B = \frac{B_+ + B_-}{2}$.
 Further, by the induction hypothesis, it holds $d_{\infty}\left((B_-, p), \sval_w^{i-1}\right) \leq \alpha i$ and $d_{\infty}\left((B_+, p), \sval_w^{i-1}\right) \leq \alpha i$ for every $w\in\succ(v)$.
 These inequalities are witnessed by points $(B_{w_-}, p_{w_-}) \in \sval_{w_-}^{i-1}$ and $(B_{w_+}, p_{w_+}) \in \sval_{w_+}^{i-1}$ for every $w_+$, respectively.
 Let $p_{\max}$ be maximal value among $p_{w_-}$ and all $p_{w_+}$, and $B_{\min}$ be the minimal value among all $B_{w_+}$.
 Since the sets $\sval_{w_-}^{i-1}$ and $\sval_{w_+}^{i-1}$ are $\ord$-upward closed,
 it holds that $(B_{w_-}, p_{\max}) \in \sval_{w_-}^{i-1}$ and $(B_{\min}, p_{\max}) \in \sval_{w_+}^{i-1}$ for every $w_+$.
 Since all differences $\lvert p - p_{\max}\rvert$, $\lvert B_- - B_{w_-}\rvert$, and $\lvert B_+ - B_{\min}\rvert$ are less or equal to $\alpha i$, it follows that the point $\left(\frac{B_{w_-} + B_{\min}}{2}, p_{\max}\right)$
 belongs to $\sval_v^i$ and is at most $\alpha i$ away from $(B, p)$.

 To represent the abstract value, it is enough to store $2/\alpha$ of its boundary grid points, hence all abstract values require $2|\V|/\alpha$ space.
 One iteration of the algorithm computes $|\V|$ abstract values, and for each of them sequentially combines up to $|\V|$ abstract values,
 which requires $2|\V|^2/\alpha$ time. The time per iteration is thus $2|\V|^2/\alpha$.
\end{proof}

\subsection{Bidding Games on Restricted MDP-s}

\newtheorem*{L4}{Lemma \ref{lemma: valueiterforDAG}}

\begin{L4}
	For acyclic MDPs, for every vertex \(v\), \(\rval^{|\V|}_v = \optcrv_v\) and $\sval^{|\V|}_v = \coptcrv_v$.
\end{L4}

\begin{proof}
	Let $d_v$ be the length of the longest simple path leading from a vertex $v$.
	We prove by induction on $d_v$ that $\val_v^{d_v} = \val_v^{d_v+k}$ for every $k \in \N$.
	If $d_v = 0$, then $v$ does not have any successors (except itself), hance $\val_v^i = \val_v^0$ in for every $i \in \N$.
	Now suppose $d_v = i+1$ and that the claim holds for every vertex $u$ such that $d_i \leq i$.
	Since $d_w < d_v$ for every successor $w$ of $v$, we have $\val_w^{d_w} = \val_w^{i} = \val_w^{i+k}$.
	We derive $\val_v^{i+1} = \val_v^{i+1+k}$ for every $k \in \N$.

	Since the sequences $\val_v^i$ stabilize after at most $|V|$ steps, the respective limits are equal to $\val_v^{|V|}$.
\end{proof}

We now further analyse Thm.~\ref{thm:bounded value iteration} to preciselly determine which of the points $(B,p)$ in sets $\rval_v^i$ and $\sval_v^i$ are actually achievable, i.e., when it is true that
$\winpolZ(B, p, v, \reach^{m, i}(T))$, respectively, $\winpolO(B, p, v, \reach^{m, i}(T))$ is nonempty.

\begin{lemma}
   \label{lem:dichotomy}
   Let $\tup{M,v,T,B,p}$ be a problem instance. The following hold:
   \begin{itemize}
      \item $\winpolZ(B, p, v, \reach^{M, i}(T)) \neq \emptyset$ if and only if there exists $B' < 1$ such that $B' \leq B$ and $(B', p) \in \rval_v^i$
      \item $\winpolO(B, p, v, \reach^{M, i}(T)) \neq \emptyset$ if and only if there exists $B' > B$ such that $(B', p) \in \sval_v^i$, or $B=1$ and $(B, p) \in \sval_v^i$.
   \end{itemize}
\end{lemma}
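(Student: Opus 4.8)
The plan is an induction on $i$ whose step follows the three cases of the value iteration ($v\in T$, $v\in\VR\setminus T$, $v\in\VC\setminus T$); the base case $i=0$ is immediate from the explicit shapes of $\rval_v^0$ and $\sval_v^0$. The set-up already exposes the two phenomena the whole argument must bookkeep: (i) the trivial inclusions $\{1\}\times[0,1]\subseteq\rval_v^0$ for $v\notin T$ and $\{0\}\times[0,1]\subseteq\sval_v^0$ for $v\in T$ are what force the provisos ``$B'<1$'' and ``$B=1$'', and (ii) the rule that ties go to the reachability player is what makes ``$B'\le B$'' non-strict in the first bullet while ``$B'>B$'' must be strict in the second. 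One direction of each bullet is essentially Thm.~\ref{thm:bounded value iteration}: if the required witness satisfies $B'<B$ (resp.\ $B'>B$) strictly, apply Thm.~\ref{thm:bounded value iteration}(A) (resp.\ (B)); all the new content is at the boundary $B'=B$ and, for the safety bullet, in the case $B=1$. Throughout I use Lem.~\ref{lem:fin-rep}: each $\rval_w^i$ is closed and $\ord$-downward closed, so its section $\{B''\mid (B'',p)\in\rval_w^i\}$ at a fixed $p$ is a closed up-interval $[\beta,1]$, nonempty since $(1,p)\in\rval_w^0\subseteq\rval_w^i$; dually for $\sval_w^i$.

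For the ``if'' direction of the first bullet, $\ord$-downward closedness lets me fix a witness $\overline B<1$, $\overline B\le B$, $(\overline B,p)\in\rval_v^i$, and build a winning policy $\polZ$ for the reachability player (holding actual budget $B\ge\overline B$ and carrying the nonnegative surplus $B-\overline B$), exactly as in the proof of Thm.~\ref{thm:bounded value iteration} but recursing through the induction hypothesis at level $i-1$ instead of through Thm.~\ref{thm:bounded value iteration} itself. At a control vertex I set $B_-=\min_w\beta_w$ and $B_+=\max_w\beta_w$ over the $(i-1)$-sections at $p$, so $(B_-+B_+)/2$ is the left end of the $i$-section at $p$, which is $\le\overline B\le B$; $\polZ$ bids $(B_+-B_-)/2$, and upon winning (ties included) it moves to the arg-min successor, where $(B_-,p)\in\rval_w^{i-1}$ and, since $\overline B<1$ forces $B_-<1$, the recursion again starts from a witness below $1$. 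Upon losing, the budget exceeds $B_+$; if $B_+=1$ this is impossible, because then the safety player cannot afford to win the bidding, and if $B_+<1$ the induction hypothesis applies at the new successor. Random vertices are handled by splitting $p$ as in $\bellman_v$ of \eqref{eq:vi-vr} and recursing branchwise.

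For the ``only if'' direction, given a winning policy $\polZ$ I extract membership in $\bellman_v(\cdots)$ from its first move. At $v\in\VR$ the post-stochastic branches are independent, so $\inf_{\polO}\pr{v,B}{\polZ,\polO}(\reach^{M,i}(T))=\sum_w\ER(v)(w)\,q_w$, where $q_w$ is the infimal reachability value of $\polZ$'s continuation from $w$; that continuation lies in $\winpolZ(B,q_w,w,\reach^{M,i-1}(T))$, so the induction hypothesis gives $B_w'<1$, $B_w'\le B$, $(B_w',q_w)\in\rval_w^{i-1}$, and $B_+:=\max_w B_w'$ (still $<1$ and $\le B$) yields $(B_+,\sum_w\ER(v)(w)q_w)\in\rval_v^i$, hence $(B_+,p)\in\rval_v^i$ by downward closedness. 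At $v\in\VC$, write $\polZ$'s first move at budget $B$ as $(b_R,u)$ and split on whether the safety player can outbid it. If $b_R\ge 1-B$ she never can, so $\polZ$'s continuation from $u$ witnesses membership at $u$ at budget $B-b_R$, and combining with the free bound $(1,p)\in\rval_w^{i-1}$ at all other successors produces a point $((B_-+1)/2,p)\in\rval_v^i$ with $B_-\le 2B-1$, hence with budget coordinate $\le B$ and $<1$. If $b_R<1-B$, then for every successor $w$ and every $b_S\in(b_R,1-B]$ the continuation of $\polZ$ witnesses $\winpolZ(B+b_S,p,w,\reach^{M,i-1}(T))\ne\emptyset$, so the induction hypothesis yields a witness budget $\le B+b_S$; closedness of $\rval_w^{i-1}$ lets me pass to the limit $b_S\downarrow b_R$ and take a maximum over successors, giving a common $B_+\le B+b_R$, $<1$, while $u$ supplies $B_-\le B-b_R$, and so $((B_-+B_+)/2,p)\in\rval_v^i$ with budget coordinate $\le B$ and $<1$.

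The second bullet follows by the order-dual argument: swap $\rval\leftrightarrow\sval$, ``$<1$'' $\leftrightarrow$ ``$>0$'', the free bounds $(1,p)\leftrightarrow(0,p)$, and $\min\leftrightarrow\max$ over successors. The only genuine asymmetry is that, since ties favour the reachability player, the safety player needs a \emph{strictly} larger witness budget, which is why the dual of ``$B'\le B$'' reads ``$B'>B$''; and the case $B=1$ is treated separately, where the reachability player wins every bidding and thus controls every move, so $\winpolO(1,p,v,\reach^{M,i}(T))\ne\emptyset$ iff even a fully controlling reachability player cannot reach $T$ with probability exceeding $p$ within $i$ steps --- which is exactly what $(1,p)\in\sval_v^i$ records through the inductive definition of $\sval_v^i$ at budget $1$. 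I expect the main obstacle to be the control-vertex subcase of the ``only if'' direction with an outbiddable safety player: distilling ``for every $b_S$ marginally above the opponent's bid there is a witness budget $\le B+b_S$'' into a \emph{single} witness budget $\le B+b_R$ valid for all successors at once --- this is where closedness of the value sets and the maximum-over-successors step are indispensable --- together with verifying that the strict/non-strict distinction between the two bullets propagates exactly through the recursion.
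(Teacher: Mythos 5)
Your proposal is correct, and for the sufficiency (``if'') directions it rests on exactly the same key observation as the paper's proof: when $B_+=1$ the safety player cannot afford to outbid $(B_+-B_-)/2$, so (since ties favour the reachability player) the ``lost bid'' branch of the construction from Thm.~\ref{thm:bounded value iteration} never materializes; likewise both proofs dispose of the endpoint $B=1$ by noting that full control of the bids is already achievable with budget bounded away from $1$ on a finite horizon. Where you genuinely diverge is in the necessity (``only if'') directions. The paper gets these almost for free: by Cor.~\ref{thm:bounded:determinacy} and the shared-boundary property of Lem.~\ref{lem:fin-rep}, any $(B,p)$ outside $\rval_v^i$ lies in the \emph{interior} of $\sval_v^i$, so Thm.~\ref{thm:bounded value iteration}(B) gives the safety player a policy guaranteeing probability strictly below $p$, and only points on the common boundary need the refined analysis (which is then finished by the staircase geometry and a short contraposition). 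You instead re-derive necessity constructively, by decomposing the first move of a given winning policy, applying the induction hypothesis to its continuations, and recovering a single witness via the closed-interval structure of the sections at $p$ together with the limit $b_S\downarrow b_R$ and a max over successors. Your route is self-contained (it does not invoke determinacy) and actually produces the witness point, but it is substantially heavier, and its cost is concentrated precisely where you predict: the outbiddable control-vertex subcase, and the dual argument for the safety bullet, which you only sketch and where the tie-breaking asymmetry forces additional case splits (e.g.\ the safety player's first bid being $0$, in which case she can never win a bid and the witness must come entirely from the forall-successor side). I find no step that fails, but a full write-up of the second bullet along your lines would be noticeably longer than the paper's.
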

\begin{proof}
   According to Thm.~\ref{thm:bounded value iteration}, all the interior points of $\rval_v^i$ are winnable for the reachability player.
   On the other hand, the outerior points $(B,p)$ of $\rval_v^i$ are not winnable for the reachability player, since they are in the interior of $\sval_v^i$.
   Thus, for such points, the safety player is able to guarantee a reachability probability strictly less than $p$ with budget $B$.
   An analogous arguments hold for the safety player.
   
   Hence for the rest of the proof we suppose that $(B, p)$ lies on the threshold, i.e.,
   the intersection of $\rval_v^i$ and $\sval_v^i$.

   Let us first discuss whether $\winpolZ(B, p, v, \reach^{M, i}(T))$ is nonempty. 
   We prove the implication from left to right first.
   If $B<1$, then we can just set $B' = B$ and since $(B, p)$ is on threshold, the claim follows.
   If $B=1$ then note, that on a finite horizon of length $i$ already the budget $1-2^{-i}$ is enough for the reachability player to win all the bids,
   hence any higher budget does not bring any benefit in terms of increase of the reachability probability.
   Thus whenever $\winpolZ(B, p, v, \reach^{M, i}(T))$ is nonempty and $B=1$, it must be indeed the case that already $\winpolZ(B', p, v, \reach^{M, i}(T))$ is nonempty
   for $B' = 1-2^{-i}$, and we have already proven for such $B'$ that $(B', p) \in \rval_v^i$.

   For the opposite implication suppose there exists $B' < 1$ such that $ B' \leq B$ and $(B', p) \in \rval_v^i$. It is enough to show that the set
   $\winpolZ(B', p, v, \reach^{M, i}(T))$ is nonempty. Note that for $i=0$ the claim is true since any point $(B', p) \in \rval_v^0$ is winnable for the reachability
   player if $v \in T$, and any point $(B', 0)$ is winnable if $v \not \in T$.
   
   For the case $i>0$, we revisit the inductive construction
   from the proof of Thm.~\ref{thm:bounded value iteration}.
   The case $v \in \VR$ is the same as in the proof of Thm.~\ref{thm:bounded value iteration}. So suppose $v \in VC$,
   and let $B_+$ and $B_-$ be such that $(B_+ + B_-)/2 = B'$, $(B_+, p) \in \rval_{w_+}^{i-1}$ for every $w_+ \in \succ(v)$
   and $(B_-, p) \in \rval_{w_-}^{i-1}$ for some $w_- \in \succ(v)$.
   We can further without loss of generality assume $B_- \leq B$,
   hence by inductive hypothesis, there exists a policy in $\winpolZ(B_-, p, w_-, \reach^{m, i-1}(T))$.
   For every $w_+$, there exists a policy in $\winpolZ(B_+, p, w_+, \reach^{m, i-1}(T))$ whenever $B_+ < 1$.
   If the reachability player bids $|B_+ - B_-|/2$, then he either wins the bidding and decreases his budget to $B_-$
   or loses the bidding and increases his budget to $B_+$. However note, that since the reachability player wins
   the ties, the budget can only increase to $B_+$ if $B_+ < 1$. Therefore, regardless of the result, the reachability player
   can always continue according to some winning policy that exists by the inductive hypothesis.

   Now let us discuss when $\winpolO(B, p, v, \reach^{M, i}(T))$ is nonempty.
   We first show the direction from right to left. If there exists $B' > B$ such that $(B', p) \in \sval_v^i$, then the claim follows from 
   Thm.~\ref{thm:bounded value iteration}. So suppose $B=1$ and $(1, p) \in \sval_v^i$. Note that $B=1$ implies the initial safety player budget is zero.
   However since even the initial safety player budget $2^{-i}$ would not prevent the reachability player from winning all the bids in the following $i$ steps,
   the policy in $\winpolO(1-2^{-i}, p, v, \reach^{M, i}(T))$ proves there also exists a policy in $\winpolO(1, p, v, \reach^{M, i}(T))$.

   We show the other direction by a contraposition. Suppose $(B, p)$ lies on the threshold, $B < 1,$ and there is no $B' > B$ such that $(B', p) \in \sval_v^i$.
   Then necessarily, due to the staircase shape of $\sval_v^i$ and $\rval_v^i$, there is a point $(B, p')$  with $p' > p$ such that $(B, p') \in \rval_v^i$,
   hence $\winpolZ(B, p', v, \reach^{M, i}(T))$ is nonempty. This implies there is indeed no policy that would guarantee reachability probability at most $p$ with budget $B$.
\end{proof}

\newtheorem*{T4}{Theorem~\ref{thm:acyclic MDP}}
\begin{T4}
	For acyclic MDPs, \AlgExact is a sound and complete algorithm for Prob.~\ref{prob:exact}
	and terminates in at most $|\V|$ iterations, yielding the space and time complexity $\mathcal{O}\left(|\V|^{|\V|}\right)$.
\end{T4}
\begin{proof}
   The proof follows right from Lem.~\ref{lem:dichotomy} since the full knowledge of the value iteration sets is enough to decide the existence of the winning policy.
\end{proof}

\newtheorem*{T5}{Theorem~\ref{thm: treeshapedMDP}}
\begin{T5}
	For tree-shaped MDPs, Prob.~\ref{prob:exact} is in NP \(\cap\) co-NP.
\end{T5}
\begin{proof}
   Note that according to Lem.~\ref{lem:dichotomy}, the existence of a policy in $\winpolZ(B, p, v, \reach^{M, |\V|}(T))$
   or $\winpolO(B, p, v, \reach^{M, |\V|}(T))$ can be always witnessed by a presence of a point in $\rval_v^{|\V|}$ and $\sval_v^{|\V|}$, respectively.
   Moreover, the witness can be chosen so that it is a corner point of $\rval_v^{|\V|}$. Therefore, its binary representation requires
   at most $2|V|$ bits. Hence a nondeterministic algorithm can guess the witness point and solve the linear program (with nondeterministic choices) in Section \ref{sec:treeshapedMDP}.
   This shows that Prob.~\ref{prob:exact} is in NP.

   The idea of the proof that Prob.~\ref{prob:exact} is in co-NP is similar. In particular,
   to show that $\winpolZ(B, p, v, \reach^{M, |\V|}(T))$ is empty, it is enough to show that either $B = 1$ and there exists $(1, p') \in \sval_v^{|\V|}$ such that $p' < p$,
   or $B\neq 1$ and there exists $(B',p') \in \sval_v^{|\V|}$ such that $B' > B$ and $p' < p$.
   To show that $\winpolO(B, p, v, \reach^{M, |\V|}(T))$ is empty, it is enough to show that either $B \neq 1$ and there exists $(B, p') \in \rval_v^{|\V|}$ such that $p' > p$,
   or $B=1$ and there exists $(B', p') \in \rval_v^{|V|}$ such that $B' < B$ and $p' > p$. 
\end{proof}

%
%
%



\end{document}